\documentclass[reprint,amsmath,amssymb,aps,prl]{revtex4-2}

\usepackage{graphicx}% Include figure files
\usepackage{dcolumn}% Align table columns on decimal point
\usepackage{bm}% bold math
\usepackage{xcolor}
\usepackage{physics}
\usepackage{amsthm}
\usepackage{comment}
\usepackage{float}
\usepackage{dsfont}
\usepackage[colorlinks=true,bookmarks=false,citecolor=blue,urlcolor=blue,linkcolor=blue]{hyperref}
\usepackage{bibunits}
\usepackage{bbold}
\usepackage{enumitem}
\usepackage{amssymb}

\theoremstyle{definition}
\newtheorem{theorem}{Theorem}
\newtheorem{lemma}{Lemma}

\newtheorem{definition}{Definition}

\newtheorem{observation}{Observation}

\makeatletter
\def\maketitle{
\@author@finish
\title@column\titleblock@produce
\suppressfloats[t]}
\makeatother

\begin{document}

\title{One pure steered state implies Einstein-Podolsky-Rosen steering}

\author{Qiu-Cheng Song$^{1}$}
\email{songqiucheng@kaist.ac.kr}
\author{Joonwoo Bae$^{2}$}
\email{joonwoo.bae@kaist.ac.kr}
\affiliation{$^{1}$Information $\&$ Electronics Research Institute, Korea Advanced Institute of Science and Technology (KAIST), 291 Daehak-ro, Yuseong-gu, Daejeon 34141, Republic of Korea}
\affiliation{$^{2}$School of Electrical Engineering, Korea Advanced Institute of Science and Technology (KAIST), 291 Daehak-ro, Yuseong-gu, Daejeon 34141, Republic of Korea}

\date{\today}

\begin{abstract}
In this work, we show that a two-qubit entangled state admitting at least one \emph{pure} steered state is Einstein-Podolsky-Rosen (EPR) steerable from Alice to Bob. Pure steered states signifies that the quantum steering ellipsoid of Bob is tangent to his Bloch sphere at least at a single point. Furthermore, we prove that for a two-qubit entangled state, Bob’s quantum steering ellipsoid is tangent to his Bloch sphere at exactly $N$ points, for $N\in \{ 0,1,2,\infty\}$, if and only if Alice’s quantum steering ellipsoid is tangent to her Bloch sphere at exactly $N$ points. For any two-qubit entangled state, therefore, if one party can steer the other to at least one pure state, the state is two-way EPR steerable. We also present several illuminating examples of two-qubit entangled states such that the EPR steering can be verified in terms of pure steered states. Our result addresses the Gisin theorem in a EPR steering scenario: at least a single pure steered state implies two-way steering.
\end{abstract}
\maketitle

\emph{Introduction}. Einstein-Podolsky-Rosen (EPR) steering can be formulated in a two-party scenario, for instance, involving Alice and Bob, where Alice, by performing local measurements, can remotely prepare an assemblage of conditional states for Bob in a way that cannot be reproduced by any local hidden-state (LHS) model~\cite{Einstein35,Schrodinger1935,Schrodinger1936,Wiseman2007,Jones2007}. In constrat to entanglement~\cite{Einstein35,Schrodinger1935,Schrodinger1936,Horodeckis09} and Bell nonlocality~\cite{Bell64,Brunner14}, EPR steering contains intrinsic asymmetry since it corresponds to a one-sided device-independent scenario; EPR steering relies on which party realizes a measurement in a device-independent manner. Consequently, one party's steerability does not necessarily imply steerability by other parties~\cite{Wiseman2007,Bowles14,Sekatski23}. For practical applications, EPR steering is essential for realizing one-sided device-independent quantum protocols~\cite{Branciard2012,Walk16,RMP20,XiangYu22}.

Detection of EPR steering, being of both fundamental and practical interest, is a challenging problem in quantum information theory~\cite{RMP20}. Despite extensive efforts, including steering inequalities \cite{Reid89, CavalcantiE09, Saunders10, CavalcantiE15, Schneeloch13, Kogias15}, assemblage-based methods~\cite{Pusey2013, Skrzypczyk14, CavalcantiD17, Bowles16, CavalcantiD16, Rutkowski25}, density-matrix formalisms~\cite{Sania15, Nguyen16, Nguyen19, zhang25}, machine-learning techniques~\cite{Ren19, Zhang21, Wang24, Tsai25}, and quantum steering ellipsoids (QSEs)~\cite{Sania15, McCloskey17, KuHuanYu18, Song2023}, the problem is still far from being fully understood, even for two-qubit states. The main difficulty arises because arbitrary projective measurements are allowed. Bell-diagonal states~\cite{Horodecki96} are so far the only case for which a necessary and sufficient condition has been derived analytically \cite{Sania15, Nguyen16}.

Otherwise, there exist entangled states for which even known numerical methods, e.g., Ref.~\cite{Nguyen19}, fail to determine the { steering bounds}~\cite{Songqc2023}. Interestingly, these states admit a pure steered state, which also makes the construction of an LHS model {exceptionally difficult}. All these naturally raise the converse question of whether the existence of a pure steered state implies EPR steering, or not. QSEs are a useful tool to tackle the question~\cite{verstraete02,Jevtic2014, Sania15}. Notice that a physically realizable QSE may be tangent to the Bloch sphere at zero, one, two, or infinitely many points, where each point corresponds to a pure steered state~\cite{Braun14}.

For instance, the necessary and sufficient criteria for Bell-diagonal states~\cite{Horodecki96} correspond to cases with zero tangency points~\cite{Sania15,Nguyen16}; see also their subclass, the Werner states \cite{Werner89,Wiseman2007,ZhangYujie24,Renner24,JingLing24}. Two-qubit pure entangled states, that are steerable, have infinitely many tangency points \cite{Wiseman2007}. There are cases of two tangency points, in which Alice may construct measurements  to demonstrate EPR steering~\cite{chenjl2013,Nguyen2017}. For a single tangency point, EPR steering is possible when the probability of the pure steered state is large enough~\cite{Song2023}. Although the usefulness of a pure steered state has been evidenced, it remains valid only in limited cases.

In this work, we establish a general sufficient condition for two-qubit EPR steering in terms of a pure steered state. We show that a two-qubit state is two-way EPR steerable whenever the assemblage generated by all projective measurements contains at least one pure steered state. Equivalently, the condition for two-way steering can be rephrased as a QSE being tangent to the Bloch sphere at least at a single point. We prove this result by exploiting the one-to-one correspondence between the assemblage and Bob’s QSE, together with his Bloch vector. We show that no LHS model can reproduce the assemblage in this scenario.

\emph{Preliminaries}.  
We consider a two-party steering scenario in which Alice and Bob share an unknown state $\rho$.
If Alice chooses a possible measurement setting $s\in\mathbf{S}$ and makes a corresponding measurement $\{M_{r|s}\}_r$ on her local system, where $r\in\mathbf{R}$ is one of the possible measurement results, then Bob's state is transformed into $\rho_{r|s}={\sigma_{r|s}}/{p_{r|s}}$
with probability $p_{r|s}=\mathrm{Tr} [(M_{r|s} \otimes I) \rho]$,
where $\sigma_{r|s}=\mathrm{Tr}_\mathrm{A} [(M_{r|s} \otimes I) \rho]$. The set $\{\{\sigma_{r|s}\}_{r\in \mathbf{R}}\}_{s\in\mathbf{S}}$
is called an \textit{assemblage}~\cite{Pusey2013}. The state \(\rho\) is said to be EPR steerable from Alice to Bob if there does not exist an local hidden–state (LHS) model that can reproduce the assemblage~\cite{Wiseman2007,Jones2007}, 
%i.e., if there is no decomposition of the form 
Equivalently, no decomposition of the following form exists:
\begin{align}
    \sigma_{r|s}=\int d \lambda \mu_\lambda p(r|s,\lambda)\rho_\lambda 
\end{align}
for all assemblage elements, where $\lambda$ a classical hidden variable with probability distribution $\mu_{\lambda}$, while $p(r|s,\lambda)$ and $\rho_{\lambda}$ denote the response function and the $\lambda$-indexed LHS, respectively.

A quantum steering ellipsoid (QSE) provides a geometric representation of the set of all quantum states to which Bob’s system can be steered by Alice’s local measurements on a shared two-qubit state $\rho$~\cite{Jevtic2014}.
Any $\rho$ can be expressed in terms of 
the standard Pauli basis
$\boldsymbol{\sigma}=(\sigma_x,\sigma_y,\sigma_z)$ and the identity operator $I$ as
$\rho= 1/4(I\otimes I+ \boldsymbol{a}\cdot\boldsymbol{\sigma}\otimes I+ I\otimes \boldsymbol{b}\cdot\boldsymbol{\sigma} + \sum_{i,j} T_{ij}\,\sigma_i\otimes\sigma_j)$,
where $i,j\in \{x,y,z\}$, ${\boldsymbol{a}}=(a_x,a_y,a_z)^\top$ and ${\boldsymbol{b}}=(b_x,b_y,b_z)^\top$ are the Bloch vectors of Alice’s and Bob’s reduced states, respectively. These vectors and the spin-correlation matrix are given by
${a}_{i}=\operatorname{Tr}\left[(\sigma_{i} \otimes I)\rho\right]$, ${b}_{j}=\operatorname{Tr}\left[(I \otimes \sigma_{j})\rho\right]$, and ${T}_{ij}=\operatorname{Tr}\left[(\sigma_{i} \otimes \sigma_{j})\rho\right]$.
If Alice performs projective measurements, the corresponding effects are denoted by $E_{\pm|s}=(I\pm\hat{\boldsymbol{n}}_{s} \cdot \boldsymbol{\sigma})/2$
with an unit vector $\hat{\boldsymbol{n}}_{s}$ specifying the measurement direction.
The resulting ensemble prepared for Bob consists of two conditional states 
$\rho_{\pm|s}=\left[{I}+ (\boldsymbol{b}\pm {T}^{\top} \hat{\boldsymbol{n}}_{s}) \cdot \boldsymbol{\sigma}/{(1\pm \hat{\boldsymbol{n}}_{s}\cdot \boldsymbol{a})}\right]/2$, 
with probabilities 
$p_{\pm|s}=(1\pm\hat{\boldsymbol{n}}_{s} \cdot \boldsymbol{a})/2$.
The Bloch vectors of 
$\{\rho_{\pm|s}\}_{s\in\mathbf{S}}$
form an ellipsoidal surface in his Bloch sphere $\mathcal{S}_B$, known as the quantum steering ellipsoid (QSE)~\cite{verstraete02}:
\begin{equation}\label{qellipsoid}
\partial\mathcal{E}_B=\left\{\frac{\boldsymbol{b}\pm T^{\top} \hat{\boldsymbol{n}}_{s}}{1\pm \hat{\boldsymbol{n}}_{s} \cdot \boldsymbol{a}}\right\}_{\hat{\boldsymbol{n}}_{s}}.
\end{equation}
Its center is $\boldsymbol{c}_B=\gamma^2(\boldsymbol{b}-T^{\top}\boldsymbol{a})$, and its semiaxis lengths are the square roots of the eigenvalues of the ellipsoid matrix $Q_B=\gamma^2{\left(T^{\top}-\boldsymbol{b a}^{\top}\right)}\left(I+\gamma^2{\boldsymbol{a a}^{\top}}\right)\left(T-\boldsymbol{a} \boldsymbol{b}^{\top}\right)$, where $\gamma^2:=1/(1-|\boldsymbol{a}|^{2})$~\cite{Jevtic2014}.
The eigenvectors of $Q_B$ determine the orientation of the ellipsoid, whose surface can be written as $\boldsymbol{x}^{\top} Q_B^{-1}\boldsymbol{x}=1$,
where $\boldsymbol{x}\in\mathbb{R}^3$ is the displacement from the center $\mathbf{c}_B$.
The QSE $\mathcal{E}_A$ for Alice is obtained by exchanging the roles of Alice and Bob, 
i.e., $\boldsymbol{a}\leftrightarrow\boldsymbol{b}$ and $T\to T^{\top}$~\cite{Jevtic2014}. Further properties of QSEs can be found in Refs.~\cite{Jevtic2014,Milne14,Milne15,FanHeng15,Shuming16,Zhang19,Divyamani23,XuKai24}.

\begin{figure}[t]
    \centering
    \includegraphics[width=0.21\textwidth]{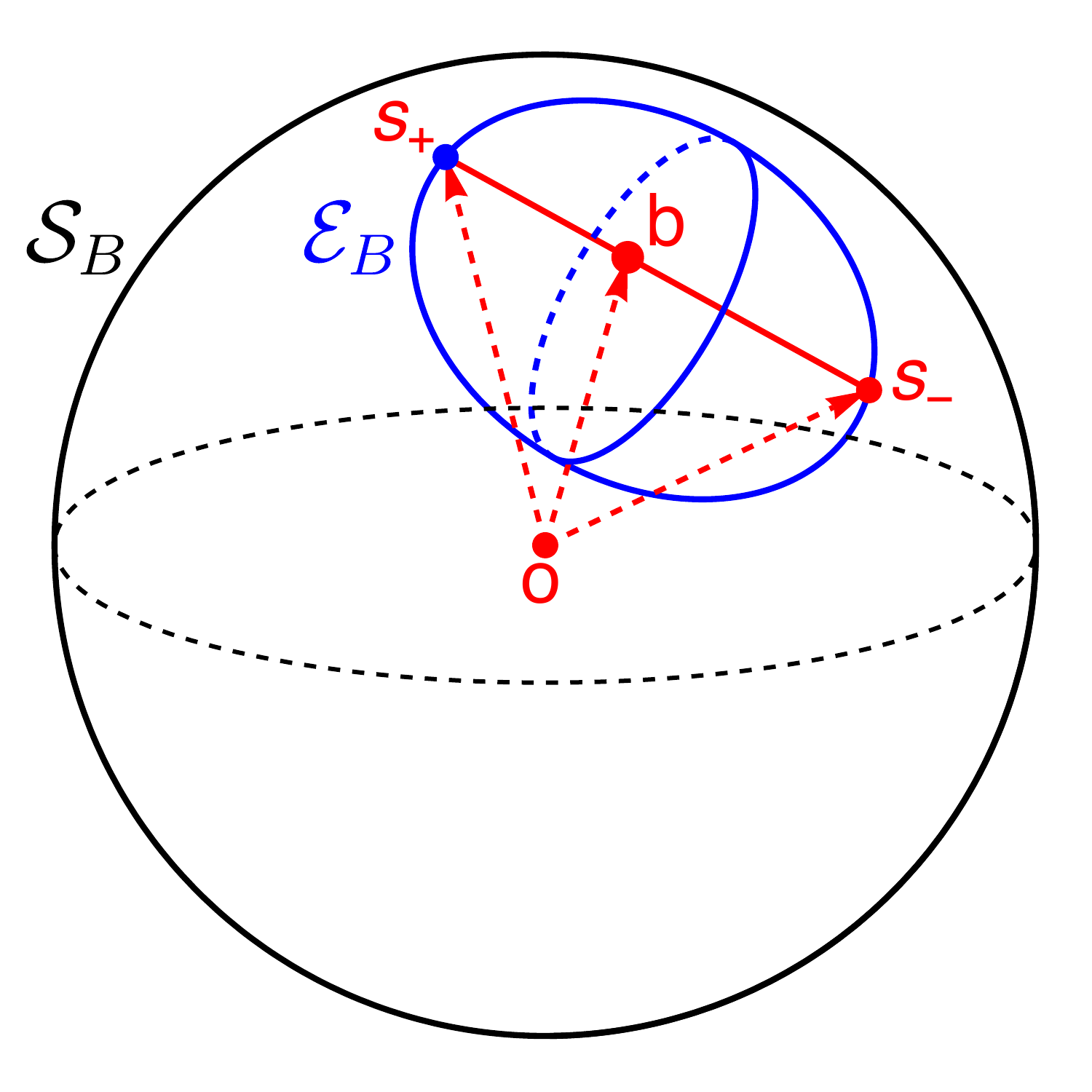}
    \includegraphics[width=0.23\textwidth]{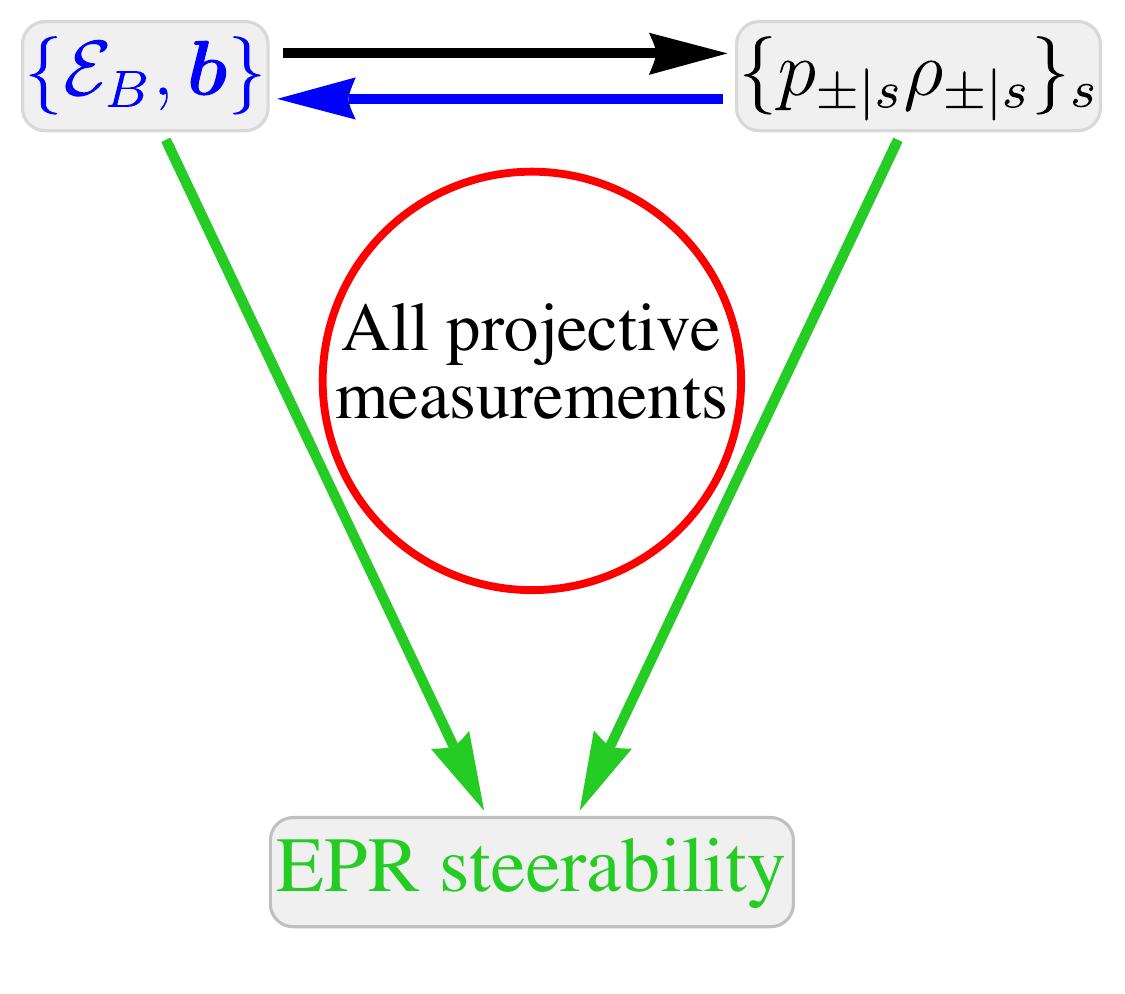}
    \caption{One-to-one correspondence between the pair $\{\mathcal{E}_B,\boldsymbol{b}\}$ and the assemblage $\{p_{\pm|s}\rho_{\pm|s}\}_{s}$ obtained from all possible projective measurements. 
    The point $\mathsf{b}$ inside QSE $\mathcal{E}_B$ and the points $\mathsf{s}_{\pm}$ on its surface represent the Bloch vectors of Bob's reduced state $\rho_B$ and the steered states $\rho_{\pm | s}$, respectively. 
    $\{\mathcal{E}_B, \boldsymbol{b}\}$ and $\{p_{\pm|s}\rho_{\pm|s}\}_s$ determine each other and hence determine EPR steerability from Alice to Bob.}
    \label{fig:steeredstate}
\end{figure} 

\begin{figure*}[ht]
    \centering
    \includegraphics[width=0.4\textwidth]{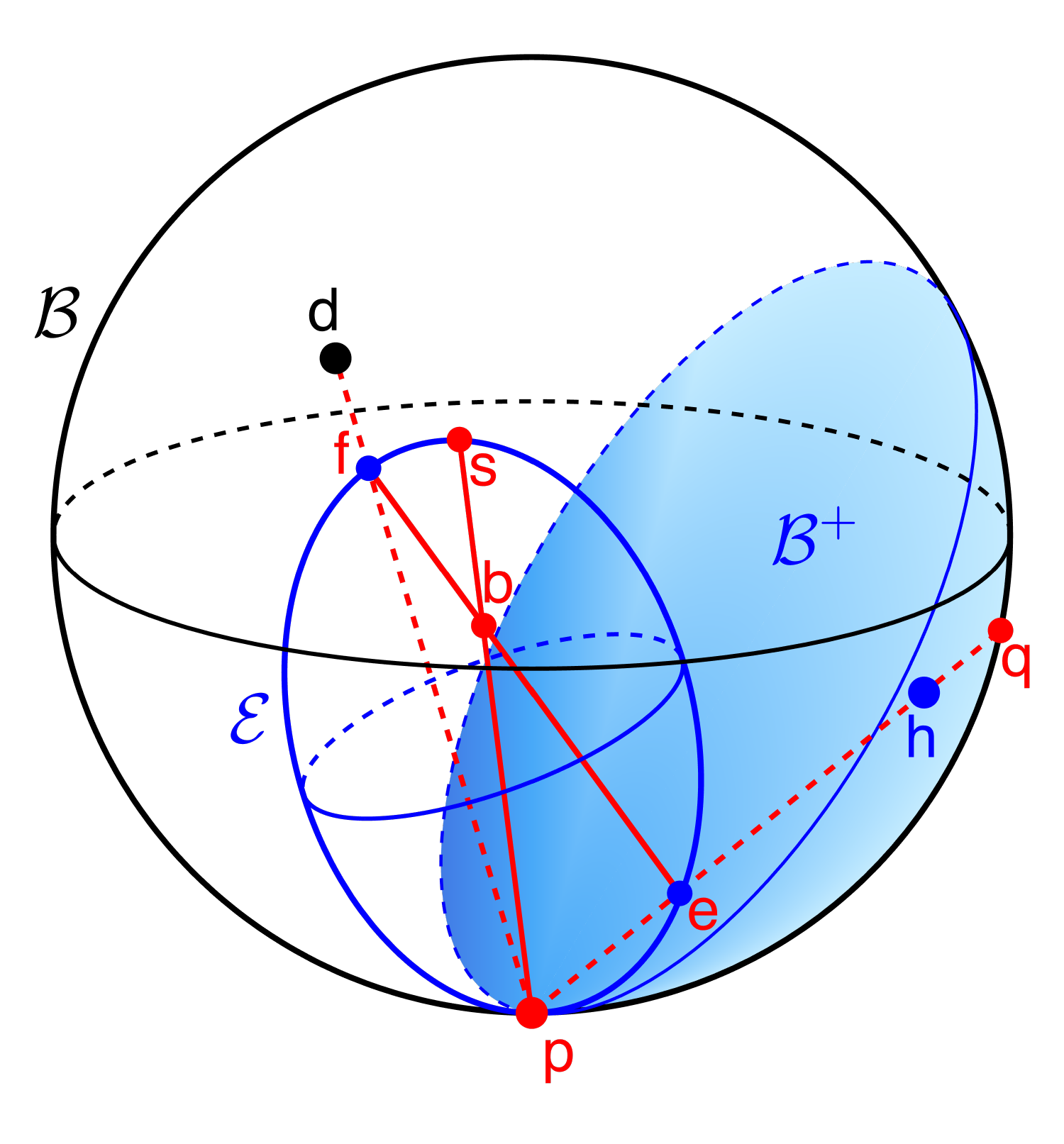}
    \hspace{0.05\textwidth}
    \includegraphics[width=0.4\textwidth]{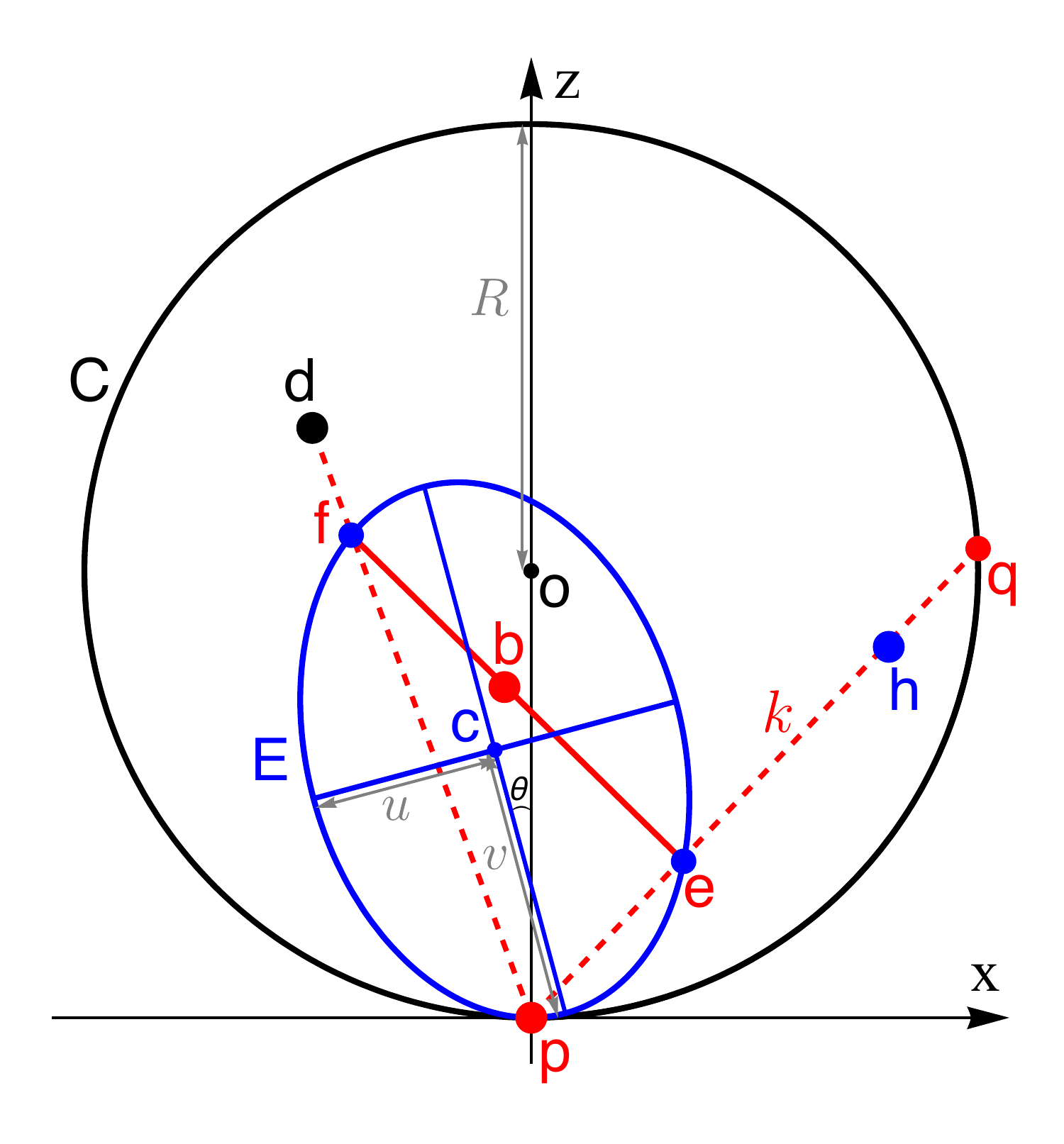}
    \caption{Illustration of Alice’s response function in a hypothetical LHS model.  
    If an LHS model exists, 
    it must include the pure state $\rho_\mathsf{p}$ at the tangency point $\mathsf{p}$ with sufficient probability, together with other LHSs distributed over the Bloch ball $\mathcal{B}$ with probability density $\mu$. Any steered state $\sigma_\mathsf{e}$~\eqref{lhse} at point $\mathsf{e}$ on the QSE $\mathcal{E}$ would then be reproduced by $\rho_\mathsf{p}$ with some probability, together with LHSs within a region $\mathcal{B}^+$ specified by the response function $p(+|\hat{\boldsymbol{n}}_\mathsf{e},{\boldsymbol{\lambda}})$. We show that no
    LHS model can reproduce the assemblage in this scenario.}
    \label{fig:Threeplhs}
\end{figure*}

Then, EPR steerability, an assemblage, and a QSE are related as follows. For any two-qubit state $\rho_{AB}$, if Alice performs all possible projective measurements,
EPR steerability is fully characterized by the assemblage $\Sigma_B:=\{p_{\pm|s}\rho_{\pm|s}\}_{s\in\mathbf{S}}$. 
Moreover, $\Sigma_B$ determines both Bob's QSE $\mathcal{E}_B$ 
and the Bloch vector $\boldsymbol{b}$ of his reduced state $\rho_B$ of $\rho_{AB}$; we denote this pair by $\{\mathcal{E}_B, \boldsymbol{b}\}$. 
Specifically, the set of Bloch vectors of all conditional states in $\Sigma_B$ forms $\mathcal{E}_B$, while $\boldsymbol{b}$ is fixed by any ensemble in $\Sigma_B$. Each ensemble in $\Sigma_B$ includes two conditional states with respective probabilities, satisfying
$\rho_B=p_{+|s}\rho_{+|s}+p_{-|s}\rho_{-|s}$.
The corresponding points satisfy $\mathsf{b}=p_{+|s}\mathsf{s}_{+}+p_{-|s}\mathsf{s}_{-}$ as illustrated in Fig.~\ref{fig:steeredstate},  
where
\begin{align}\label{ssprob}
    p_{\pm |s }=\frac{|\mathsf{bs_{\mp}}|}{|\mathsf{s_{+}s_{-}}|},
\end{align}
points $\mathsf{b}$ and $\mathsf{s}_{\pm}$ correspond to the Bloch vectors of states $\rho_B$ and $\rho_{\pm|s}$, respectively~\cite{Song2023}.

Conversely, the set $\{\mathcal{E}_B, \boldsymbol{b}\}$ determines the assemblage $\Sigma_B$, and hence also determines EPR steerablity from Alice to Bob.
More specifically,  any line passing through the point $\mathsf{b}$ intersects the surface of $\mathcal{E}_B$ at two points $\mathsf{s}_{\pm}$, which correspond to two conditional states $\rho_{\pm|s}$ in an ensemble in $\Sigma_B$. The probabilities associated with  $\rho_{\pm|s}$ are given by Eq.~\eqref{ssprob}.
Hence, there exists a one-to-one correspondence between the pair $\{\mathcal{E}_B,\boldsymbol{b}\}$ and the assemblage $\Sigma_B$ obtained from all possible projective measurements.

Let us clarify the notion of a pure steered state and its geometric counterpart, a tangency point, both of which are essential for stating our main results. We summarize the scenario considered throughout this work, namely, the case in which at least one pure steered state exists.

\begin{definition}\label{Scenario} (Scenario)
Let Alice and Bob share a two-qubit entangled state. Alice is allowed to perform arbitrary projective measurements on her subsystem, thereby generating an assemblage, denoted by $\Sigma_B$, for Bob. The assemblage contains at least one pure state with nonzero probability. Equivalently, Bob’s quantum steering ellipsoid, denoted by $\mathcal{E}_B$, is tangent to his Bloch sphere $\mathcal{S}_B$ at least at a single point.
\end{definition}

Note that the number of pure steered states in $\Sigma_B$ coincides with the number of tangency points between $\mathcal{E}_B$ and $\mathcal{S}_B$. This number, denoted by $N$, can take only the values $0$, $1$, $2$, or $\infty$~\cite{Braun14}.

\emph{Main results}. Let us now consider the scenario summarized above and present the main results, tangency points as a sufficient condition for EPR steering. 
We first show that no LHS model with finitely many local hidden states can reproduce the assemblage in the above scenario.

\begin{lemma}\label{finitelhs} 
A finite number of local hidden states cannot construct a model for a scenario in which a quantum steering ellipsoid with nonzero volume has at least one tangency point with a Bloch sphere.
\end{lemma}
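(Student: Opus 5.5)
Suppose, for contradiction, that an LHS model with finitely many hidden states $\{\rho_{\lambda_k}\}_{k=1}^K$ and weights $\mu_k>0$ reproduces $\Sigma_B$. The plan is to show two things. First, the extremality of the pure steered state forces it into the model. Second, the nonzero volume of $\mathcal{E}_B$ then forces infinitely many distinct hidden states to accompany it.

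\emph{Step 1: the tangency state must be a hidden state.} Let $\mathsf{p}$ be a tangency point, with pure steered state $\rho_\mathsf{p}$ occurring with probability $p_\mathsf{p}>0$ in the ensemble with setting $s_\mathsf{p}$. The model gives
\begin{equation*}
p_\mathsf{p}\rho_\mathsf{p}=\sum_k \mu_k\, p(+|s_\mathsf{p},\lambda_k)\,\rho_{\lambda_k}.
\end{equation*}
A pure state is extremal in the Bloch ball, so every $\rho_{\lambda_k}$ with $p(+|s_\mathsf{p},\lambda_k)>0$ must equal $\rho_\mathsf{p}$. Hence $\rho_\mathsf{p}$ is one of the hidden states; call it $\rho_{\lambda_1}$. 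Its response function satisfies $\sum_k \mu_k p(+|s_\mathsf{p},\lambda_k)\delta_{\rho_{\lambda_k}=\rho_\mathsf{p}}=p_\mathsf{p}$.

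\emph{Step 2: perturbing the measurement direction.} Take a continuum of measurement directions $\hat{\boldsymbol{n}}_\mathsf{e}$ near $\hat{\boldsymbol{n}}_{s_\mathsf{p}}$. Their $+$ outcomes give steered states $\sigma_\mathsf{e}$ whose Bloch points $\mathsf{e}$ are distinct and sweep a neighbourhood of $\mathsf{p}$ on $\partial\mathcal{E}_B$. This uses the nonzero volume, which makes the map $\hat{\boldsymbol{n}}\mapsto \mathsf{e}$ injective and open. Each $\sigma_\mathsf{e}$ must be written as a nonnegative combination of the finitely many $\rho_{\lambda_k}$. Consequently every point $\mathsf{e}$ lies in the convex hull $P$ of the finite set $\{\boldsymbol{r}_{\lambda_k}\}$. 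Moreover, $P\supseteq \mathcal{E}_B$ must hold, since every boundary point of $\mathcal{E}_B$ is a steered state.

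\emph{Step 3: contradiction.} $P$ is a polytope contained in the Bloch ball, and it contains the point $\mathsf{p}$ on the sphere. A polytope inside the ball can touch the sphere only at its vertices. Near a vertex $\mathsf{p}$, the polytope is contained in a polyhedral cone with apex $\mathsf{p}$, while $\mathcal{E}_B$ near $\mathsf{p}$ is a smooth surface tangent to the sphere's tangent plane at $\mathsf{p}$. A polyhedral cone contained in the ball with apex on the sphere must lie strictly inside a half-space whose faces meet the tangent plane only at the apex. More precisely, the cone's edge directions all point strictly into the ball, so the cone is contained in $\{\boldsymbol{x}:(\boldsymbol{x}-\mathsf{p})\cdot\hat{\mathsf{p}}\le -c|\boldsymbol{x}-\mathsf{p}|\}$ for some $c>0$. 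A smooth ellipsoid tangent to the plane at $\mathsf{p}$ has points with $(\boldsymbol{x}-\mathsf{p})\cdot\hat{\mathsf{p}}=O(|\boldsymbol{x}-\mathsf{p}|^2)$, which violate this cone condition for small $|\boldsymbol{x}-\mathsf{p}|$. Nonzero volume is essential here; otherwise $\mathcal{E}_B$ could be a segment, which is itself a polytope. Hence $\mathcal{E}_B\not\subseteq P$, a contradiction.

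\textbf{Main obstacle.} The main obstacle is Step 3: justifying rigorously that a polytope inside the ball has a "sharp" corner at any sphere contact point. This follows because each edge from vertex $\mathsf{p}$ is a chord of the ball, and so makes a positive angle with the tangent plane; finitely many edges then give a uniform $c>0$. I also need the observation that the ellipsoid's local second-order tangency holds even in degenerate (flat but nonzero-volume) directions. This holds because the tangent plane of $\mathcal{E}_B$ at $\mathsf{p}$ coincides with that of $\mathcal{S}_B$.
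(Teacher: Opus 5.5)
Your proposal is correct and follows essentially the paper's route: finitely many hidden states span a polytope $P$ inside the Bloch ball. Every normalized steered state lies in $P$, so $\mathcal{E}_B\subseteq P$, which is impossible when $\mathcal{E}_B$ touches the sphere. Your vertex-cone estimate in Step 3 rigorously justifies the geometric fact that the paper only asserts with a figure; Step 1, which shows the pure state is a hidden state, is correct but not needed.
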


The proof is provided in Supplemental Material \cite{Suppm}. We now present one of our principal results.

\begin{theorem}\label{onepsteering}
For a quantum steering ellipsoid of Bob with nonzero volume, exactly one tangency point between the ellipsoid and a Bloch sphere implies EPR steering from Alice to Bob.
\end{theorem}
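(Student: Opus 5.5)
We argue by contradiction. Suppose an LHS model $\sigma_{\pm|s}=\int d\lambda\,\mu_\lambda\, p(\pm|s,\lambda)\rho_\lambda$ reproduces the assemblage $\Sigma_B$ generated by all projective measurements. Let $\mathsf{p}$ be the unique tangency point, $\rho_\mathsf{p}$ the corresponding pure steered state, $\hat{\boldsymbol{n}}_\mathsf{p}$ the measurement that produces it (say as outcome $+$), and $p_\mathsf{p}=p_{+|\mathsf{p}}>0$ its probability. Since $\rho_\mathsf{p}$ is pure, it is an extreme point of the Bloch ball. Hence every hidden state contributing to $\sigma_{+|\mathsf{p}}$ must equal $\rho_\mathsf{p}$. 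It follows that $\mu$ contains an atom at $\rho_\mathsf{p}$ of weight $q\ge p_\mathsf{p}$, with response $p(+|\mathsf{p},\lambda)=0$ for every other $\lambda$. We then write $\mu=q\,\delta_{\rho_\mathsf{p}}+(1-q)\mu'$, where $\mu'$ is supported on the ball $\mathcal{B}$. This is the configuration of Fig.~\ref{fig:Threeplhs}. Lemma~\ref{finitelhs} rules out a purely finite model, so $\mu'$ cannot be a finite atomic measure and we must work with a genuinely continuous part.

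Next we exploit the local geometry near $\mathsf{p}$. Because the QSE has nonzero volume and touches $\mathcal{S}_B$ only at $\mathsf{p}$, its boundary near $\mathsf{p}$ is a smooth surface whose curvature is strictly larger than that of the sphere in some direction, or at least not equal to it in all directions. Consider measurements $\hat{\boldsymbol{n}}_\mathsf{e}$ close to $\hat{\boldsymbol{n}}_\mathsf{p}$, whose outcome-$+$ states $\mathsf{e}$ lie on $\partial\mathcal{E}_B$ near $\mathsf{p}$. For each such $\mathsf{e}$ we have $\sigma_\mathsf{e}=p(+|\mathsf{e},\rho_\mathsf{p})\,q\,\rho_\mathsf{p}+(1-q)\int_{\mathcal{B}^+}\mu'\,p(+|\mathsf{e},\lambda)\rho_\lambda$. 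The point $\mathsf{e}$ lies inside the ball at depth of second order in $|\mathsf{e}-\mathsf{p}|$, while its tangential displacement is first order. The model must therefore produce a first-order tangential shift using hidden states that, apart from $\rho_\mathsf{p}$, sit strictly inside the ball. The weight they contribute is controlled by how $p_{+|\mathsf{e}}$ varies, and Eq.~\eqref{ssprob} says this variation is smooth in $\mathsf{e}$.

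The plan is to derive two competing estimates as $\mathsf{e}\to\mathsf{p}$. First, $\mathrm{Tr}[\sigma_\mathsf{e}]-q\,p(+|\mathsf{e},\rho_\mathsf{p})$ must be the weight $w(\mathsf{e})$ of the non-$\rho_\mathsf{p}$ part. Its barycenter must lie on the ray from $\rho_\mathsf{p}$ through $\mathsf{e}$, extended beyond $\mathsf{e}$. Since the barycenter stays in the ball and $\mathsf{e}$ is at depth $O(\epsilon^2)$, this forces $w(\mathsf{e})\gtrsim \epsilon$ along a cone of directions. Second, consider the complementary outcomes: $\sigma_{-|\mathsf{e}}$ has Bloch vector near the antipodal QSE point $\mathsf{s}_-$, and the response functions satisfy $p(+|\mathsf{e},\lambda)+p(-|\mathsf{e},\lambda)=1$. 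Using these together with $p(+|\mathsf{p},\lambda)=0$ off the atom, we would show that the total mass of $\mu'$ reallocated to outcome $+$ must vanish faster than $\epsilon$. We expect this to follow from continuity of $\sigma_{-|\mathsf{e}}$ in $\mathsf{e}$ and a support-function argument: integrating $\mu'$ against the linear functional $\boldsymbol{x}\mapsto 1-\hat{\boldsymbol{m}}\cdot\boldsymbol{x}$, where $\hat{\boldsymbol{m}}$ is the outward normal at $\mathsf{p}$, which vanishes only at $\mathsf{p}$ on the ball. Uniqueness of the tangency point ensures that no other pure direction can absorb the required weight, so the two estimates contradict each other.

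The main obstacle is this second, quantitative step: making the mass estimate on $\mu'$ rigorous without assuming any regularity of the response functions. The first thing to try is to pass to a convex-duality formulation. We would pick a single steering functional of the form $\sum_\pm \mathrm{Tr}[F_{\pm|\mathsf{e}}\sigma_{\pm|\mathsf{e}}]$, built from a one-parameter family of measurements near $\hat{\boldsymbol{n}}_\mathsf{p}$ together with $\hat{\boldsymbol{n}}_\mathsf{p}$ itself. Its LHS bound should be computable from the extremal deterministic responses, and the quantum value should exceed that bound at order $\epsilon$ thanks to the curvature gap at $\mathsf{p}$. This would give the contradiction directly as a steering inequality violation.
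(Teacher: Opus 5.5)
\textbf{Verdict: genuine gap.} Your setup matches the paper's: the full atom of weight $q\ge p_\mathsf{p}$ at $\rho_\mathsf{p}$ (so $\mu'(\{\mathsf{p}\})=0$), the non-atomic part $\nu_\mathsf{e}=(1-q)\,p(+|\mathsf{e},\cdot)\,\mu'$ of $\sigma_\mathsf{e}$ with mass $w(\mathsf{e})$, and its barycenter $\mathsf{h}$ on the ray from $\mathsf{p}$ through $\mathsf{e}$. The problem is that your two competing estimates, $w\gtrsim\epsilon$ versus $w=o(\epsilon)$ with $\epsilon=|\mathsf{pe}|$, cannot close the argument.

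\textbf{The lower bound is too weak.} The barycenter is confined to the chord $\mathsf{pq}$ of the line $\mathsf{pe}$, not merely to the ball. With $\hat{\boldsymbol{m}}=\mathsf{p}$ one computes exactly $|\mathsf{pq}|=2(1-\hat{\boldsymbol{m}}\cdot\mathsf{e})/\epsilon=O(\epsilon)$, precisely because $\mathsf{e}$ sits at depth $O(\epsilon^2)$. Hence $w(\mathsf{e})\ge p_\mathsf{e}|\mathsf{pe}|/|\mathsf{pq}|=p_\mathsf{e}\epsilon^2/[2(1-\hat{\boldsymbol{m}}\cdot\mathsf{e})]$. This tends to $c=p_\mathsf{p}/\kappa>0$, where $\kappa<\infty$ is the normal curvature of $\partial\mathcal{E}_B$ at $\mathsf{p}$ along the direction of approach. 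That is the $\epsilon$-independent bound of the paper's Eq.~\eqref{limithk1}. What makes it work is that $\kappa$ is finite (nonzero volume), not any curvature gap with the sphere.

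\textbf{The upper bound cannot be derived from the tools you list.}
\begin{itemize}
\item $\sigma_{-|\mathsf{e}}=\rho_B-\sigma_{+|\mathsf{e}}$ adds no information.
\item Response functions for different settings are unrelated, so there is no continuity in $\hat{\boldsymbol{n}}_\mathsf{e}$ to exploit.
\item The support-function identity $\int(1-\hat{\boldsymbol{m}}\cdot\boldsymbol{x})\,d\nu_\mathsf{e}=p_\mathsf{e}(1-\hat{\boldsymbol{m}}\cdot\mathsf{e})=O(\epsilon^2)$ shows only that $\nu_\mathsf{e}$ concentrates near $\mathsf{p}$, not that its mass is small.
\item Any rate for $w$ would need a rate for $\mu'(B_\delta(\mathsf{p})\setminus\{\mathsf{p}\})\to0$, which an arbitrary $\mu'$ does not have.
\end{itemize}
The convex-duality fallback is not yet an argument either. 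No functional or LHS bound is specified, and the paper's contradiction uses the whole sequence of settings $\hat{\boldsymbol{n}}_\mathsf{e}\to\hat{\boldsymbol{n}}_\mathsf{p}$ rather than a single inequality.

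\textbf{The missing closing step is qualitative, not a comparison of rates.} It is what the paper's Eq.~\eqref{limithk2} ($p_\mathsf{h}\to0$ as $\mathsf{h}\to\mathsf{p}$) expresses. On the ball, $1-\hat{\boldsymbol{m}}\cdot\boldsymbol{x}\ge|\boldsymbol{x}-\mathsf{p}|^2/2$. The support-function identity therefore gives $\nu_\mathsf{e}(\{|\boldsymbol{x}-\mathsf{p}|>\delta\})\le 2p_\mathsf{e}(1-\hat{\boldsymbol{m}}\cdot\mathsf{e})/\delta^2\to0$ for each fixed $\delta$. Combine this with $\liminf w(\mathsf{e})\ge c$ and $\nu_\mathsf{e}\le(1-q)\mu'$, and note that $\mu'$ does not depend on $\mathsf{e}$. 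Letting $\mathsf{e}\to\mathsf{p}$ gives $(1-q)\,\mu'(B_\delta(\mathsf{p})\setminus\{\mathsf{p}\})\ge c$ for every $\delta>0$. But $\mu'$ is a single finite measure with no atom at $\mathsf{p}$, so this mass tends to $0$ as $\delta\to0$ by countable additivity, which is a contradiction.

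So the correct comparison is a positive constant against $o(1)$, not $\epsilon$ against $o(\epsilon)$. Uniqueness of the tangency point is never used: any other pure hidden state is simply part of $\mu'$ away from $\mathsf{p}$.
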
 

We detail the proof in the Supplemental Material \cite{Suppm}, and here sketch the idea of the proof as follows. As illustrated in Fig.~\ref{fig:Threeplhs}, we assume, without loss of generality, that the QSE $\mathcal{E}$ is tangent to Bob's Bloch sphere $\mathcal{S}$ at the point $\mathsf{p}=(0,0,-1)$. We first consider the case where Alice performs a projective measurement that can steer Bob's system to a pure state $\rho_\mathsf{p}$ with nonzero probability $p_\mathsf{p}$ and to a mixed state $\rho_\mathsf{s}$ with probability $p_\mathsf{s}$, where $p_\mathsf{p}={|\mathsf{bs}|}/{|\mathsf{ps}|}$ and $p_\mathsf{s}=1-p_\mathsf{p}$.
Because $\rho_\mathsf{p}$ is pure, if an LHS model exists, it must contain $\rho_\mathsf{p}$ with probability $\tilde{p}_\mathsf{p} \ge p_\mathsf{p}$~\cite{Nguyen16}. 
Thus
\begin{align}\label{ideal01}
   \sigma_\mathsf{p} = p_\mathsf{p} \rho_\mathsf{p} \overset{\text{LHS}}{=} p_\mathsf{p}|1\rangle\langle 1|.
\end{align}
By Lemma~\ref{finitelhs}, it is impossible for any finite LHSs to reproduce all Bob’s conditional states in the Scenario~\ref{Scenario}. 
Hence, we consider an LHS model in which the pure state $\rho_\mathsf{p}$ appears with probability $\bar{p}_\mathsf{p}$ ($ \bar{p}_\mathsf{p} = \tilde{p}_\mathsf{p}- p_\mathsf{p}$), together with the remaining LHSs distributed in the Bloch ball $\mathcal{B}$ (or on the Bloch sphere $\mathcal{S}$) according to a distribution $\mu_{\boldsymbol{\lambda}}$ to reproduce the steered state
\begin{align}\label{ideal02}
   \sigma_\mathsf{s} = p_\mathsf{s} \rho_\mathsf{s} \overset{\text{LHS}}{=} 
  \bar{p}_\mathsf{p} \rho_\mathsf{p} +\int_{\mathcal{B}} \mu_{\boldsymbol{\lambda}} p(- |\hat{\boldsymbol{n}}_{\mathsf{p}},{\boldsymbol{\lambda}})\rho_{\boldsymbol{\lambda}}  \, dV. 
\end{align}
Normalization is ensured by the condition 
\begin{align}\label{ideal03}
    \tilde{p}_\mathsf{p} + \int_{ \mathcal{B}} \mu_{\boldsymbol{\lambda}} \, dV = 1.  
\end{align}
Satisfying Eqs.~(\ref{ideal01}-\ref{ideal03}) is a necessary condition for the existence of an LHS model.

Next, we consider projective measurements along arbitrary directions $\hat{\boldsymbol{n}}_\mathsf{e}$. Such a measurement steers Bob's system to states $\rho_\mathsf{e}$ and $\rho_{\mathsf{f}}$ with probabilities $p_\mathsf{e}$ and $p_{\mathsf{f}}$, respectively. Note that it suffices to consider the LHS model for $\sigma_\mathsf{e}$, since $\sigma_\mathsf{e}+\sigma_\mathsf{f}=\rho_B$.
To reproduce the state $\sigma_\mathsf{e}$, the model must include $\rho_\mathsf{p}$ with some probability $p_\mathsf{p}^\prime>0$, together with a subset of LHSs distributed within $\mathcal{B}$ accounting for the remaining contribution. This subset is specified by the response function $p(+|\hat{\boldsymbol{n}}_\mathsf{e},{\boldsymbol{\lambda}})$, which defines a region $\mathcal{B}^+\subset \mathcal{B}$. Then $\sigma_\mathsf{e}$ is given by
\begin{align}\label{lhse}
   \sigma_\mathsf{e} = p_\mathsf{e} \rho_\mathsf{e} 
   &\overset{\text{LHS}}{=} p_\mathsf{p}^\prime\rho_\mathsf{p} +\int_{\mathcal{B}^+} \mu_{\boldsymbol{\lambda}} \rho_{\boldsymbol{\lambda}} \, dV.   
\end{align}
The corresponding points in the Bloch ball $\mathcal{B}$ satisfy
\begin{align}\label{ideal010}
   &p_\mathsf{e} \mathsf{e}   
   = p_\mathsf{p}^\prime\mathsf{p} +  p_\mathsf{h} \mathsf{h},
\end{align}
where 
\begin{align}
   &\mathsf{h} := \frac{\int_{\mathcal{B}^+} \mu_{\boldsymbol{\lambda}} {\boldsymbol{\lambda}} \, dV}{\int_{\mathcal{B}^+} \mu_{\boldsymbol{\lambda}} \, dV},\quad
   p_\mathsf{h} := \int_{\mathcal{B}^+} \mu_{\boldsymbol{\lambda}}  \, dV.
\end{align}
If an LHS model exists, then the following condition must be satisfied according to Eq.~\eqref{ideal010}: 
\begin{align}\label{ideal04}
   & \frac{p_\mathsf{h}}{p_\mathsf{e}} = \frac{|\mathsf{pe}|}{|\mathsf{ph}|}.
\end{align}
Note that the three points $\mathsf{p}$, $\mathsf{b}$, and $\mathsf{e}$ uniquely determine a plane. The intersections of this plane with $\mathcal{S}$ and with $\mathcal{E}$ are denoted by a circle $\mathsf{C}$ and an ellipse $\mathsf{E}$, respectively (see Fig.~\ref{fig:Threeplhs}). 
Their centers are denoted by $\mathsf{o}$ and $\mathsf{c}$, the radius of $\mathsf{C}$ by $R$, and the semi-axes of $\mathsf{E}$ by $u$ and $v$. The angle between the semi-axis 
$v$ and the line $\mathsf{op}$ is denoted by $\theta$. 
Let us assume that the slope of the line $\mathsf{pe}$ is $k$.
Aside from the point $\mathsf{p}$, the intersection of $\mathsf{pe}$ with $\mathsf{C}$ is given by point $\mathsf{q}$. As $\mathsf{e}$ approaches $\mathsf{p}$, the slope $k$ approaches $0$. In this limit, the probability $p_\mathsf{h}$ satisfying Eq.~\eqref{ideal04} is given by
\begin{align}\label{ideal05}
  \lim_{k \to 0} {p_\mathsf{h}}(k) >  
  \frac{G}{\sqrt{2}RW_0}p_\mathsf{p}>0, 
\end{align}
where $G  =  u^2 v^2 \sqrt{u^2+v^2 + \left(v^2-u^2\right)\cos (2 \theta )}$ and $W_0  = u^2 \sin ^2(\theta )+v^2 \cos ^2(\theta )$.
However, As the point $\mathsf{e}$ approaches $\mathsf{p}$, the point $\mathsf{h}$ necessarily approaches $\mathsf{p}$. This requires that the volume of $\mathcal{B}^+$ shrinks correspondingly and vanishes in this limit. Then 
\begin{align}\label{ideal06}
  & \lim_{V(\mathcal{B}^{+}) \to 0} p_\mathsf{h} \overset{\text{LHS}}{=} \lim_{V(\mathcal{B}^{+}) \to 0} \int_{\mathcal{B}} \mu_{\boldsymbol{\lambda}} p(+|\hat{\boldsymbol{n}}_\mathsf{e},{\boldsymbol{\lambda}}) \, dV = 0
\end{align}
for any response function $p(+|\hat{\boldsymbol{n}}_\mathsf{e},{\boldsymbol{\lambda}})$ and any probability distribution $\mu_{\boldsymbol{\lambda}}$ supported on the Bloch ball $\mathcal{B}$. 
The contradiction between Eqs.~\eqref{ideal05} and \eqref{ideal06} implies that no LHS model can reproduce all conditional states in the case of a single tangency point in Scenario~\ref{Scenario}.

If $\mathcal{E}$ is tangent to $\mathcal{S}$ at two points in Scenario~\ref{Scenario}, then the following result holds.

\begin{theorem}\label{twopsteering}
For a quantum steering ellipsoid of Bob with nonzero volume, exactly two tangency points between the ellipsoid and his Bloch sphere imply EPR steering from Alice to Bob.
\end{theorem}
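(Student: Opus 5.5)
The plan is to reduce Theorem~\ref{twopsteering} to the local argument used for Theorem~\ref{onepsteering}. That argument only looked at a neighbourhood of a single tangency point $\mathsf{p}$. It used three facts:
\begin{enumerate}[label=(\roman*)]
\item the pure steered state $\rho_\mathsf{p}$ must occur in any LHS model as an atom with weight $\tilde p_\mathsf{p}\ge p_\mathsf{p}>0$;
\item steered states $\rho_\mathsf{e}$ with $\mathsf{e}\to\mathsf{p}$ along $\partial\mathcal{E}_B$ force the weight $p_\mathsf{h}$ of the continuous part to stay bounded below, as in Eq.~\eqref{ideal05};
\item the barycenter $\mathsf{h}$ of that continuous part must converge to $\mathsf{p}$, which forces $p_\mathsf{h}\to0$, as in Eq.~\eqref{ideal06}.
\end{enumerate}
With two tangency points $\mathsf{p}_1,\mathsf{p}_2$, I would fix one of them, say $\mathsf{p}_1=(0,0,-1)$, and rerun this argument. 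The new feature is that an LHS model may now contain two atoms, $\rho_{\mathsf{p}_1}$ and $\rho_{\mathsf{p}_2}$, so the decomposition \eqref{lhse} must be generalized to
\begin{align}
\sigma_\mathsf{e}\overset{\text{LHS}}{=}p'_{\mathsf{p}_1}\rho_{\mathsf{p}_1}+p'_{\mathsf{p}_2}\rho_{\mathsf{p}_2}+\int_{\mathcal{B}^+}\mu_{\boldsymbol{\lambda}}\rho_{\boldsymbol{\lambda}}\,dV .
\end{align}

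The key steps, in order, are as follows.

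First, I invoke Lemma~\ref{finitelhs} to exclude purely atomic models. Any candidate model therefore has a nontrivial continuous part $\mu_{\boldsymbol{\lambda}}$ in addition to the atoms at $\mathsf{p}_1$ and $\mathsf{p}_2$.

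Second, I take the measurement that produces the pair $(\rho_{\mathsf{p}_1},\rho_\mathsf{s})$, i.e.\ the chord through $\mathsf{p}_1$ and $\mathsf{b}$. This gives Eqs.~\eqref{ideal01}--\eqref{ideal03} with an extra atom term $\bar p_{\mathsf{p}_2}\rho_{\mathsf{p}_2}$ in $\sigma_\mathsf{s}$.

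Third, I consider $\mathsf{e}\in\partial\mathcal{E}_B$ approaching $\mathsf{p}_1$ within the plane through $\mathsf{p}_1$, $\mathsf{b}$, $\mathsf{e}$. The contribution of $\rho_{\mathsf{p}_2}$ to $\sigma_\mathsf{e}$ must vanish in this limit. The reason is that $\rho_\mathsf{e}\to\rho_{\mathsf{p}_1}$, which is pure, so the only pure state allowed in its decomposition in the limit is $\rho_{\mathsf{p}_1}$. Quantitatively, $p'_{\mathsf{p}_2}\le p_\mathsf{e}\,(1-F(\rho_\mathsf{e},\rho_{\mathsf{p}_2}))^{-1}\,\cdot(\text{distance of }\mathsf{e}\text{ from }\mathsf{S})$; any bound of this form that goes to $0$ as $\mathsf{e}\to\mathsf{p}_1$ suffices. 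Once this is shown, the $\mathsf{p}_2$-term can be absorbed into an effective point $\mathsf{h}'$, and the collinearity relation \eqref{ideal04} holds up to an error that vanishes as $k\to0$.

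Fourth, I repeat the same elementary ellipse-versus-circle computation in that plane to recover the lower bound \eqref{ideal05}. This step depends only on the local curvatures of $\mathsf{E}$ and $\mathsf{C}$ at $\mathsf{p}_1$, and since $\mathcal{E}_B$ has nonzero volume, the ellipse $\mathsf{E}$ is nondegenerate there. Finally, I combine this with \eqref{ideal06}, applied to the continuous part only, to obtain the contradiction.

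One caveat is that the two tangency points might lie on the chord through $\mathsf{b}$, i.e.\ $\mathsf{s}=\mathsf{p}_2$. In that case I simply choose $\mathsf{e}$ approaching $\mathsf{p}_1$ in a plane not containing $\mathsf{p}_2$, which is possible because the plane through $\mathsf{p}_1,\mathsf{b}$ can be rotated about the line $\mathsf{p}_1\mathsf{b}$. I will also need to check that the ellipsoid cross-section in that plane still touches the circle only at $\mathsf{p}_1$ locally. This holds because $N=2$ means the tangency set is discrete.

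The main obstacle is the third step: rigorously controlling the second atom $p'_{\mathsf{p}_2}$ near $\mathsf{p}_1$. A response function could in principle keep assigning weight to $\rho_{\mathsf{p}_2}$ and compensate with continuous states elsewhere. I would first try a linear-functional argument: apply the witness $\langle 0|\cdot|0\rangle$, i.e.\ the $z$-component, to both sides. For $\mathsf{e}\to\mathsf{p}_1$ the left side has $\langle 0|\rho_\mathsf{e}|0\rangle\to0$, while every term on the right is a nonnegative contribution. Since $\mathsf{p}_2\neq\mathsf{p}_1$ gives $\langle0|\rho_{\mathsf{p}_2}|0\rangle>0$, this forces $p'_{\mathsf{p}_2}\to0$. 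The same estimate also forces the continuous part to concentrate near $\mathsf{p}_1$, which is exactly what \eqref{ideal06} requires.
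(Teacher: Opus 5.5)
Your proposal is correct, and it follows a different route from the paper's proof of this theorem. The paper does not localize at one tangency point; it splits into cases:
\begin{itemize}
\item if both pure states come from a single projective measurement, it appeals to the all-versus-nothing result of Ref.~\cite{chenjl2013};
\item otherwise it compares $p_\mathsf{p}+p_\mathsf{g}$ with $1$, quoting Ref.~\cite{Nguyen2017} when the sum exceeds $1$ and asserting that $\sigma_\mathsf{s},\sigma_\mathsf{t}$ cannot be reproduced when it equals $1$;
\item only for $p_\mathsf{p}+p_\mathsf{g}<1$ does it rerun the Theorem~\ref{onepsteering} argument. There it restricts Alice's directions to the plane through $\mathsf{p},\mathsf{b},\mathsf{g}$, keeps both atoms, and bounds the barycentric ratio $p_\mathsf{h}/p_\mathsf{e}=|\triangle\mathsf{epg}|/|\triangle\mathsf{pgh}|$ via the comparison triangles $\mathsf{pgq}$ and $\mathsf{pgw}$ to reach Eq.~\eqref{probhcond}.
\end{itemize}

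You instead treat $\rho_{\mathsf{p}_2}$ as one more piece of the hidden-state measure and remove it with the affine functional $\langle 0|\cdot|0\rangle$. This buys uniformity. There is no case split, no reliance on external results, no need for $\mathsf{g}$ to lie in the plane of approach, and no area comparisons that depend on where $\mathsf{e}$ sits on the arc from $\mathsf{r}$ to $\mathsf{p}$. It also makes clear that the number of tangency points is irrelevant: the same local argument works for any $N\ge 1$. The paper's route, in exchange, gets immediate contradictions in its easy subcases and keeps explicit bookkeeping of both atoms.

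Your ``main obstacle'' in fact disappears, so you can simplify:
\begin{itemize}
\item Lump the $\mathsf{p}_2$-atom together with everything except the atom at $\mathsf{p}_1$. The barycenter of this remainder lies in the ball, so Eq.~\eqref{ideal04} holds exactly with $\mathsf{h}$ replaced by it.
\item The chord comparison then bounds the remainder's weight below by a positive constant fixed by $p_{\mathsf{p}_1}$ and the curvature of $\mathsf{E}$ at $\mathsf{p}_1$.
\item Your functional gives that the $\langle 0|\cdot|0\rangle$-component of the remainder equals $p_\mathsf{e}\langle 0|\rho_\mathsf{e}|0\rangle\to 0$. Since $\langle 0|\cdot|0\rangle$ vanishes on the ball only at $\mathsf{p}_1$, this forces $p'_{\mathsf{p}_2}\to 0$ and makes the remainder concentrate at $\mathsf{p}_1$.
\item The guessed fidelity bound, the error term in the collinearity relation, and the choice of a plane avoiding $\mathsf{p}_2$ are therefore unnecessary.
\end{itemize}

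One point you should state precisely is why concentration contradicts the lower bound. Lemma~\ref{finitelhs} plays no role here, since it does not exclude models with countably many atoms. Nor is it literally true that the volume of $\mathcal{B}^+$ must shrink. The actual reason is that every measure $\mu_{\boldsymbol{\lambda}}p(+|\hat{\boldsymbol{n}}_\mathsf{e},\boldsymbol{\lambda})$ is dominated by one fixed finite measure that carries no mass at $\mathsf{p}_1$, because the whole atom there has been split off. Hence its mass in shrinking punctured neighbourhoods of $\mathsf{p}_1$ tends to zero, which contradicts the uniform lower bound.
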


We detail the proof in Supplemental Material \cite{Suppm}. Theorems~\ref{onepsteering} and~\ref{twopsteering} establish criteria for EPR steering from Alice to Bob when a QSE has one or two tangency points. EPR steering from Bob to Alice can be found as follows.

\begin{figure}[t]
    \centering
    \includegraphics[width=0.5\textwidth]{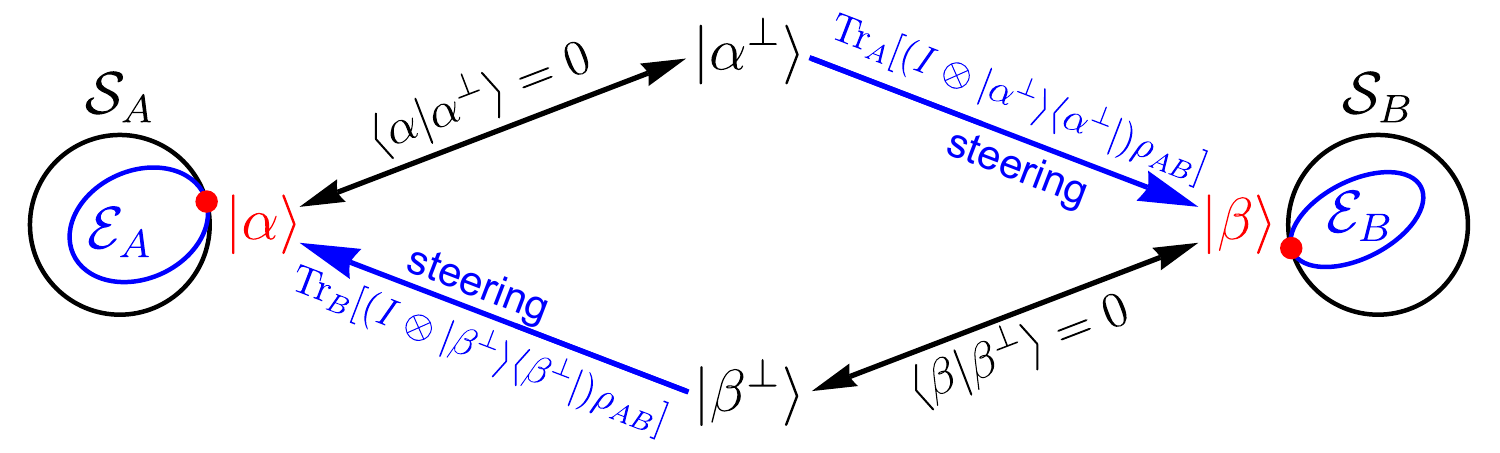}
    \caption{The one-to-one correspondence between Alice's pure steered state $|\alpha\rangle$ and Bob's pure steered state $|\beta\rangle$.}
    \label{fig:twoellip}
\end{figure}

\begin{theorem}\label{ellippn}
For a two-qubit entangled state, a quantum steering ellipsoid of Alice is tangent to her Bloch sphere at exactly $N$ points for $(N\in\{0,1,2,\infty \})$, so is the ellipsoid of Bob. 
\end{theorem}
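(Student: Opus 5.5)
The plan is to build an explicit bijection between Alice's pure steered states and Bob's pure steered states, as suggested by Fig.~\ref{fig:twoellip}. Since each tangency point of $\mathcal{E}_B$ with $\mathcal{S}_B$ is exactly one pure steered state in $\Sigma_B$, and likewise for Alice, such a bijection gives equality of the tangency counts $N$. The case $N=\infty$ then follows because a bijection between infinite sets preserves infiniteness, and the classification $N\in\{0,1,2,\infty\}$ of \cite{Braun14} applies on both sides.

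The key steps are as follows.

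\textbf{Step 1 (operator characterization).} Suppose Alice's projector $E=|\alpha\rangle\langle\alpha|$ steers Bob to a pure state with nonzero probability. Then
\[
\mathrm{Tr}_A[(|\alpha\rangle\langle\alpha|\otimes I)\rho]=p\,|\beta\rangle\langle\beta| \qquad (p>0).
\]
Equivalently, $\langle\alpha|\otimes\langle\beta^\perp|$ annihilates $\rho$, in the sense that
\[
(\langle\alpha|\otimes\langle\beta^\perp|)\,\rho\,(|\alpha\rangle\otimes|\beta^\perp\rangle)=0 .
\]
Since $\rho\ge 0$, this holds if and only if $\rho\,|\alpha\rangle|\beta^\perp\rangle=0$, i.e.\ $|\alpha\rangle|\beta^\perp\rangle\in\ker\rho$.

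So Bob's pure steered states are in correspondence with product vectors $|\alpha\rangle|\beta^\perp\rangle$ in $\ker\rho$, subject to the nondegeneracy condition $p=\langle\alpha|\rho_A|\alpha\rangle>0$. The condition is symmetric in the two parties. If $|\alpha\rangle|\beta^\perp\rangle\in\ker\rho$, then Bob measuring $|\beta^\perp\rangle\langle\beta^\perp|$ steers Alice to a state supported on $|\alpha^\perp\rangle$, i.e.\ the pure state $|\alpha^\perp\rangle$, with probability $\langle\beta^\perp|\rho_B|\beta^\perp\rangle$. This gives the map $|\beta\rangle\mapsto|\alpha^\perp\rangle$ between Bob's and Alice's pure steered states.

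\textbf{Step 2 (nonzero probabilities).} For an entangled state, $\rho_A$ and $\rho_B$ are full rank: if $\rho_A$ had a zero eigenvector, $\rho$ would be supported on a product subspace $|\phi\rangle\otimes\mathbb{C}^2$ and hence be separable. Thus all probabilities $p$ are positive. Entanglement also guarantees $|\boldsymbol{a}|,|\boldsymbol{b}|<1$, so $\gamma$ is finite and the QSE formulas apply.

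\textbf{Step 3 (well-definedness and injectivity).} I must show that distinct kernel product vectors give distinct pure states on each side. The obstacle is a single $|\beta\rangle$ arising from two different $|\alpha\rangle$'s, or conversely. I handle this with a kernel-dimension argument. If $|\alpha_1\rangle|\beta^\perp\rangle$ and $|\alpha_2\rangle|\beta^\perp\rangle$ both lie in $\ker\rho$ with $\alpha_1\ne\alpha_2$, then $\mathbb{C}^2\otimes|\beta^\perp\rangle\subseteq\ker\rho$. Hence $\rho$ is supported on $\mathbb{C}^2\otimes|\beta\rangle$, which is a product state, a contradiction. The symmetric argument applies on Alice's side. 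So the product vectors in $\ker\rho$, Alice's tangency points, and Bob's tangency points are in mutual bijection.

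Counting product vectors in $\ker\rho$ also reproduces the list $\{0,1,2,\infty\}$, which serves as a consistency check:
\begin{itemize}
\item $\dim\ker\rho\le 1$ gives $0$ or $1$ product vectors;
\item $\dim\ker\rho=2$ gives a two-dimensional subspace, which contains $1$, $2$ or infinitely many product vectors (via the Segre quadric intersecting a projective line);
\item $\dim\ker\rho=3$ means $\rho$ is pure and entangled, which gives infinitely many.
\end{itemize}

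\textbf{Main obstacle.} I expect the main obstacle to be the rigorous identification of tangency points with pure elements of $\Sigma_B$ carrying nonzero probability, especially in degenerate configurations, e.g.\ a QSE with zero volume or a tangency approached only in a limit. I would address it with Step~2, using the full-rank reduced states.
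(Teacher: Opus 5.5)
Your proof is correct, and it takes a genuinely different route from the paper.

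\textbf{The paper's route.} It proves the cases $N=1$ and $N=2$ separately (Lemmas~\ref{onetangpoint} and~\ref{twotangpoints}). It purifies $\rho_{AB}$ to $|\Psi\rangle_{ABC}$, Schmidt-decomposes across $A|BC$, and expands in a basis $\{|b\rangle,|b^\perp\rangle\}$ adapted to the pure outcome. It then runs a long case analysis on which of $|u_i\rangle,|v_i\rangle$ vanish or are parallel. The case $N=\infty$ is handled by asserting that the state must be pure, and $N=0$ follows by elimination.

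\textbf{Your route.} You observe that Alice's effect $|\alpha\rangle\langle\alpha|$ steers Bob to $|\beta\rangle$ if and only if $|\alpha\rangle|\beta^\perp\rangle\in\ker\rho$. This replaces all of the paper's case work with a criterion that is manifestly symmetric under $A\leftrightarrow B$. Full rank of $\rho_A$ and $\rho_B$, together with the fact that an entangled state is not a product, then gives injectivity on both sides. The result is an explicit bijection, which is exactly the paper's Observation~\ref{Observation}. It works uniformly for every $N$, including $N=0$ directly, and uses \cite{Braun14} only to state the range of $N$.

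\textbf{Trade-off.} The paper's purification bookkeeping yields explicit parametrized forms that it reuses to build the example families in the Supplemental Material. Yours buys brevity and no case split.

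\textbf{Two small remarks.}
\begin{enumerate}
\item Your worry about tangencies reached ``only in a limit'' is moot. With $|\boldsymbol a|<1$, the surface in Eq.~\eqref{qellipsoid} is the continuous image of Alice's unit sphere, hence closed. Every point of it is attained with probability $(1\pm\hat{\boldsymbol n}\cdot\boldsymbol a)/2>0$, so Step~2 already settles the issue.
\item In your consistency check, the ``infinitely many'' option for $\dim\ker\rho=2$ cannot occur for entangled states. A projective line lying on the Segre quadric is a ruling $|\alpha\rangle\otimes\mathbb{C}^2$ or $\mathbb{C}^2\otimes|\beta\rangle$, which forces $\rho$ to be a product state. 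With that refinement, your kernel count gives $N\in\{1,2\}$ for entangled rank-two states, $N\in\{0,1\}$ for rank three, and $N=0$ for rank four. It therefore also supplies an argument for the paper's assertion, stated without proof, that $N=\infty$ forces $\rho$ to be pure.
\end{enumerate}
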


\begin{proof}
A physical QSE may touch the Bloch sphere at $N$ points where $(N\in{0,1,2,\infty})$~\cite{Braun14}.
For cases $N=1,2$, we provide separate proofs in Lemmas~\ref{onetangpoint} and~\ref{twotangpoints} in the Supplemental Material~\cite{Suppm}, respectively. 
For the case $N=\infty$, the corresponding two-qubit state is pure. Consequently, the QSEs of both Alice and Bob coincide with their Bloch spheres, and the statement holds trivially. 
Since the result holds for $N=1,2,\infty$, it follows that it also holds for case $N=0$.  
\end{proof}

We emphasize the following key property identified in the proofs of Lemmas~\ref{onetangpoint} and~\ref{twotangpoints}, as illustrated in Fig.~\ref{fig:twoellip}.
\begin{observation}\label{Observation}
There exists a one-to-one correspondence between Alice's pure steered state $|\alpha\rangle_A$ and Bob's pure steered state $|\beta\rangle_B$. Specifically, Alice can steer Bob's system to $|\beta\rangle_B$ via the measurement effect $|\alpha^\perp\rangle\langle \alpha^\perp|_A$, while Bob can steer Alice's system to $|\alpha\rangle_A$ via $|\beta^\perp\rangle\langle \beta^\perp|_B$, where $\langle \alpha|\alpha^\perp\rangle=0$ and $\langle \beta|\beta^\perp\rangle=0$.
\end{observation}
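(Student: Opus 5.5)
The plan is to prove the Observation directly from the structure of a two-qubit state that admits a pure conditional state. Suppose Alice's ellipsoid $\mathcal{E}_A$ touches her Bloch sphere at the Bloch vector of a pure state $|\alpha\rangle_A$.

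\emph{Step 1: rewrite tangency as a statement about $\rho$.} Tangency of $\mathcal{E}_A$ at $|\alpha\rangle$ means that some effect $F$ of Bob, with nonzero probability, satisfies
\[
\mathrm{Tr}_B[(I\otimes F)\rho]\propto|\alpha\rangle\langle\alpha|_A .
\]
Since every point of $\mathcal{E}_A$ arises from a rank-one projective effect, $F$ can be taken to be $|\beta^\perp\rangle\langle\beta^\perp|_B$ for some unit vector $|\beta^\perp\rangle$.

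Sandwich this identity with $|\alpha^\perp\rangle$. This gives
\[
\langle\alpha^\perp\beta^\perp|\rho|\alpha^\perp\beta^\perp\rangle=0 .
\]
Because $\rho\ge0$, it follows that
\[
\rho\,|\alpha^\perp\rangle|\beta^\perp\rangle=0 ,
\]
so the product vector $|\alpha^\perp\beta^\perp\rangle$ lies in the kernel of $\rho$.

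\emph{Step 2: read the statement backwards.} The kernel condition is symmetric in the two parties. Conversely, it gives
\[
\langle\beta^\perp|\,\mathrm{Tr}_A[(|\alpha^\perp\rangle\langle\alpha^\perp|\otimes I)\rho]\,|\beta^\perp\rangle=0 ,
\]
so Alice's effect $|\alpha^\perp\rangle\langle\alpha^\perp|$ steers Bob to a state orthogonal to $|\beta^\perp\rangle$, namely $|\beta\rangle$. This holds provided the probability $\langle\alpha^\perp|\rho_A|\alpha^\perp\rangle$ is nonzero.

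Nonvanishing follows from entanglement. If $\langle\alpha^\perp|\rho_A|\alpha^\perp\rangle=0$, then $\rho_A=|\alpha\rangle\langle\alpha|$ is pure, and $\rho$ would be a product state. So entanglement guarantees that $|\beta\rangle$ is a genuine tangency point of $\mathcal{E}_B$ with $\mathcal{S}_B$. The symmetric argument gives the reverse implication, so we obtain the map $\alpha\mapsto\beta$ and its inverse.

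\emph{Step 3: show the correspondence is a bijection.} Tangency points of $\mathcal{E}_A$ correspond exactly to product vectors $|\alpha^\perp\beta^\perp\rangle\in\ker\rho$ with $\rho$ entangled, and the same holds for $\mathcal{E}_B$. The map from $\alpha$ to $\beta$ is therefore well defined if, for fixed $\alpha^\perp$, the vector $\beta^\perp$ is unique.

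Suppose two independent vectors $\beta_1^\perp,\beta_2^\perp$ both worked for the same $\alpha^\perp$. Then $|\alpha^\perp\rangle\otimes\mathbb{C}^2\subset\ker\rho$, which forces
\[
(|\alpha^\perp\rangle\langle\alpha^\perp|\otimes I)\rho=0 .
\]
This again makes $\rho_A$ pure, contradicting entanglement. Hence the assignment is injective in both directions and the counts of tangency points agree. This also yields Lemmas~\ref{onetangpoint} and~\ref{twotangpoints}.

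\emph{Main obstacle.} The delicate point is the first step: justifying that a tangency point of the ellipsoid corresponds to an actual conditional state with nonzero weight produced by a rank-one effect. Degenerate ellipsoids, and boundary points where the probability $1\pm\hat{\boldsymbol n}\cdot\boldsymbol a$ vanishes, need care. I would handle this with the explicit parametrization \eqref{qellipsoid}: each surface point comes from some $\hat{\boldsymbol n}_s$ with $p_{\pm|s}>0$, and this holds whenever $|\boldsymbol a|<1$, which is true for entangled states.
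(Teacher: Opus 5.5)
Your argument is correct, and it takes a genuinely different and much more economical route than the paper.

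\textbf{The paper's route.} The paper obtains the correspondence inside the proofs of Lemmas~\ref{onetangpoint} and~\ref{twotangpoints}. It purifies $\rho_{AB}$ to $|\Psi\rangle_{ABC}$ and Schmidt-decomposes across $A|BC$. It then expands the $BC$ vectors in the basis $\{|b\rangle,|b^\perp\rangle\}$ adapted to the steering effect, and runs a case analysis on which of $|u_1\rangle,|u_2\rangle$ vanish, imposing linear-independence conditions so that every other conditional state is mixed. The relations $\Pi_A|\alpha\rangle=0$, $\Pi_B|\beta\rangle=0$ are read off case by case. This is done separately for $N=1$ and for $N=2$ (with subcases $\eta=0$ and $\eta\neq0$). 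The case $N=\infty$ is handled via purity, and $N=0$ by elimination.

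\textbf{Your route.} You observe that $\rho\,|\alpha^\perp\rangle|\beta^\perp\rangle=0$ is equivalent to the steering statement, by positivity of $\rho$, and is manifestly symmetric under $A\leftrightarrow B$. This replaces the whole case analysis. Your uniqueness step (two independent $\beta^\perp$ would force $\langle\alpha^\perp|\rho_A|\alpha^\perp\rangle=0$) then gives a bijection of sets directly. Hence Theorem~\ref{ellippn} follows for every $N$ at once, without using the classification $N\in\{0,1,2,\infty\}$.

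\textbf{What each approach buys.} Your argument uses only that both marginals are mixed, not entanglement as such. This explains the paper's side remarks cleanly: separable states with mixed marginals, such as the steering-needle cases it mentions, obey the same correspondence. By contrast, $|0\rangle\langle 0|\otimes I/2$ fails it precisely because $\rho_A$ is pure. The paper's purification framework, in exchange, gives an explicit parametrization of states with a prescribed number of pure steered states, which the authors use to build the examples in the Supplemental Material.

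\textbf{A small slip.} In your last paragraph, the nonvanishing probabilities for Alice's ellipsoid $\mathcal{E}_A$ (Bob measuring) are $(1\pm\hat{\boldsymbol n}\cdot\boldsymbol b)/2$. These require $|\boldsymbol b|<1$, whereas $|\boldsymbol a|<1$ is the condition for $\mathcal{E}_B$. Both hold under your mixed-marginal hypothesis, so nothing breaks.
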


The case $N=\infty$ corresponds to a two-qubit pure state. By Gisin’s theorem, all two-qubit pure entangled states are Bell nonlocal~\cite{Gisin91}. Since Bell nonlocality implies EPR steering, such states are necessarily steerable~\cite{Wiseman2007}. Combining Theorems~\ref{onepsteering}, \ref{twopsteering}, and~\ref{ellippn}, we obtain the following principal result.

\begin{theorem}\label{twowaysteering}
Suppose that Alice and Bob share a two-qubit entangled state such that one of them can steer a pure state to the other with nonzero probability. Then, the shared state is two-way EPR steerable. 
\end{theorem}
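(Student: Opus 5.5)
The plan is to assemble Theorem~\ref{twowaysteering} from Theorems~\ref{onepsteering}, \ref{twopsteering}, and~\ref{ellippn}, with Gisin's theorem covering the degenerate case. By symmetry of the hypothesis, assume first that Alice can steer Bob to a pure state with nonzero probability. Through the one-to-one correspondence between $\Sigma_B$ and $\{\mathcal{E}_B,\boldsymbol{b}\}$, this means $\mathcal{E}_B$ is tangent to $\mathcal{S}_B$ at $N\ge 1$ points. By the classification of~\cite{Braun14}, $N\in\{1,2,\infty\}$. The argument then splits into these three cases.

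\emph{Case $N=\infty$.} Here the QSE coincides with the Bloch sphere, so the shared state is pure (as noted in the proof of Theorem~\ref{ellippn}). Since it is entangled by assumption, Gisin's theorem~\cite{Gisin91} makes it Bell nonlocal. Bell nonlocality implies steering in both directions~\cite{Wiseman2007}, so the state is two-way steerable.

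\emph{Cases $N=1,2$.} Theorems~\ref{onepsteering} and~\ref{twopsteering} require $\mathcal{E}_B$ to have nonzero volume, so this must be checked first. For an entangled two-qubit state, the QSE cannot be degenerate. By the nested-tetrahedron/volume result of~\cite{Jevtic2014}, a degenerate (zero-volume) QSE always corresponds to a separable state. More concretely, $\det Q_B=0$ forces $\det(T-\boldsymbol{a}\boldsymbol{b}^\top)=0$, and a planar or lower-dimensional ellipsoid inside the ball yields a separable decomposition. I expect this verification to be the main (if modest) obstacle, because it has to be cited cleanly rather than rederived.

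With nonzero volume established, Theorem~\ref{onepsteering} (for $N=1$) or Theorem~\ref{twopsteering} (for $N=2$) gives steering from Alice to Bob. Theorem~\ref{ellippn} then says $\mathcal{E}_A$ is tangent to $\mathcal{S}_A$ at exactly the same number $N$ of points. Concretely, by Observation~\ref{Observation}, Bob can steer Alice to $|\alpha\rangle_A$ via the effect $|\beta^\perp\rangle\langle\beta^\perp|_B$. The probability of this outcome is nonzero: otherwise Alice's reduced state would be orthogonal to $|\beta^\perp\rangle$ in the relevant sense and the state would factorize, contradicting entanglement. Since entanglement is symmetric, $\mathcal{E}_A$ also has nonzero volume. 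Applying Theorems~\ref{onepsteering} and~\ref{twopsteering} with the roles of Alice and Bob exchanged ($\boldsymbol{a}\leftrightarrow\boldsymbol{b}$, $T\to T^\top$) gives steering from Bob to Alice.

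Combining the cases, in every situation the state is steerable in both directions. The case where Bob initially steers Alice to a pure state follows by the same argument with the parties exchanged.
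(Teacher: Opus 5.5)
Your proposal is correct and follows the paper's own route: Gisin's theorem handles the pure case $N=\infty$, Theorems~\ref{onepsteering} and~\ref{twopsteering} give Alice-to-Bob steering when $N=1,2$, and Theorem~\ref{ellippn} transfers the same number of tangency points to Alice's side so that the same theorems apply with the roles exchanged. Your extra checks (nonzero volume of both QSEs, and nonzero probability of the outcome $|\beta^\perp\rangle\langle\beta^\perp|_B$) are only implicit in the paper; in the latter, it is Bob's reduced state $\rho_B$, not Alice's, that would be the pure state $|\beta\rangle\langle\beta|$ if that probability vanished, which would make $\rho_{AB}$ a product state.
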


This theorem is equivalent to the following statement:
For any QSE with nonzero volume, the existence of at least one point of tangency between the QSE and the Bloch sphere implies two-way EPR steering.

\emph{Examples}.---We consider a  class of entangled two-qubit states that admit a unique pure steered state, given by
\begin{align}\label{examples}
    \rho_{AB} = q |\psi_-\rangle\langle \psi_-| +(1-q)|0\rangle\langle 0|\otimes\frac{I}{2},
\end{align}
where $|\psi_-\rangle=(|01\rangle-|10\rangle)/\sqrt{2}$ and $0<q\leq 1$. 
The state is entangled for all $q>0$ and admits a local hidden-variable (LHV) model for projective measurements when $q\leq 1/2$~\cite{Brunner13}. 
For $0<q<1$, either party, by performing the Pauli measurement $\sigma_z$, 
can steer the other party's state to a unique pure state, namely $|0\rangle$, 
with nonzero probability. According to Theorem~\ref{twowaysteering}, the state $\rho_{AB}$~\eqref{examples} is therefore two-way steerable, despite its asymmetry between Alice and Bob.
Further examples are constructed and discussed in the Supplemental Material~\cite{Suppm}.

Finally, the results so far also provide a sufficient condition for measurement incompatibility. EPR steering is closely related to measurement incompatibility~\cite{Quintino14, Uola14, Kiukas17, Uola2015, Uola25, porto26}. In fact, there exists a one-to-one correspondence between steering and incompatibility~\cite{Uola2015}.
An assemblage $\{\sigma_{r|s}\}_{r,s}$ is steerable if and only if the corresponding steering-equivalent observables (the pretty-good measurements) $B_{r|s}=\rho_{B}^{-1/2} \sigma_{r|s} \rho_{B}^{-1/2}$ are incompatible~\cite{Uola2015}. Hence, under the map $\sigma_{r|s} \mapsto B_{r|s}$, 
the state assemblage in Scenario~\ref{Scenario} can be transformed 
into a steering-equivalent measurement assemblage (POVMs), 
which is necessarily incompatible. 
Conversely, if a measurement assemblage maps to 
a state assemblage containing at least one pure state 
in Scenario~\ref{Scenario}, it is incompatible.

\emph{Discussions}. In summary, we have shown a general sufficient condition for EPR steering of two-qubit states. The existence of a pure steered state rules out an LHS model, i.e., EPR steering can be detected if an assemblage resulting from local measurements on shared two-qubit states contains at least one pure state. We further proved that Alice and Bob have the same number of tangency points between a QSE and a Bloch sphere, implying the same number of pure states in both assemblages. Hence, Alice can steer at least one pure state on Bob's side, and so can Bob, leading to two-way EPR steering. In addition, the proofs we presented here are also constructive. For instance, we have constructed five classes of two-qubit entangled states in the Supplemental Material~\cite{Suppm}.

Our results resolve the limitation of the numerical method in Refs.~\cite{Nguyen19, Songqc2023}. Two-qubit entangled states admitting at least one pure steered state, hence steerable from the presented result, may not be detected by the aforementioned numerical approach. In addition, our approach with tangency points of a QSE in a Bloch sphere goes beyond the previous results in terms of the volume of a QSE~\cite{McCloskey17,KuHuanYu18}. For instance, our results show that a QSE with an arbitrary volume in a Bloch sphere implies EPR steering as far as a QSE has tangency points. 

We remark that our results establish the Gisin theorem~\cite{Gisin91} for EPR-steering. We recall the Gisin theorem for nonlocality addresses that all two-qubit pure states are Bell-nonlocal; the result has been generalized to arbitrary bipartite states and also to multipartite systems~\cite{GISIN92,Chen04,Choudhary10,LiFei10,YuSixia12}. Our results here show that any two-qubit entangled state that admits a pure steered state implies EPR steering. Hence, a pure steered state is a sufficient condition for EPR steering. As the Gisin theorem generalizes to arbitrary pure states, it would also be interesting to determine whether our results with a pure steered state hold in high-dimensional and multipartite quantum systems. 
The connection to Gisin’s theorem naturally suggests an interesting future direction: certifying single-party pure states via quantum steering.
The question also aligns with the application of the Gisin theorem for the certification of entangled states. We reiterate that Bell-nonlocality has been a tool to certify entangled states, e.g.~\cite{PhysRevLett.121.180503}. 

In addition, we observe that a class of Bell-diagonal states with two pure steered states is not only two-way steerable but also Bell-nonlocal, see Supplemental Material~\cite{Suppm}. While we have shown here that one pure steered state implies EPR steering, it would be interesting to determine whether two pure steered states imply Bell nonlocality. 

\emph{Acknowledgments}. This work was supported by the National Research Foundation of Korea (RS-2025-00561467) and the Institute for Information \& Communication Technology Promotion (RS-2023-00229524, RS-2025-02304540, RS-2025-25464876, RS-2025-25464616).

\emph{Note added}. After the completion of this work, we became aware of an independent related work by Zhang and Chen, which studies how boundary geometry turns entanglement into steering~\cite{zhang26boundary}. The two works were developed independently and share some results in common.

\bibliographystyle{apsrev4-2}
\bibliography{puresteering-ref}

\clearpage

\title{Supplemental Material: One pure steered state implies Einstein-Podolsky-Rosen steering }

\maketitle
\onecolumngrid

\tableofcontents

\vspace{1.0\baselineskip}

\section{The proof of Lemma~\ref{finitelhs}}\label{app:finitelhs}

For convenience, we restate the lemma here.

\setcounter{lemma}{0}
\begin{lemma}[Restated]
Let $\rho_{AB}$ be a two-qubit entangled state. If Bob's QSE
$\mathcal E_B$ is tangent to his Bloch sphere $\mathcal S_B$ at least
at one point, then no local hidden–state model with finitely many hidden states can
reproduce all of Bob's conditional states. 
\end{lemma}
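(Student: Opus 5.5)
We argue by contradiction and use only the pure steered state at the tangency point, together with conditional states near it. Suppose an LHS model with finitely many hidden states $\rho_{\lambda_1},\dots,\rho_{\lambda_m}$, weights $\mu_i>0$ and response functions $p(\pm|s,\lambda_i)$ reproduces $\Sigma_B$. Every conditional state $\rho_{\pm|s}$ is then a convex combination of the $\rho_{\lambda_i}$. So the whole QSE surface $\partial\mathcal{E}_B$, and hence the convex body $\mathcal{E}_B$, lies inside the polytope $P=\mathrm{conv}\{\boldsymbol{\lambda}_1,\dots,\boldsymbol{\lambda}_m\}\subset\mathcal{B}$, where $\boldsymbol{\lambda}_i$ are the Bloch vectors of the hidden states. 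Since the probabilities are nonzero, a conditional state can only involve hidden states with $p(\pm|s,\lambda_i)>0$. Also $\sum_i\mu_i\rho_{\lambda_i}=\rho_B$.

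The first step handles the tangency point $\mathsf{p}$, say $\mathsf{p}=(0,0,-1)$. The corresponding pure steered state $\rho_\mathsf{p}$ occurs with nonzero probability. A pure state cannot be written as a nontrivial mixture, so every hidden state entering $\sigma_\mathsf{p}$ equals $\rho_\mathsf{p}$; this is the argument already cited from Nguyen16. Hence $\mathsf{p}$ is one of the vertices $\boldsymbol{\lambda}_i$ of $P$. It is an extreme point of $\mathcal{B}$, and $P$ touches $\mathcal{S}_B$ there.

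The second step, which is the key geometric one, compares the smooth ellipsoid with the polytope at $\mathsf{p}$. The QSE has nonzero volume and is tangent to $\mathcal{S}_B$ at $\mathsf{p}$, so near $\mathsf{p}$ its boundary is a smooth surface with tangent plane $z=-1$. Its points approach $\mathsf{p}$ along every direction of that tangent plane. Concretely, for each unit vector $\boldsymbol{t}\perp\hat{z}$ there are points $\mathsf{e}(\epsilon)\in\partial\mathcal{E}_B$ with $\mathsf{e}(\epsilon)=\mathsf{p}+\epsilon\boldsymbol{t}+O(\epsilon^2)$. On the other hand, $P\subset\mathcal{B}$ is a polytope with vertex $\mathsf{p}$. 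Its tangent cone at $\mathsf{p}$ is the finitely generated cone $K=\mathrm{cone}\{\boldsymbol{\lambda}_i-\mathsf{p}\}$. Every other vertex $\boldsymbol{\lambda}_i\neq\mathsf{p}$ lies in the ball, so $(\boldsymbol{\lambda}_i-\mathsf{p})\cdot\hat z>0$. Otherwise $\boldsymbol{\lambda}_i$ would lie on the plane $z=-1$, which meets $\mathcal{B}$ only at $\mathsf{p}$.

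Hence $K$ is a closed polyhedral cone contained in the open half-space $\{z>0\}\cup\{0\}$. It contains no nonzero horizontal direction, and there is a constant $c>0$ with $\boldsymbol{x}\cdot\hat z\ge c|\boldsymbol{x}|$ for all $\boldsymbol{x}\in K$. Since $P\subset\mathsf{p}+K$, every point of $P$ satisfies $z+1\ge c\,|\boldsymbol{x}-\mathsf{p}|$. The curve $\mathsf{e}(\epsilon)$ instead satisfies $z+1=O(\epsilon^2)$ while $|\mathsf{e}-\mathsf{p}|\approx\epsilon$. So for small $\epsilon$ the point $\mathsf{e}(\epsilon)\notin P$, contradicting $\mathcal{E}_B\subset P$.

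The third step justifies the uniform bound $c>0$, which I expect to be the main obstacle. It follows because $K$ is generated by finitely many vectors, each with strictly positive $z$-component, so $c=\min_i(\boldsymbol{\lambda}_i-\mathsf{p})\cdot\hat z/|\boldsymbol{\lambda}_i-\mathsf{p}|$ works. Finiteness is used exactly here, and with infinitely many hidden states the minimum could be $0$. I would also check the degenerate situation where the ellipsoid is flat along some direction. Nonzero volume, which is implied by entanglement together with tangency, guarantees a genuinely two-dimensional tangent plane of $\partial\mathcal{E}_B$ at $\mathsf{p}$. In fact a single tangent direction $\boldsymbol{t}$ already suffices. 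This completes the contradiction.
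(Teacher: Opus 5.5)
Your proof is correct and follows the paper's route: conditional states are convex combinations of the finitely many hidden states, so $\mathcal{E}_B$ would have to lie in the polytope $P\subset\mathcal{B}_B$, and a polytope inside the ball cannot contain an ellipsoid tangent to the sphere. The paper only asserts this last geometric fact, illustrated by a figure; your tangent-cone bound $z+1\ge c\,|\boldsymbol{x}-\mathsf{p}|$ at the forced vertex $\mathsf{p}$ actually proves it and shows exactly where finiteness is used.
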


\begin{proof}
A qubit state represented by the Bloch vector in the Bloch ball $\mathcal{B}_B$ corresponds to a single point~\cite{nielsen2010}. A finite set of LHSs $\{\rho_\lambda\}_{\lambda=1}^n$ in the Bloch ball $\mathcal{B}_B$ forms a polyhedron 
\begin{align}
   P=\mathrm{conv}\{v_1,v_2,\cdots,v_n\}\subseteq \mathcal{B}_B,
\end{align}
where $v_\lambda\in\mathcal B_B$ is the Bloch vector of $\rho_\lambda$.
When the QSE $\mathcal{E}_B$ is tangent to the Bloch sphere $\mathcal{S}_B$ (the surface of the Bloch ball $\mathcal{B}_B$), such a polyhedron cannot full contain $\mathcal{E}_B$. Hence, for any given polyhedron $P$, there always exists a conditional state $\rho_\mathsf{g}$ represented by the point $\mathsf{g}\in \mathcal{E}_B$ such that $\mathsf{g} \notin P$ as illustrated schematically in Fig.~\ref{fig:polyhedron}. 
Equivalently, $\rho_\mathsf{g}$ cannot be reproduced as a convex
combination of the local hidden states.
\end{proof}

\begin{figure}[t]
    \centering
    \includegraphics[width=0.35\textwidth]{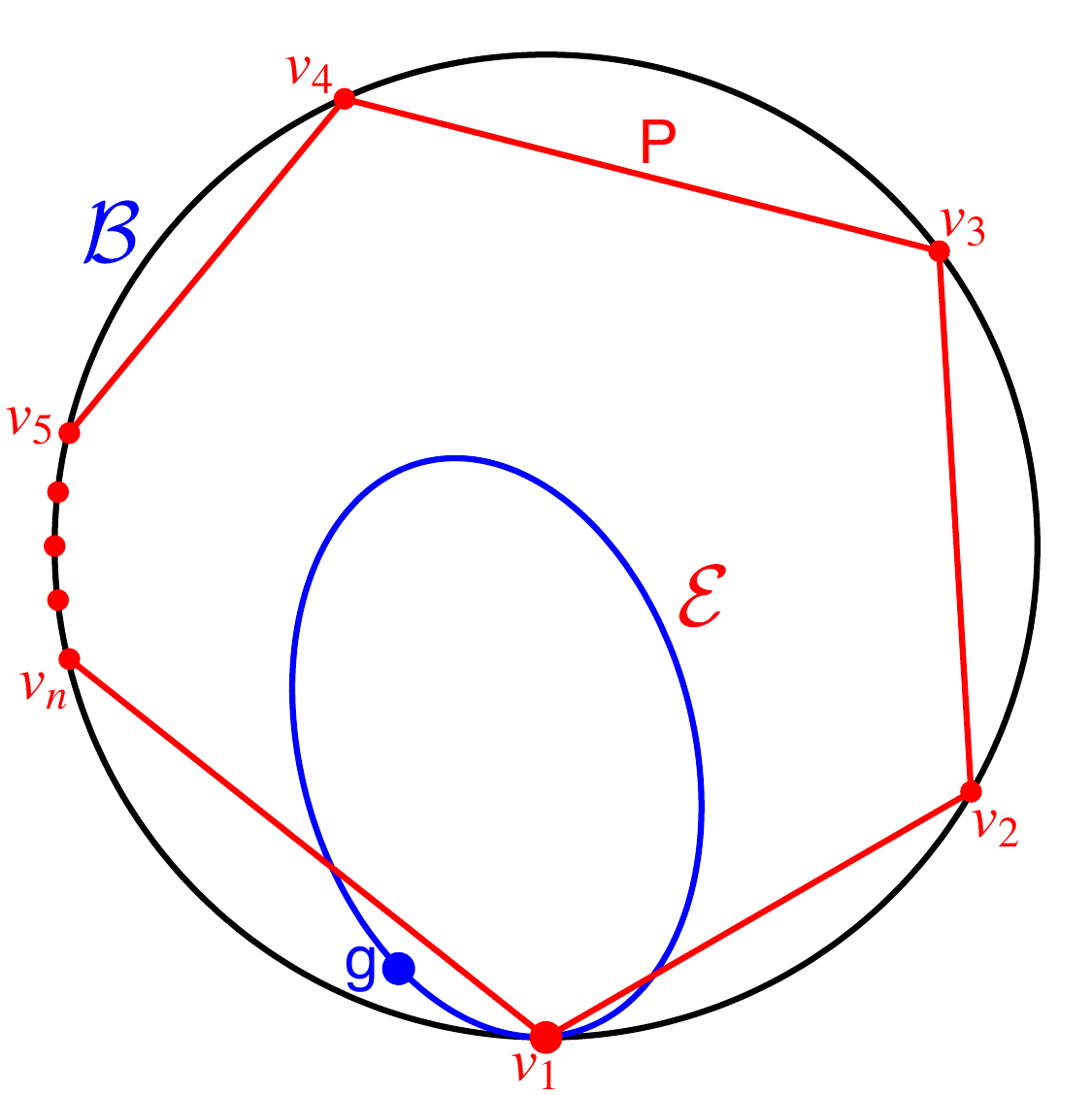}
    \caption{A polyhedron cannot fully contain a QSE $\mathcal{E}$ that is tangent to the surface of the Bloch ball $\mathcal{B}$.}
    \label{fig:polyhedron}
\end{figure}

\section{The proof of Theorem~\ref{onepsteering}}\label{app:onepsteering}

For convenience, we restate the theorem here.

\setcounter{theorem}{0}
\begin{theorem}[Restated]
For any quantum steering ellipsoid $\mathcal{E}_B$ with nonzero volume, the existence of exactly one point of tangency between $\mathcal{E}_B$ and Bob’s Bloch sphere $\mathcal{S}_B$ implies EPR steering from Alice to Bob.
\end{theorem}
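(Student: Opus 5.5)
We argue by contradiction, following the sketch in the main text but making the limiting argument quantitative. Suppose an LHS model $\{\mu_{\boldsymbol\lambda},\rho_{\boldsymbol\lambda},p(\pm|\hat{\boldsymbol n},\boldsymbol\lambda)\}$ reproduces $\Sigma_B$. By a rotation of Bob's frame we place the unique tangency point at $\mathsf p=(0,0,-1)$, so $\rho_\mathsf p=|1\rangle\langle1|$.

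\textbf{Step 1: the atom at $\mathsf p$.} The measurement whose chord passes through $\mathsf b$ and $\mathsf p$ gives $\sigma_\mathsf p=p_\mathsf p\rho_\mathsf p$ with $p_\mathsf p=|\mathsf{bs}|/|\mathsf{ps}|>0$. We use nonzero volume of $\mathcal E_B$ to ensure $\mathsf b$ is interior and hence $p_\mathsf p\in(0,1)$. Since a pure state admits only trivial convex decompositions, every hidden state contributing to $\sigma_\mathsf p$ equals $\rho_\mathsf p$. Thus $\mu$ has an atom of weight $\tilde p_\mathsf p\ge p_\mathsf p$ at $\mathsf p$, and the remaining mass lives on $\mathcal B\setminus\{\mathsf p\}$. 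By Lemma~\ref{finitelhs} this remainder cannot be finitely supported, but we only need that it is a measure $\nu$ on $\mathcal B\setminus\{\mathsf p\}$ of total mass $1-\tilde p_\mathsf p$.

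\textbf{Step 2: a family of measurements approaching $\mathsf p$.} We fix a plane through $\mathsf p$ and $\mathsf b$, and work with the circle $\mathsf C$ and ellipse $\mathsf E$ cut out there, described by $R,u,v,\theta$. For points $\mathsf e\in\mathsf E$ near $\mathsf p$, parametrized by the slope $k$ of $\mathsf{pe}$, the chord through $\mathsf b$ gives $\sigma_\mathsf e=p_\mathsf e\rho_\mathsf e$. As $k\to0$, we have $\mathsf e\to\mathsf p$ and $p_\mathsf e\to p_\mathsf p$ by Eq.~\eqref{ssprob}. The LHS decomposition gives $p_\mathsf e\mathsf e=p'_\mathsf p\mathsf p+p_\mathsf h\mathsf h$ with $\mathsf h$ the barycenter of $\nu$ weighted by $p(+|\hat{\boldsymbol n}_\mathsf e,\cdot)$, and $p'_\mathsf p\le\tilde p_\mathsf p$. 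The key geometric input is that $\mathsf e$ lies strictly inside the ball while $\mathsf p$ is on the sphere, and tangency forces $\mathsf E$ to lie inside $\mathsf C$ with curvature at $\mathsf p$ at least that of $\mathsf C$. The goal is to show that the distance from $\mathsf e$ to the sphere is of order $k^2$ with a constant strictly exceeding what the circle allows along the chord $\mathsf{pq}$. Since $\mathsf h\in\mathcal B$, Eq.~\eqref{ideal04} then forces $p_\mathsf h\ge p_\mathsf e|\mathsf{pe}|/|\mathsf{pq}|$, and an explicit expansion in $k$ using the ellipse equation $\boldsymbol x^\top Q^{-1}\boldsymbol x=1$ gives the bound \eqref{ideal05}: the lower bound on $p_\mathsf h$ stays bounded away from $0$, with the constant $G/(\sqrt2RW_0)\,p_\mathsf p$.

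\textbf{Step 3: the contradiction.} We now argue that $p_\mathsf h\to0$. Because $\rho_\mathsf e\to\rho_\mathsf p$, a pure state, and $\sigma_\mathsf e$ contains $p_\mathsf h\rho_\mathsf h$, the fidelity $\langle0|\sigma_\mathsf e|0\rangle=p_\mathsf e(1+e_z)/2\to0$. Each hidden state with Bloch vector $\boldsymbol\lambda\ne\mathsf p$ contributes $(1+\lambda_z)/2>0$. Write $p_\mathsf h=\int p(+|\hat{\boldsymbol n}_\mathsf e,\boldsymbol\lambda)\,d\nu$, and split $\nu$ into the part with $\lambda_z>-1+\delta$ and the part inside a $\delta$-cap around $\mathsf p$. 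The first part has weight at most $2\langle0|\sigma_\mathsf e|0\rangle/\delta\to0$. The second part has weight at most $\nu(\text{cap}_\delta)$, which tends to $0$ as $\delta\to0$ because $\nu$ has no atom at $\mathsf p$, the atom having been separated out in Step~1. Hence $\limsup p_\mathsf h=0$, which contradicts Step~2.

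\textbf{Main obstacle.} The delicate point is Step~2: the linear relation only bounds $p_\mathsf h$ through $|\mathsf{pe}|/|\mathsf{ph}|$, and both lengths vanish. I would instead compare second-order quantities, the depth of $\mathsf e$ below the sphere versus the maximal depth available to $\mathsf h$, using strict containment of $\mathsf E$ in $\mathsf C$ near $\mathsf p$. This strictness should follow from uniqueness of the tangency point together with nonzero volume, but the degenerate case where the curvatures of $\mathsf E$ and $\mathsf C$ coincide at $\mathsf p$ may require choosing the plane or the direction of approach carefully, or going to higher order in $k$. I expect this curvature comparison, and checking that the ratio yields a positive constant like $G/(\sqrt2RW_0)$, to be the hardest part of the proof.
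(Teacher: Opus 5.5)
Your proposal is correct and follows the paper's proof: an atom of weight $\tilde p_\mathsf{p}\ge p_\mathsf{p}$ at $\mathsf p$, the chord bound $p_\mathsf{h}\ge p_\mathsf{e}|\mathsf{pe}|/|\mathsf{pq}|$ with a positive limit, and a contradiction with $p_\mathsf{h}\to 0$ as $\mathsf e\to\mathsf p$. Your Step~3 is actually a more rigorous justification of that last step than the paper's ``the volume of $\mathcal B^+$ shrinks'' argument: you separate the atom so that $\nu(\{\mathsf p\})=0$ and then bound the mass outside a $\delta$-cap using $\langle 0|\sigma_\mathsf{e}|0\rangle\to0$, which also covers hidden-state measures concentrated on the sphere.

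The ``main obstacle'' you flag is not real. Both $|\mathsf{pe}|$ and $|\mathsf{pq}|$ are first order in $k$, and their ratio tends to $\rho_\mathsf{E}/R$, where $\rho_\mathsf{E}=u^2v^2/W_0^{3/2}>0$ is the radius of curvature of the nondegenerate ellipse $\mathsf E$ at $\mathsf p$. Equivalently, the limit is $G/(\sqrt2 RW_0^2)$; the constant you quote from the text is missing a factor $1/W_0$, which does not affect positivity. If the curvatures of $\mathsf E$ and $\mathsf C$ coincide, the ratio is simply $1$, so no second-order depth comparison and no strict containment of $\mathsf E$ in $\mathsf C$ is needed.
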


\begin{proof}
As illustrated in Fig.~\ref{fig:Threeplhs}, we consider the case in which the QSE $\mathcal{E}$ is tangent to Bob’s Bloch sphere $\mathcal{S}$ at the point $\mathsf{p}=(0,0,-1)$, without loss of generality. The Bloch vector $\boldsymbol{b}$ of Bob’s reduced state $\rho_B$ is represented by the point $\mathsf{b}$ within $\mathcal{E}$, and other states and their corresponding Bloch vectors are represented analogously.

If an LHS model existed, then the local hidden states would be able to reproduce the states steered on Bob’s side for any measurement that Alice performs, together with the corresponding probabilities.
We first consider the case where Alice performs a projective measurement described by the effects
\begin{equation}
E_{\pm|\hat{\boldsymbol{n}}_{\mathsf{p}}}=\frac{1}{2}(I\pm \hat{\boldsymbol{n}}_{\mathsf{p}} \cdot \boldsymbol{\sigma})
\end{equation}
which can steer Bob's system into a pure state $\rho_\mathsf{p}$ with some nonzero probability $p_\mathsf{p}$ and a mixed state $\rho_\mathsf{s}$ with probability $1-p_\mathsf{s}$. The steered states $\rho_\mathsf{p}$ and $\rho_\mathsf{s}$ on Bob’s side are represented on the QSE by the points $\mathsf{p}$ and $\mathsf{s}$, occurring with probabilities 
\begin{align}
    p_\mathsf{p}=\frac{|\mathsf{bs}|}{|\mathsf{ps}|},\quad
    p_\mathsf{s}=\frac{|\mathsf{pb}|}{|\mathsf{ps}|}, 
\end{align}
respectively, as illustrated in Fig.~\ref{fig:Threeplhs}. 
Since the conditional state $\rho_\mathsf{p}$
is pure, if an LHS model exists, it must include the pure state $\rho_\mathsf{p}$ with probability $\tilde{p}_\mathsf{p} \ge p_\mathsf{p}$~\cite{Nguyen16}. Therefore, $\sigma_\mathsf{p}$ can be reproduced as
\begin{align}\label{asse01}
   \sigma_\mathsf{p} = p_\mathsf{p} \rho_\mathsf{p} \overset{\text{LHS}}{=} p_\mathsf{p}|1\rangle\langle 1|.
\end{align}
By Lemma~\ref{finitelhs}, it is impossible for any finite local hidden states to reproduce all of Bob’s conditional states in the Scenario~\ref{Scenario}. 
Hence, we consider an LHS model in which the state $\rho_\mathsf{p}$ appears with some probability $\bar{p}_\mathsf{p}$, together with the remaining LHSs distributed in the Bloch ball $\mathcal{B}$ (or on the Bloch sphere $\mathcal{S}$) according to a distribution $\mu_{\boldsymbol{\lambda}}$ to reproduce the steered state
\begin{align}\label{asse02}
   \sigma_\mathsf{s} = p_\mathsf{s} \rho_\mathsf{s} \overset{\text{LHS}}{=} \bar{p}_\mathsf{p} \rho_\mathsf{p}  + \int_{\mathcal{B}} \mu_{\boldsymbol{\lambda}} p(- |\hat{\boldsymbol{n}}_{\mathsf{p}},{\boldsymbol{\lambda}})\rho_{\boldsymbol{\lambda}}  \, dV, 
\end{align}
where $ \bar{p}_\mathsf{p} = \tilde{p}_\mathsf{p} - p_\mathsf{p}$. Note that if the LHSs are distributed only on the surface of the Bloch sphere $\mathcal{S}$, 
the volume element $dV$ should be replaced by the surface element $dS$, 
the integration domain changes from the Bloch ball $\mathcal{B}$ 
to the Bloch sphere $\mathcal{S}$,
and the Bloch vector of the LHS changes from $\boldsymbol{\lambda}$ to a unit vector $\hat{\boldsymbol{\lambda}}$.
Normalization is ensured by the condition 
\begin{align}\label{probcond}
    \tilde{p}_\mathsf{p} + \int_{ \mathcal{B}} \mu_{\boldsymbol{\lambda}} \, dV = 1.  
\end{align}
This guarantees the correct reduced state for Bob
\begin{align}\label{statecond}
     \tilde{p}_\mathsf{p} \rho_\mathsf{p} + \int_{ \mathcal{B}} \mu_{\boldsymbol{\lambda}} \rho_{\boldsymbol{\lambda}} \, dV=\rho_B.
\end{align}
Satisfying Eqs.~(\ref{asse01}-\ref{statecond}) is a necessary condition for the existence of an LHS model.

We next consider projective measurements along arbitrary directions $\hat{\boldsymbol{n}}_\mathsf{e}$, as shown in Fig.~\ref{fig:Threeplhs}.
The corresponding measurement effects are given by
\begin{equation}\label{projector}
E_{\pm|s}=\frac{1}{2}(I\pm\hat{\boldsymbol{n}}_\mathsf{e} \cdot \boldsymbol{\sigma}).
\end{equation}
For $r=\pm1$, the steered states $\rho_{\mathsf{e}}$ and $\rho_{\mathsf{f}}$ are represented by the points $\mathsf{e}$ and $\mathsf{f}$ on the surface of the QSE $\mathcal{E}$, respectively; it suffices to consider $r=+1$, as the case $r=-1$ follows directly from $\sigma_{\mathsf{e}}+\sigma_{\mathsf{f}}=\rho_{B}$.
In order to reproduce the unnormalized state $\sigma_\mathsf{e}$, the model requires the pure LHS state $\rho_\mathsf{p}$ with some probability $p_\mathsf{p}^\prime$ ($0 \leq p_\mathsf{p}^\prime \leq \tilde{p}_\mathsf{p}$), together with a subset of LHSs distributed within the Bloch ball $\mathcal{B}$ to account for the remaining contribution. This subset of LHSs is specified by the response function $p(+|\hat{\boldsymbol{n}}_\mathsf{e},{\boldsymbol{\lambda}})$, defining a region $\mathcal{B}^+\subset \mathcal{B}$, shown as the blue region in Fig.~\ref{fig:Threeplhs}. 
The predicted state $\sigma_\mathsf{e}$ is then given by
\begin{align}
   \sigma_\mathsf{e} = p_\mathsf{e} \rho_\mathsf{e} 
   &\overset{\text{LHS}}{=} p_\mathsf{p}^\prime\rho_\mathsf{p} +\int_{\mathcal{B}^+} \mu_{\boldsymbol{\lambda}} \rho_{\boldsymbol{\lambda}} \, dV.   
\end{align}
The corresponding points in the Bloch ball $\mathcal{B}$ satisfy the following equation
\begin{align}
   &p_\mathsf{e} \mathsf{e}   
   = p_\mathsf{p}^\prime\mathsf{p} +  p_\mathsf{h} \mathsf{h},
\end{align}
where 
\begin{align}\label{pointh}
   &\mathsf{h} := \frac{\int_{\mathcal{B}^+} \mu_{\boldsymbol{\lambda}} {\boldsymbol{\lambda}} \, dV}{\int_{\mathcal{B}^+} \mu_{\boldsymbol{\lambda}} \, dV},\quad 
   p_\mathsf{h} := \int_{\mathcal{B}^+} \mu_{\boldsymbol{\lambda}}  \, dV.
\end{align}
If an LHS model exists, the following conditions must be satisfied:
\begin{align}
   & 0 \leq p_\mathsf{p}^\prime \leq \tilde{p}_\mathsf{p},\\
   &p_\mathsf{e} = p_\mathsf{p}^\prime + p_\mathsf{h},\\
   & \frac{p_\mathsf{h}}{p_\mathsf{e}} = \frac{|\mathsf{pe}|}{|\mathsf{ph}|},\label{cond3}\\
   &\frac{p_\mathsf{p}^\prime}{p_\mathsf{e}} = \frac{|\mathsf{eh}|}{|\mathsf{ph}|}.
\end{align}
We now examine whether these conditions hold for projective measurements along arbitrary directions $\hat{\boldsymbol{n}}_\mathsf{e}$.
Note that the three points $\mathsf{p}$, $\mathsf{b}$, and $\mathsf{e}$ uniquely determine a plane. The intersections of this plane with the Bloch sphere $\mathcal{S}$ and with the QSE $\mathcal{E}$ are denoted by a circle $\mathsf{C}$ and an ellipse $\mathsf{E}$, respectively (see Fig.~\ref{fig:Threeplhs}). 
Their centers are denoted by $\mathsf{o}$ and $\mathsf{c}$, the radius of $\mathsf{C}$ by $R$, and the semi-axes of $\mathsf{E}$ by $u$ and $v$. The angle between the semi-axis 
$v$ and the line segment $\mathsf{op}$ is denoted by $\theta$. 
Let us assume that the slope of the line $\mathsf{pe}$ is $k$.
We find that, aside from the point $\mathsf{p}$, the intersections of the line $\mathsf{pe}$ with the circle $\mathsf{C}$ and the ellipse $\mathsf{E}$ are given by points 
\begin{align}
    \mathsf{q} = \left(\frac{2 k R}{1+k^2},\frac{2k^2R}{1+k^2}\right),\ \mathsf{e} = \left(\frac{\sqrt{2} kG}{W_0W_k}, \frac{\sqrt{2} k^2G}{W_0W_k}\right),
\end{align}
where
\begin{equation}\label{gw0wk}
   \begin{aligned}
   & G  =  u^2 v^2 \sqrt{u^2+v^2 + \left(v^2-u^2\right)\cos (2 \theta )},\\
   & W_0  = u^2 \sin ^2(\theta )+v^2 \cos ^2(\theta ),\\
   & W_k  =  \left(k^2 v^2+u^2\right)\sin ^2(\theta )+ \left(k^2 u^2+v^2\right)\cos ^2(\theta )
   +k  \left(v^2-u^2\right)\sin (2 \theta ).
   \end{aligned}
\end{equation}
Since $|\mathsf{ph}|<|\mathsf{pq}|$ as shown in Fig.~\ref{fig:Threeplhs}, we directly obtain the following inequality:
\begin{align}
  \frac{|\mathsf{pe}|}{|\mathsf{ph}|} > \frac{|\mathsf{pe}|}{|\mathsf{pq}|} = \frac{\left(1+k^2\right) G}{W_0W_k}. 
\end{align}
Therefore, for Eq.~\eqref{cond3} to be satisfied, the following inequality must hold for any $k>0$:
\begin{align}\label{probhcond}
  {p_\mathsf{h}} > \frac{\left(1+k^2\right) G}{W_0W_k}\frac{|\mathsf{bf}|}{|\mathsf{ef}|}.
\end{align}
We may regard $p_\mathsf{h}$ as a function of the slope $k$, in the limit  $k\to 0$, we obtain
\begin{align}\label{limithk1}
  \lim_{k \to 0} {p_\mathsf{h}}(k) >  \lim_{k \to 0} \frac{\left(1+k^2\right) G}{W_0W_k}\frac{|\mathsf{bf}|}{|\mathsf{ef}|}
   = \frac{G}{\sqrt{2}RW_0}p_\mathsf{p}>0. 
\end{align}

However, in the limit $k\to 0$, the point $\mathsf{h}$ necessarily approaches the point $\mathsf{p}$, implying that the volume of $\mathcal{B}^+$ must shrink correspondingly and vanish in this limit. We have
\begin{align}\label{limithk2}
  & \lim_{V(\mathcal{B}^{+}) \to 0} p_\mathsf{h}  \overset{\text{LHS}}{=} \lim_{V(\mathcal{B}^{+}) \to 0} \int_{\mathcal{B}} \mu_{\boldsymbol{\lambda}} p(+|\hat{\boldsymbol{n}}_\mathsf{e},{\boldsymbol{\lambda}}) \, dV = 0
\end{align}
for any response function $p(+|\hat{\boldsymbol{n}}_\mathsf{e},{\boldsymbol{\lambda}})$ and any probability distribution $\mu_{\boldsymbol{\lambda}}$ supported on the Bloch ball $\mathcal{B}$. 
Hence, Eq.~\eqref{limithk2} is in contradiction with Eq.~\eqref{limithk1}, and they cannot be satisfied simultaneously.

More specifically, if Eq.~\eqref{limithk2} holds, then for any choice of $\mathcal{B}^+$ (i.e., for any $p(+|\hat{\boldsymbol{n}}_\mathsf{e},{\boldsymbol{\lambda}})$ and any $\mu_{\boldsymbol{\lambda}}$), there exists a critical value $k^\star$ such that inequality
\begin{align}\label{nonsuff}
  \int_{\mathcal{B}^+(0<k \leq k^\star)} \mu_{\boldsymbol{\lambda}}  \, dV \leq \frac{G}{\sqrt{2}RW_0}p_\mathsf{p}
\end{align}
is satisfied. This implies that the LHSs contained in $\mathcal{B}^+$ with $0<k \leq k^\star$ cannot provide sufficient probability to reproduce the unnormalized state $\sigma_\mathsf{e}$. Here, $k^\star$ denotes the critical value at which the inequality is saturated.
\end{proof}

%\clearpage

\section{The proof of Theorem~\ref{twopsteering}}\label{app:twopsteering}

\setcounter{theorem}{1}
\begin{theorem}[Restated]
For any quantum steering ellipsoid $\mathcal{E}_B$ with nonzero volume, the existence of exactly two points of tangency between $\mathcal{E}_B$ and Bob’s Bloch sphere $\mathcal{S}_B$ implies EPR steering from Alice to Bob.
\end{theorem}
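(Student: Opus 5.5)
The plan is to reduce the two-tangency case to the local argument already used for Theorem~\ref{onepsteering}, since that argument only looks at a neighbourhood of one tangency point. Let the tangency points be $\mathsf{p}_1$ and $\mathsf{p}_2$ on $\mathcal{S}_B$, with corresponding pure steered states $\rho_{\mathsf{p}_1}$ and $\rho_{\mathsf{p}_2}$. By the one-to-one correspondence between $\{\mathcal{E}_B,\boldsymbol{b}\}$ and $\Sigma_B$, each $\mathsf{p}_i$ is reached by the projective measurement whose chord passes through $\mathsf{b}$. It occurs with probability $p_{\mathsf{p}_i}=|\mathsf{b}\mathsf{s}_i|/|\mathsf{p}_i\mathsf{s}_i|>0$, where $\mathsf{s}_i$ is the opposite endpoint of that chord. (If $\mathsf{p}_1$, $\mathsf{b}$, $\mathsf{p}_2$ are collinear, then $\mathsf{s}_1=\mathsf{p}_2$ and both pure states come from the same measurement; this does not affect the argument.) Lemma~\ref{finitelhs} already excludes LHS models with finitely many hidden states, so any candidate model must contain a continuous part $\mu_{\boldsymbol{\lambda}}$ on $\mathcal{B}$. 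It must also contain atoms $\rho_{\mathsf{p}_1}$ and $\rho_{\mathsf{p}_2}$ with weights $\tilde p_{\mathsf{p}_i}\ge p_{\mathsf{p}_i}$, by the same purity argument of Ref.~\cite{Nguyen16}. The normalization and reduced-state conditions take the form of Eqs.~\eqref{probcond}--\eqref{statecond}, with two atoms in place of one.

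Next, I would fix the tangency point $\mathsf{p}_1$ and rotate so that $\mathsf{p}_1=(0,0,-1)$. I would then repeat the construction of Theorem~\ref{onepsteering}. Take a measurement $\hat{\boldsymbol{n}}_\mathsf{e}$ whose steered point $\mathsf{e}$ lies near $\mathsf{p}_1$. Work in the plane through $\mathsf{p}_1$, $\mathsf{b}$, $\mathsf{e}$, so that the circle $\mathsf{C}$, the ellipse $\mathsf{E}$ and the slope $k$ are defined as before. The LHS decomposition of $\sigma_\mathsf{e}$ now reads
\begin{align}
p_\mathsf{e}\rho_\mathsf{e}=p'_{\mathsf{p}_1}\rho_{\mathsf{p}_1}+p'_{\mathsf{p}_2}\rho_{\mathsf{p}_2}+\int_{\mathcal{B}^+}\mu_{\boldsymbol{\lambda}}\rho_{\boldsymbol{\lambda}}\,dV .
\end{align}

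The new ingredient is to show that $p'_{\mathsf{p}_2}\to0$ as $k\to0$. Since $\rho_\mathsf{e}\to\rho_{\mathsf{p}_1}$, which is pure, and $\rho_{\mathsf{p}_2}$ is a different pure state, the fidelity argument gives $\langle\mathsf{p}_1^\perp|\rho_\mathsf{e}|\mathsf{p}_1^\perp\rangle\to0$. This forces $p'_{\mathsf{p}_2}\,|\langle\mathsf{p}_1^\perp|\mathsf{p}_2\rangle|^2\to0$. Here $\mathsf{p}_2\neq\mathsf{p}_1$ is used, so the overlap factor does not vanish. With the $\rho_{\mathsf{p}_2}$ contribution removed, the remaining mass splits into the atom at $\mathsf{p}_1$ and the barycenter $\mathsf{h}$ of $\mathcal{B}^+$. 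The geometric bound $|\mathsf{p}_1\mathsf{h}|<|\mathsf{p}_1\mathsf{q}|$ and the explicit formulas for $\mathsf{q}$ and $\mathsf{e}$ (with $G$, $W_0$, $W_k$ as in Eq.~\eqref{gw0wk}) then give the same lower bound as Eq.~\eqref{limithk1}: $\lim_{k\to0}p_\mathsf{h}\ge G p_{\mathsf{p}_1}/(\sqrt2 R W_0)>0$. The opposite conclusion, Eq.~\eqref{limithk2}, follows because $\mathsf{h}\to\mathsf{p}_1$ forces the mass of $\mathcal{B}^+$ to concentrate at a boundary point of the ball, and the continuous part assigns no mass to a single point. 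These two limits contradict each other, so no LHS model exists and the state is steerable from Alice to Bob.

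The main obstacle is the term involving the second atom. I must verify that its weight vanishes fast enough relative to $|\mathsf{p}_1\mathsf{e}|$ not to spoil the ratio identity Eq.~\eqref{cond3}. I would first try to bound $p'_{\mathsf{p}_2}$ linearly in $|\mathsf{p}_1\mathsf{e}|$, using the Bloch-vector component along $\mathsf{p}_1$. A further possible difficulty is that the two tangency points, together with $\mathsf{b}$, constrain the ellipse $\mathsf{E}$. I would check that the chosen plane can always be taken to avoid $\mathsf{p}_2$, for example by choosing the direction of approach generically, so that $W_0>0$ and the formulas above remain valid.
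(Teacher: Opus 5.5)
Your route differs from the paper's. The paper splits into cases. If both pure states come from one measurement, it cites Chen et al. If $p_\mathsf{p}+p_\mathsf{g}\ge 1$, it cites Nguyen et al. In the remaining case it keeps both atoms, restricts $\mathsf e$ to the plane through $\mathsf b,\mathsf p,\mathsf g$, and lower-bounds $p_\mathsf h/p_\mathsf e$ by triangle-area (barycentric) ratios, compared against the points $\mathsf w,\mathsf q$ on the line $\mathsf{ph}$. You instead localize at $\mathsf p_1$ and argue that the atom at $\mathsf p_2$ is negligible, so the one-point argument applies without a case split or external input. That strategy is viable and more uniform. However, the step you flag as the main obstacle is left open, and the fix you propose for it would not work.

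\textbf{The gap.} Knowing $p'_{\mathsf p_2}\to 0$, or even $p'_{\mathsf p_2}=O(|\mathsf p_1\mathsf e|)$, is not enough.
\begin{itemize}
\item The one-point bound is a ratio of two first-order quantities, $|\mathsf p_1\mathsf e|/|\mathsf p_1\mathsf q|$.
\item In $p_\mathsf e(\mathsf e-\mathsf p_1)=p'_{\mathsf p_2}(\mathsf p_2-\mathsf p_1)+p_\mathsf h(\mathsf h-\mathsf p_1)$, an atom of weight of order $t:=|\mathsf p_1\mathsf e|$ can absorb the whole first-order tangential displacement of $\mathsf e$.
\item After that, nothing prevents $p_\mathsf h\to 0$.
\end{itemize}

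\textbf{What closes it.} You need $p'_{\mathsf p_2}=o(t)$, and your own fidelity inequality gives $O(t^2)$ once tangency is used quantitatively. Take the component along $\mathsf p_1$, which is twice your quantity $\langle\mathsf p_1^\perp|\cdot|\mathsf p_1^\perp\rangle$:
\[
p_\mathsf{e}\,(1-\mathsf{e}\cdot\mathsf{p}_1)=p'_{\mathsf{p}_2}\,(1-\mathsf{p}_2\cdot\mathsf{p}_1)+p_\mathsf{h}\,(1-\mathsf{h}\cdot\mathsf{p}_1).
\]
All terms on the right are nonnegative. Moreover $1-\mathsf e\cdot\mathsf p_1=O(t^2)$, because $\mathcal E_B$ touches $\mathcal S_B$ at $\mathsf p_1$ with bounded curvature; this is where nonzero volume enters. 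Hence $p'_{\mathsf p_2}=O(t^2)$.

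\textbf{A second repair.} You cannot reuse the planar formulas for $\mathsf q$ verbatim. The point $\mathsf h$ now lies on the line through $\mathsf p_1$ and $(p_\mathsf e\mathsf e-p'_{\mathsf p_2}\mathsf p_2)/(p_\mathsf e-p'_{\mathsf p_2})$. That point is displaced from $\mathsf e$ by $O(t^2)$, which is as large as the depth of $\mathsf e$ itself. A clean replacement:
\begin{itemize}
\item From the display, $p_\mathsf h|\mathsf h-\mathsf p_1|^2\le 2p_\mathsf h(1-\mathsf h\cdot\mathsf p_1)=O(t^2)$.
\item The tangential component gives $p_\mathsf h|\mathsf h-\mathsf p_1|\ge p_\mathsf e t-O(t^2)$.
\item Dividing, $p_\mathsf h$ stays bounded away from zero and $|\mathsf h-\mathsf p_1|=O(t)$.
\end{itemize}
Your concentration argument then finishes the proof: $\mathsf p_1$ is an exposed point of the ball, and the model has no mass at $\mathsf p_1$ outside the atom.

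Two further remarks. If you run this estimate on everything in the model except the atom at $\mathsf p_1$, you need no separate treatment of $\rho_{\mathsf p_2}$ at all. Your worry about choosing the plane to avoid $\mathsf p_2$ is also moot, since the estimate is three-dimensional.
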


\begin{proof}
Without loss of generality, we assume that the QSE $\mathcal{E}$ is tangent to Bob’s Bloch sphere $\mathcal{S}$ at exactly two points, namely $\mathsf{p}=(0,0,-1)$ and an arbitrary point $\mathsf{g}$. 
There are two distinct classes. In the first, the two points (i.e., the two pure steered states) can both be obtained from a single projective measurement. Together with another inequivalent projective measurement, this suffices to demonstrate EPR steering from Alice to Bob~\cite{chenjl2013}.
The second class corresponds to the scenario where the two tangency points (i.e., the two pure steered states) cannot arise from a single projective measurement. We therefore consider two projective measurements performed by Alice, described by the effects
\begin{subequations}\label{pmeffect2}
\begin{align}
&E_{\pm|\hat{\boldsymbol{n}}_{\mathsf{p}}}=\frac{1}{2}(I\pm \hat{\boldsymbol{n}}_{\mathsf{p}} \cdot \boldsymbol{\sigma}),\\
&E_{\pm|\hat{\boldsymbol{n}}_{\mathsf{g}}}=\frac{1}{2}(I\pm \hat{\boldsymbol{n}}_{\mathsf{g}} \cdot \boldsymbol{\sigma}),
\end{align}
\end{subequations}
which can steer Bob's system to the pure states $\rho_\mathsf{p}$ and $\rho_\mathsf{g}$ with some nonzero probabilities 
\begin{align}
    p_\mathsf{p}=\frac{|\mathsf{bs}|}{|\mathsf{ps}|},\quad
    p_\mathsf{g}=\frac{|\mathsf{bt}|}{|\mathsf{gt}|}, 
\end{align}
respectively, as illustrated in Fig.~\ref{fig:Fourplhs}. 
If an LHS model exists, it must contain pure states $\rho_\mathsf{p}$ and $\rho_\mathsf{g}$ with probabilities $\tilde{p}_\mathsf{p} \geq p_\mathsf{p}$ and $\tilde{p}_\mathsf{g} \geq p_\mathsf{g}$, respectively~\cite{Nguyen2017}. 
If $ p_\mathsf{p} + p_\mathsf{g} > 1$, then the two-qubit state is steerable from Alice to Bob~\cite{Nguyen2017}. If $ p_\mathsf{p}+p_\mathsf{g} = 1$, the two-qubit state is still steerable from Alice to Bob, since the states $\sigma_\mathsf{s}$ and $\sigma_\mathsf{t}$,   steered by $E_{-|\hat{\boldsymbol{n}}_{\mathsf{p}}}$ 
and $E_{-|\hat{\boldsymbol{n}}_{\mathsf{p}}}$ in Eqs.~\eqref{pmeffect2}, cannot be generated by any LHS model. 
We now consider the case $p_\mathsf{p}+p_\mathsf{g}<1$. The proof proceeds along the same lines as that of Theorem~\ref{onepsteering}. According to Lemma~\ref{finitelhs}, it is impossible for any finite set of local hidden states to reproduce all of Bob’s conditioned states. We consider an LHS model in which, in addition to the pure LHSs $\rho_\mathsf{p}$ and $\rho_\mathsf{g}$, the remaining LHSs are distributed in the Bloch ball $\mathcal{B}$ (or on the Bloch sphere $\mathcal{S}$) according to a distribution $\mu_{\boldsymbol{\lambda}}$.
Then the conditional states $\sigma_\mathsf{s}$ and $\sigma_\mathsf{t}$ can be reproduced as follows:
\begin{align}\label{assemm01}
   &\sigma_\mathsf{s} =  p_\mathsf{s} \rho_\mathsf{s} \overset{\text{LHS}}{=} \bar{p}_\mathsf{p} \rho_\mathsf{p} + \tilde{p}_\mathsf{g} \rho_\mathsf{g} +\int_{\mathcal{B}} \mu_{\boldsymbol{\lambda}} p(- |\hat{\boldsymbol{n}}_{\mathsf{p}},{\boldsymbol{\lambda}})\rho_{\boldsymbol{\lambda}}  \, dV,\\
   &\sigma_\mathsf{t} = p_\mathsf{t} \rho_\mathsf{t} \overset{\text{LHS}}{=} \bar{p}_\mathsf{g} \rho_\mathsf{g} + \tilde{p}_\mathsf{p} \rho_\mathsf{p} + \int_{\mathcal{B}} \mu_{\boldsymbol{\lambda}} p(- |\hat{\boldsymbol{n}}_{\mathsf{g}},{\boldsymbol{\lambda}})\rho_{\boldsymbol{\lambda}}  \, dV,
\end{align}
where $\bar{p}_\mathsf{p} = \tilde{p}_\mathsf{p} - p_\mathsf{p}$ and $\bar{p}_\mathsf{g} = \tilde{p}_\mathsf{g} - p_\mathsf{g}$. Normalization is ensured by the condition 
\begin{align}\label{probcond2}
    \tilde{p}_\mathsf{p} + \tilde{p}_\mathsf{g} + \int_{ \mathcal{B}} \mu_{\boldsymbol{\lambda}} \, dV = 1.  
\end{align}
This guarantees the correct reduced state for Bob
\begin{align}\label{statecond2}
     \tilde{p}_\mathsf{p} \rho_\mathsf{p} + \tilde{p}_\mathsf{g} \rho_\mathsf{g} +  \int_{ \mathcal{B}} \mu_{\boldsymbol{\lambda}} \rho_{\boldsymbol{\lambda}} \, dV=\rho_B.
\end{align}
Satisfying Eqs.~(\ref{assemm01}-\ref{statecond2}) is a necessary condition for the existence of an LHS model.

\begin{figure*}[t]
    \centering
    \includegraphics[width=0.45\textwidth]{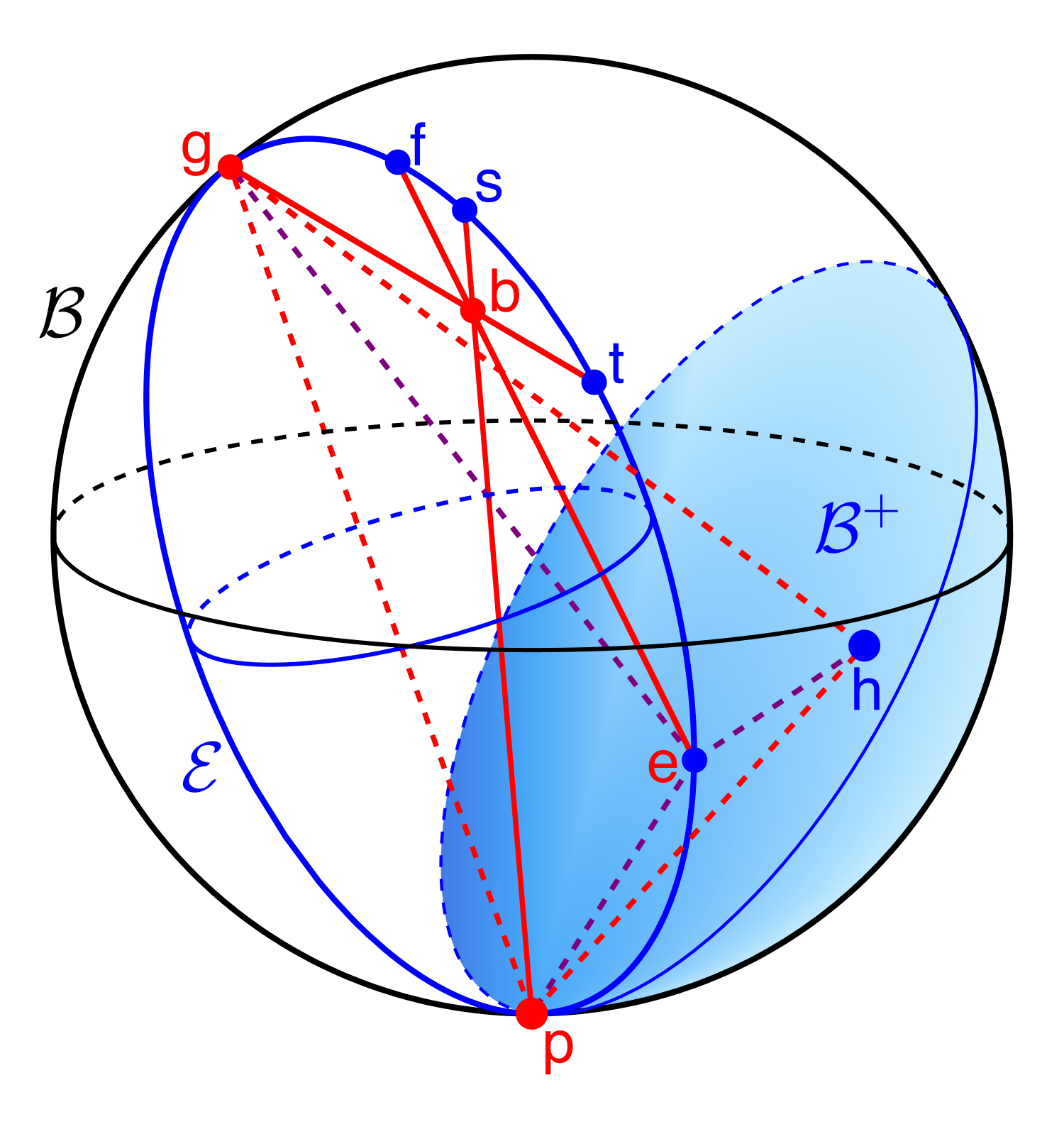}
    \hspace{0.05\textwidth}
    \includegraphics[width=0.45\textwidth]{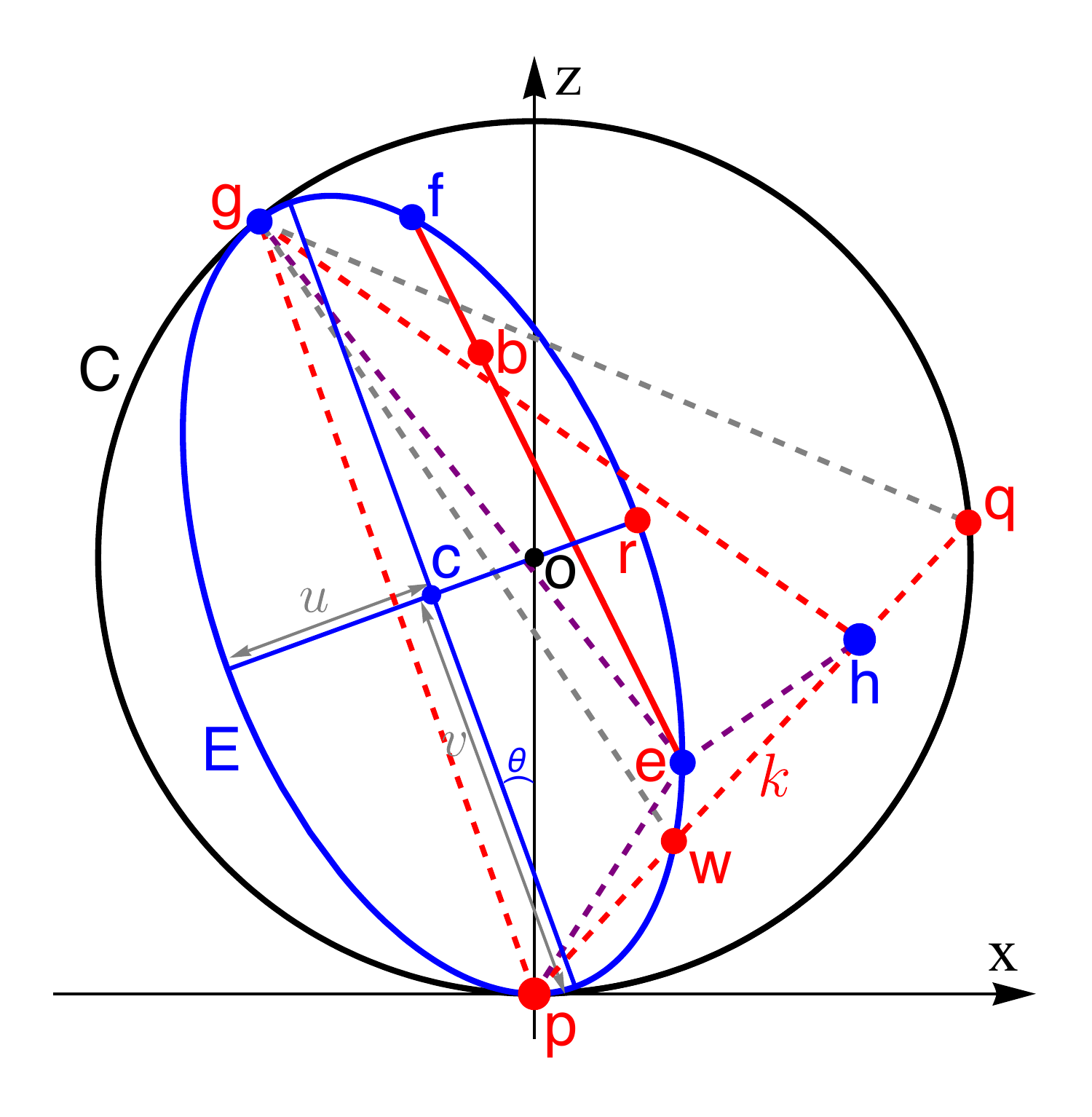}
    \caption{Illustration of Alice’s response function in a hypothetical LHS model for two pure steered states. Bob's QSE $\mathcal{E}$ is tangent to his Bloch sphere $\mathcal{S}$ (the surface of the Bloch ball $\mathcal{B}$) at the points $\mathsf{p}$ and $\mathsf{g}$.
    If an LHS model exists, 
    it must include the pure states $\rho_\mathsf{p}$ and $\rho_\mathsf{g}$ at points $\mathsf{p}$ and $\mathsf{g}$, respectively, with sufficient probabilities, together with other LHSs distributed over the Bloch ball $\mathcal{B}$ with probability density $\mu$. Any steered state $\sigma_\mathsf{e}$~\eqref{lhse3} at point $\mathsf{e}$ on the QSE $\mathcal{E}$ would then be reproduced by  $\rho_\mathsf{p}$ and $\rho_\mathsf{g}$ with some probabilities, together with LHSs within a region $\mathcal{B}^+$ specified by the response function $p(+|\hat{\boldsymbol{n}}_\mathsf{e},{\boldsymbol{\lambda}})$. We find that no LHS model exists in this scenario.}
    \label{fig:Fourplhs}
\end{figure*}

Note that the three points $\mathsf{p}$, $\mathsf{b}$, and $\mathsf{g}$ uniquely determine a plane. The intersections of this plane with the Bloch sphere $\mathcal{S}$ and with the QSE $\mathcal{E}$ are denoted by a circle $\mathsf{C}$ and an ellipse $\mathsf{E}$, respectively (see Fig.~\ref{fig:Fourplhs}). 
Their centers are denoted by $\mathsf{o}$ and $\mathsf{c}$, the radius of $\mathsf{C}$ by $R$, and the semi-axes of $\mathsf{E}$ by $u$ and $v$. The angle between the semi-axis 
$v$ and the line segment $\mathsf{op}$ is denoted by $\theta$. 

We next consider projective measurements along arbitrary directions $\hat{\boldsymbol{n}}_\mathsf{e}$ in the plane defined by three points $\mathsf{b}$, $\mathsf{p}$, and $\mathsf{g}$, as shown in Fig.~\ref{fig:Fourplhs}.
The corresponding measurement effects are given in Eq.~\eqref{projector}.
For $r=\pm1$, the steered states are $\rho_\mathsf{e}$ and $\rho_\mathsf{f}$, represented by the points $\mathsf{e}$ and $\mathsf{f}$ on the boundary of the ellipse $\mathsf{E}$, respectively. It suffices to consider the case $r=+1$, since the case $r=-1$ follows from $\sigma_\mathsf{e}+\sigma_\mathsf{f}=\rho_B$.
To reproduce the unnormalized state $\sigma_\mathsf{e}$, the model requires the pure LHSs $\rho_\mathsf{p}$ and $\rho_\mathsf{g}$ with probabilities $p_\mathsf{p}^\prime$ ($0 \le p_\mathsf{p}^\prime \le \tilde{p}_\mathsf{p}$) and $p_\mathsf{g}^\prime$ ($0 \le p_\mathsf{g}^\prime \le \tilde{p}_\mathsf{g}$), respectively, together with a subset of LHSs distributed within the Bloch ball $\mathcal{B}$ accounting for the remaining contribution. This subset is specified by the response function $p(+|\hat{\boldsymbol{n}}_\mathsf{e},{\boldsymbol{\lambda}})$, which defines a region $\mathcal{B}^+\subset \mathcal{B}$ (the blue region in Fig.~\ref{fig:Fourplhs}). The state $\sigma_\mathsf{e}$ can then be written as
\begin{align}\label{lhse3}
   \sigma_\mathsf{e} = p_\mathsf{e} \rho_\mathsf{e} 
   &\overset{\text{LHS}}{=} p_\mathsf{p}^\prime\rho_\mathsf{p} + p_\mathsf{g}^\prime\rho_\mathsf{g} + \int_{\mathcal{B}^+} \mu_{\boldsymbol{\lambda}} \rho_{\boldsymbol{\lambda}} \, dV.   
\end{align}
The corresponding points in the Bloch ball $\mathcal{B}$ satisfy
\begin{align}
   p_\mathsf{e} \mathsf{e} = p_\mathsf{p}^\prime\mathsf{p} + p_\mathsf{g}^\prime\mathsf{g} +  p_\mathsf{h} \mathsf{h},
\end{align}
where the point $\mathsf{h}$ and the corresponding probability $p_\mathsf{h}$ are defined in the same way as in Eq.~\eqref{pointh}.
If an LHS model exists, the following conditions must be satisfied
\begin{align}
   & 0 \leq p_\mathsf{p}^\prime \leq \tilde{p}_\mathsf{p},\\
   & 0 \leq p_\mathsf{g}^\prime \leq \tilde{p}_\mathsf{g},\\
   &p_\mathsf{e} = p_\mathsf{p}^\prime + p_\mathsf{g}^\prime +  p_\mathsf{h},\\
   &\frac{p_\mathsf{p}^\prime}{p_\mathsf{e}} = \frac{|\triangle \mathsf{egh}|}{|\triangle \mathsf{pgh}|},\\
   &\frac{p_\mathsf{g}^\prime}{p_\mathsf{e}} = \frac{|\triangle \mathsf{eph}|}{|\triangle \mathsf{pgh}|},\\
   & \frac{p_\mathsf{h}}{p_\mathsf{e}} = \frac{|\triangle \mathsf{epg}|}{|\triangle \mathsf{pgh}|}.\label{twopcond3}
\end{align}
We now examine whether these conditions hold for any possible projective measurement along arbitrary directions $\hat{\boldsymbol{n}}_\mathsf{e}$.
Let us assume that the slope of the line $\mathsf{ph}$ is $k$.
We find that, aside from the point $\mathsf{p}$, the intersections of the line $\mathsf{ph}$ with the circle $\mathsf{C}$ and the ellipse $\mathsf{E}$ are given by the points
\begin{align}
    \mathsf{q} = \left(\frac{2 k R}{1+k^2},\frac{2k^2R}{1+k^2}\right),\quad
    \mathsf{w} = \left(\frac{\sqrt{2} kG}{W_0W_k}, \frac{\sqrt{2} k^2G}{W_0W_k}\right),
\end{align}
where $G$, $W_0$, and $W_k$ are defined in Eqs.~\eqref{gw0wk}.
We are particularly interested in the steered states located on the boundary of the ellipse along the shortest arc of the ellipse $\mathsf{E}$ from the vertex $\mathsf{r}$ to the point of tangency $\mathsf{p}$, as shown in Fig.~\ref{fig:Fourplhs}. Along this arc, a point closer to $\mathsf{p}$ forms a triangle with $\mathsf{p}$ and $\mathsf{g}$ of smaller area. Hence, $|\triangle \mathsf{pgw}|<|\triangle \mathsf{pge}|$.
From Fig.~\ref{fig:Fourplhs}, one readily sees that $|\triangle \mathsf{pgh}|<|\triangle \mathsf{pgq}|$, we directly obtain the following inequality:
\begin{align}
  \frac{p_\mathsf{h}}{p_\mathsf{e}} 
  = \frac{|\triangle \mathsf{pge}|}{|\triangle \mathsf{pgh}|} 
  > \frac{|\triangle \mathsf{pge}|}{|\triangle \mathsf{pgq}|} 
  > \frac{|\triangle \mathsf{pgw}|}{|\triangle \mathsf{pgq}|} 
  = \frac{|\mathsf{pw}|}{|\mathsf{pq}|}
  = \frac{\left(1+k^2\right) G}{W_0W_k}. 
\end{align}
As the point $\mathsf{e}$ approaches $\mathsf{p}$, the slope $k$ of the line $\mathsf{ph}$ approaches zero. To satisfy Eq.~\eqref{twopcond3}, the inequality~\eqref{probhcond} must hold for any $k>0$.
In the limit $k \to 0$, following the same argument as in the proof of Theorem~\ref{onepsteering} in the Supplemental Material~\cite{Suppm}, we again obtain the two contradictory relations, Eqs.~\eqref{limithk1} and \eqref{limithk2}. This implies that no LHS model exists in this scenario.

\end{proof}

%\clearpage

\section{One point of tangency}
\label{app:onetangpoint}

To prove Theorem~\ref{ellippn}, we first establish the following lemma.

\begin{lemma}\label{onetangpoint}
For an arbitrary two-qubit entangled state, Alice's quantum steering ellipsoid $\mathcal{E}_A$ is tangent to her Bloch sphere $\mathcal{S}_A$ at exactly one point if and only if Bob's quantum steering ellipsoid $\mathcal{E}_B$ is tangent to his Bloch sphere $\mathcal{S}_B$ at exactly one point.   
\end{lemma}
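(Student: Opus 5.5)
The plan is to turn tangency points into algebraic objects attached to the state $\rho_{AB}$ itself, so that the argument is manifestly symmetric between Alice and Bob. A tangency point of $\mathcal{E}_B$ with $\mathcal{S}_B$ is a pure steered state $|\beta\rangle_B$ obtained with nonzero probability from some Alice effect. Since every boundary point of $\mathcal{E}_B$ comes from a rank-one effect, we may take that effect to be $|\alpha^\perp\rangle\langle\alpha^\perp|_A$. The condition $\mathrm{Tr}_A[(|\alpha^\perp\rangle\langle\alpha^\perp|\otimes I)\rho]\propto|\beta\rangle\langle\beta|$ with nonzero trace is equivalent to
\begin{equation}
\langle \alpha^\perp\beta^\perp|\rho|\alpha^\perp\beta^\perp\rangle=0,
\end{equation}
together with $\langle\alpha^\perp|\rho_A|\alpha^\perp\rangle\neq 0$. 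By positivity of $\rho$, this vanishing is further equivalent to $|\alpha^\perp\beta^\perp\rangle\in\ker\rho$. The condition is symmetric in the two parties. It also says that Bob's effect $|\beta^\perp\rangle\langle\beta^\perp|$ steers Alice to the pure state $|\alpha\rangle$, which is exactly Observation~\ref{Observation}. The first step is therefore to establish the bijection
\begin{equation}
\{\text{tangency points of }\mathcal{E}_B\}\ \longleftrightarrow\ \{\text{product vectors } |\alpha^\perp\beta^\perp\rangle\in\ker\rho \text{ with nonzero marginal weights}\}\ \longleftrightarrow\ \{\text{tangency points of }\mathcal{E}_A\}.
\end{equation}

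Next I must check that this correspondence really is a bijection and that the nondegeneracy condition transfers between the parties. For the nonzero probability on Bob's side, note that $\langle\beta^\perp|\rho_B|\beta^\perp\rangle=0$ would make $\rho$ supported on $\mathbb{C}^2\otimes|\beta\rangle$, i.e.\ a product state. This is impossible because $\rho$ is entangled. The same argument applies on Alice's side. Hence for an entangled state both probabilities are automatically nonzero, and these are the only places where entanglement is used.

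Injectivity also needs an argument. A fixed $|\beta\rangle$ arises from a unique $|\alpha^\perp\rangle$ unless two non-parallel product vectors $|\alpha_1^\perp\beta^\perp\rangle$ and $|\alpha_2^\perp\beta^\perp\rangle$ both lie in $\ker\rho$. In that case $\mathbb{C}^2\otimes|\beta^\perp\rangle\subseteq\ker\rho$, which again forces $\rho$ to be a product state. So distinct tangency points of Bob correspond to distinct tangency points of Alice, and the counts agree.

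With the bijection in hand, the lemma follows at once: $\mathcal{E}_B$ has exactly one tangency point if and only if $\ker\rho$ contains exactly one product direction (up to scalars), if and only if $\mathcal{E}_A$ has exactly one tangency point. The same counting argument then also gives Lemma~\ref{twotangpoints}. I expect the main obstacle to be the geometric-to-algebraic step, namely showing rigorously that every tangency point of the QSE (which lies on the boundary of the image) is produced by a rank-one effect. I would handle this using the extremality of pure states: a pure point of $\mathcal{S}_B$ can only be a convex combination of itself, so any POVM element producing it decomposes into rank-one pieces that each produce it. As a cross-check, I would verify the correspondence with the explicit QSE formula \eqref{qellipsoid} on the example \eqref{examples}.
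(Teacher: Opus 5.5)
Your argument is correct, and it takes a genuinely different and considerably shorter route than the paper's.

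\textbf{The paper's route.} The paper purifies $\rho_{AB}$ to $|\Psi\rangle_{ABC}$ and Schmidt-decomposes across $A|BC$. It expands the $BC$ vectors in a basis $\{|b\rangle,|b^\perp\rangle\}$ adapted to the given pure steered state. It then runs a case analysis ($|u_2\rangle=0$, $|u_1\rangle=0$, or both nonzero), and gets uniqueness on the other side by contradiction with a list of ``mixedness'' conditions on $|u_i\rangle,|v_i\rangle$. Entanglement enters only afterwards, by remarking that the separable states satisfying those conditions have needle-shaped ellipsoids.

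\textbf{Your route.} You characterize pure steered states on \emph{both} sides by projective classes of product vectors $|\alpha^\perp\beta^\perp\rangle\in\ker\rho$. This makes the correspondence of Observation~\ref{Observation} manifestly symmetric. Your two checks (nonzero weights and injectivity) show that the only hypothesis actually needed is that neither reduced state is pure. That is weaker than entanglement, and it explains exactly why the paper's counterexample $|0\rangle\langle 0|\otimes I/2$ fails.

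\textbf{What each buys.} You obtain an equality of counts rather than an ``exactly one $\Leftrightarrow$ exactly one'' statement. So the same bijection gives Lemma~\ref{twotangpoints} and all of Theorem~\ref{ellippn} at once, including $N=0$ and $N=\infty$ directly rather than by elimination. Counting product directions in $\ker\rho$ by rank even recovers the list $N\in\{0,1,2,\infty\}$: the projectivized kernel is empty, a point, a line, or a plane in $\mathbb{P}^3$. Lines lying entirely in the Segre quadric are exactly the ones your injectivity argument excludes. What the paper's route offers in return is explicit purification-level normal forms, which it reuses to build its examples.

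\textbf{On your flagged obstacle.} The step you flag as the main obstacle is not really one here. The paper restricts Alice to projective measurements, so by Eq.~\eqref{qellipsoid} every surface point of $\mathcal{E}_B$ already comes from a rank-one projector. Your extremality argument covers general POVM effects anyway.
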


For the proof, it suffices to show that, for an arbitrary two-qubit entangled state, the existence of a rank-one measurement effect on subsystem $B$ that steers subsystem $A$ to a unique pure state is equivalent to the existence of a rank-one measurement effect on subsystem $A$ that steers subsystem $B$ to a unique pure state.

\begin{proof}
We prove the forward implication (Alice-side tangency) $\Rightarrow$ (Bob-side tangency); the reverse follows by exchanging $A$ and $B$. 
Let a two-qubit state $\rho_{AB}$ be given and choose a purification $|\Psi\rangle_{ABC}$ such that
\begin{align}
    \rho_{AB}=\Tr_{C}(|\Psi_{ABC}\rangle\langle \Psi_{ABC}|).
\end{align} 
Since $\mathrm{dim}\ \mathcal{H}_A=2$, the Schmidt decomposition of $|\Psi\rangle_{ABC}$ across $A|BC$ involves at most two terms. 
Thus, we may write
\begin{align}
    |\Psi\rangle_{ABC}=\sqrt{\lambda_1}|\alpha_1\rangle_A|B_1\rangle_{BC}
    +\sqrt{\lambda_2}|\alpha_2\rangle_A|B_2\rangle_{BC}
\end{align}
with $\{|\alpha_1\rangle, |\alpha_2\rangle\}$ orthonormal in $\mathcal{H}_A$, $\{|B_1\rangle, |B_2\rangle\}$ orthonormal in $\mathcal{H}_{BC}$, and $\lambda_1,\lambda_2 \geq 0$, $\lambda_1+\lambda_2=1$.

Assume Alice’s QSE $\mathcal{E}_A$ touches her Bloch sphere $\mathcal{S}_A$ at exactly one point. 
This means there exists a rank-one projector
\begin{align}\label{projector1}
    \Pi_B=|b\rangle\langle b|_B
\end{align}
such that the (normalized) conditional state of system $A$ after outcome $\Pi_B$ is a pure state, and this pure conditional state is unique (no other rank-one outcome on $B$ produces a different pure conditional state on $A$). 
Choose the orthonormal basis $\{|b\rangle,|b^\perp\rangle\}$ of $\mathcal{H}_B$, and expand the two $BC$-vectors in this basis:
\begin{subequations}\label{eq:2}
\begin{align}
|B_1\rangle_{BC} &= |b\rangle|u_1\rangle+|b^\perp\rangle|v_1\rangle, \label{eq:2a} \\
|B_2\rangle_{BC} &= |b\rangle|u_2\rangle+|b^\perp\rangle|v_2\rangle, \label{eq:2b}
\end{align}
\end{subequations}
where $|u_i\rangle$ and $|v_i\rangle$ $(i=1,2)$ are unnormalized states. The orthogonality condition $\langle B_1|B_2\rangle$=0 implies that
\begin{align}\label{orthogcond}
   \langle u_1|u_2\rangle+ \langle v_1|v_2\rangle=0.
\end{align}
Then $|\Psi\rangle_{ABC}$ takes the form
\begin{align}\label{Schmidtds}    |\Psi\rangle_{ABC}=\sqrt{\lambda_1}|\alpha_1\rangle_A(|b\rangle|u_1\rangle+|b^\perp\rangle|v_1\rangle)
    +\sqrt{\lambda_2}|\alpha_2\rangle_A(|b\rangle|u_2\rangle+|b^\perp\rangle|v_2\rangle).
\end{align}
In particular, for the vector $|b\rangle_B$,
the unnormalized postmeasurement state on $AC$, 
\begin{align}
    |\psi_1\rangle_{AC}:=(I_A\otimes \langle b|_B\otimes I_C)|\Psi\rangle_{ABC}
\end{align}
is a nonzero product vector. We may therefore write
\begin{align}\label{ACalpha}
    |\psi_1\rangle_{AC}:=\mu|\alpha\rangle_{A}|\chi\rangle_C, \quad \mu\neq 0,
\end{align}
with unit $|\alpha\rangle_{A}$ and nonzero $|\chi\rangle_C$.
By applying $\langle b|_B$ and $\langle b^\perp|_B$ to the state $|\Psi\rangle_{ABC}$ in Eq.~\eqref{Schmidtds}, we obtain
\begin{subequations}\label{phiAC}
\begin{align}
    |\psi_1\rangle_{AC}
    &=\sqrt{\lambda_1}|\alpha_1\rangle|u_1\rangle
    +\sqrt{\lambda_2}|\alpha_2\rangle|u_2\rangle,\\
    |\psi_2\rangle_{AC}
    &=\sqrt{\lambda_1}|\alpha_1\rangle|v_1\rangle
    +\sqrt{\lambda_2}|\alpha_2\rangle|v_2\rangle.
\end{align}
\end{subequations}
Since $|\psi_1\rangle_{AC}$ is a product vector in $A\otimes C$ whose marginal on $A$ is proportional to the unique pure state $|\alpha\rangle_A$,
it follows that the reduced state of Alice associated with $|\psi_2\rangle_{AC}$ must be mixed and satisfy the conditions
\begin{equation}\label{mixcond0}
\begin{aligned}
    &{\lambda_1} > 0,\ {\lambda_2} > 0, \\
    &|v_1\rangle \neq 0,\ |v_2\rangle \neq 0, \\
    &\nexists\, k_{v}\in \mathbb{C}\ \text{s.\ t.}\  |v_1\rangle = k_v |v_2\rangle.
\end{aligned}
\end{equation}
To ensure that the reduced state of Alice associated with $|\psi_1\rangle_{AC}$ remains pure, one needs to consider the following two cases:
\begin{enumerate}
\item \textbf{Case $|u_1\rangle \neq 0$ and $|u_2\rangle = 0$.}\label{twocases1}
In this case, the condition~\eqref{orthogcond} reduces to 
\begin{align}\label{orthogcondr}
    \langle v_1|v_2\rangle=0,
\end{align}
and the state $|\psi_1\rangle_{AC}$ takes the form
\begin{align}\label{psiAC1}
    |\psi_1\rangle_{AC} \propto |\alpha_1\rangle |u_1\rangle.
\end{align}
The corresponding pure steered state on subsystem $A$ is
\begin{align}\label{alpha1}
    |\alpha\rangle_A = |\alpha_1\rangle .
\end{align}
The purification can then be written as
\begin{equation}\label{purification1}
\begin{aligned}
    |\Psi\rangle_{ABC}=   &\sqrt{\lambda_1}|\alpha_1\rangle\left(|b\rangle|u_1\rangle+|b^\perp\rangle|v_1\rangle\right)  +\sqrt{\lambda_2}|\alpha_2\rangle|b^\perp\rangle|v_2\rangle.
\end{aligned}
\end{equation}
Consider an arbitrary nonzero vector $|b'\rangle_B$, distinct from both $|b\rangle$ and $|b^\perp\rangle$, of the form
\begin{align}\label{arbbv}
    |b'\rangle_B = \eta |b\rangle +\gamma |b^\perp\rangle, \quad |\eta|^2+|\gamma|^2=1,\quad \eta,\gamma \neq 0.
\end{align}
Applying $\langle b'|_B$ to the purification in Eq.~\eqref{purification1} yields
\begin{equation}\label{phiAC3}
\begin{aligned}
   |\psi_3\rangle_{AC}=&\sqrt{\lambda_1}|\alpha_1\rangle(\eta^* |u_1\rangle+\gamma^* |v_1\rangle) +\gamma^*\sqrt{\lambda_2}|\alpha_2\rangle |v_2\rangle,
\end{aligned}
\end{equation}
where $\eta^*$ denotes the complex conjugate of $\eta$, and the others are defined similarly.
The requirement that the state can be steered to exactly one pure steered state implies that all other steered states must be mixed. In particular, the following conditions must hold:
\begin{align}
    &\nexists\, k_{1}\in \mathbb{C}\ \text{s.\ t.}\ |u_1\rangle = k_{1} |v_1\rangle,\label{mixcond1:a}\\
    &\nexists\, k_{2}\in \mathbb{C}\ \text{s.\ t.}\ \eta^* |u_1\rangle+\gamma^* |v_1\rangle = k_2 |v_2\rangle,\label{mixcond1:b}
\end{align}
for all $\eta,\gamma$ satisfying Eq.~\eqref{arbbv}.
To see this, suppose that there exists $k_{1}\in \mathbb{C}$, such that $|u_1\rangle = k_{1} |v_1\rangle$. Defining $\langle \tilde{b}|_B \propto \langle b| - k_{1} \langle b^\perp|$, and applying $\langle \tilde{b}|_B$ to the purification~\eqref{purification1}, one finds that the resulting conditional state on Alice’s side reduces to the pure state $|\alpha_2\rangle$. 
Notice that, if $|u_1\rangle=k_{1}|v_1\rangle$, then a straightforward calculation shows that $\rho_{AB}$ is separable.
Similarly, if there exist $\eta,\gamma$ satisfying Eq.~\eqref{arbbv} and $k_{2}\in \mathbb{C}$, such that $\eta^* |u_1\rangle+\gamma^* |v_1\rangle = k_2 |v_2\rangle$, then the postmeasurement state on Alice’s side is proportional to $k_2\sqrt{\lambda_1}|\alpha_1\rangle+\gamma^*\sqrt{\lambda_2}|\alpha_2\rangle$, which is again pure.

\vspace{1.0\baselineskip}

We now show that the condition of steering from Bob to Alice to exactly one pure state implies steering from Alice to Bob to exactly one pure state. 
%in Case~\ref{twocases1}. 
To this end, we take an arbitrary normalized vector
\begin{align}\label{arbvec}
    |a\rangle_A = m |\alpha_1\rangle +n |\alpha_2\rangle,\quad |m|^2+|n|^2=1,
\end{align}
in system $A$ and act with $\langle a |_A$ on the purification $|\Psi\rangle_{ABC}$ in Eq.~\eqref{purification1}, we have
\begin{align}
  |\phi_1\rangle_{BC} 
   =&  m^*\sqrt{\lambda_1}|b\rangle|u_1\rangle +|b^\perp\rangle\left(m^*\sqrt{\lambda_1}|v_1\rangle+n^*\sqrt{\lambda_2}|v_2\rangle\right). \label{phiBC1}
\end{align}
When $m=0$, the corresponding measurement effect is $\Pi_A=|\alpha_2\rangle\langle\alpha_2 |$, and the conditional state on Bob’s side reduces to the pure state  
\begin{align}
    |\beta\rangle_B=|b^\perp\rangle_B,
\end{align}
which is the unique pure state obtainable in this case. 
Note that 
\begin{align}\label{relation}
    \Pi_A|\alpha\rangle_A=0,\quad \Pi_B|\beta\rangle_B=0.
\end{align}

When $m\neq0$, the conditions in Eqs.~\eqref{mixcond0}, ~\eqref{mixcond1:a}, and~\eqref{mixcond1:b} ensure that the corresponding conditional states on Bob’s side are mixed. 
Indeed, suppose, for the sake of contradiction, that there exists a complex number $k_3\in\mathbb{C}$ such that
\begin{equation}\label{existedk3}
\begin{aligned}
  m^*\sqrt{\lambda_1}|u_1\rangle
  = k_3\left(m^*\sqrt{\lambda_1}|v_1\rangle+n^*\sqrt{\lambda_2}|v_2\rangle\right).
\end{aligned}
\end{equation}
Dividing both sides by $m^*\sqrt{\lambda_1}$ and multiplying by $\eta^*$, one obtains
\begin{equation}\label{existedk32}
\begin{aligned}
   \eta^*|u_1\rangle
  = k_3 \eta^*|v_1\rangle+k_3 \eta^*\frac{n^*\sqrt{\lambda_2}}{m^*\sqrt{\lambda_1}}|v_2\rangle.
\end{aligned}
\end{equation}
For any given $k_3\in\mathbb{C}$, there exists a suitable parameter $\eta$ such that
$|\eta|^2+|k_3 \eta|^2=1$,
with the corresponding coefficient $\gamma$ determined by
\begin{align}\label{k3relation}
    \gamma^* =-k_3 \eta^*.
\end{align}
Moreover, for $n\neq 0$, one has 
\begin{align}\label{k2k3relation}
    k_2 = k_3 \eta^*\frac{n^*\sqrt{\lambda_2}}{m^*\sqrt{\lambda_1}},
\end{align}
while for $n=0$, it follows that 
\begin{align}\label{k1k3relation}
    k_1=k_3.
\end{align} 
In either case, conditions~\eqref{mixcond1:a} and~\eqref{mixcond1:b}  are violated, leading to a contradiction. Therefore, no such $k_3$ can exist.

\vspace{1.0\baselineskip}

It should be noted that no entanglement assumption is imposed on the two-qubit state $\rho_{AB}$; thus, $\rho_{AB}$ may be either entangled or separable. 
%The conditions~\eqref{mixcond0}, \eqref{mixcond1:a}, and \eqref{mixcond1:b} require that Alice have exactly one pure steered state and one mixed steered state. 
The conditions~\eqref{mixcond0}, \eqref{mixcond1:a}, and \eqref{mixcond1:b} require Alice to have exactly one pure steered state and one mixed steered state.
However, certain separable states may also satisfy these conditions. Hence, the set of states satisfying these conditions contains all such entangled states, but may also include a subset of separable states.
Nevertheless, the QSEs associated with the entangled and separable cases are distinct. Specifically, the QSEs corresponding to the entangled states are three dimensional, whereas those corresponding to the separable states are one dimensional, i.e., steering needles. This distinction can be understood as follows.
For an two-qubit entangled state, the volumes of the QSEs associated with both Alice and Bob are nonzero~\cite{Jevtic2014,Milne14}.
Moreover, a two-qubit state is separable if and only if its QSE fits inside a tetrahedron that is itself contained in the Bloch sphere~\cite{Jevtic2014}. For separable states, the QSEs can also be two dimensional, corresponding to a steering pancake, one dimensional, corresponding to a steering needle, or trivially zero dimensional~\cite{Jevtic2014}. 
Hence, for a two-qubit state admitting exactly one pure steered state, the QSEs of the separable states and the entangled states considered here can only be one dimensional and three dimensional, respectively.  
This distinction enables us to distinguish the entangled states from the separable states considered here.
Specifically, $\rho_{AB}$ is entangled if and only if there exists a steered state that is not collinear, in the Bloch sphere, with the two steered states obtained from projective measurement effects $|b\rangle\langle b|_B$ and $|b^\perp\rangle\langle b^\perp|_B$. The other cases in the proofs of Lemmas~\ref{onetangpoint} and~\ref{twotangpoints} are analogous and will not be discussed further.

\item \textbf{Case $|u_1\rangle = 0$ and $|u_2\rangle \neq 0$.}\label{twocases11}
In this case, the unique pure steered states on systems $A$ and $B$ are given by
\begin{align}
|\alpha\rangle_A &= |\alpha_2\rangle ,\\
|\beta\rangle_B &= |b^\perp\rangle ,
\end{align}
respectively. The argument then proceeds analogously to that in Case~\ref{twocases1}, and the details are omitted.

\item \textbf{Case $|u_1\rangle \neq 0$ and $|u_2\rangle \neq 0$.}\label{twocases2}
In this case, there exists $k_u \in \mathbb{C}$ such that
\begin{align}\label{ku12}
    |u_1\rangle = k_u |u_2\rangle .
\end{align}
It follows that
\begin{align}
    |\psi_1\rangle_{AC}
= \left( k_u \sqrt{\lambda_1}|\alpha_1\rangle
+ \sqrt{\lambda_2}|\alpha_2\rangle \right) |u_2\rangle .
\end{align}
Accordingly, the corresponding pure steered state on subsystem $A$ is
\begin{align}\label{alpha2}
    |\alpha\rangle_A
    &\propto k_u \sqrt{\lambda_1}|\alpha_1\rangle
    + \sqrt{\lambda_2}|\alpha_2\rangle.
\end{align}
The purification can therefore be written as
\begin{equation}\label{purif2}
\begin{aligned}
    |\Psi\rangle_{ABC}=   &\sqrt{\lambda_1}|\alpha_1\rangle\left(k_u|b\rangle|u_2\rangle+|b^\perp\rangle|v_1\rangle\right)   +\sqrt{\lambda_2}|\alpha_2\rangle\left(|b\rangle|u_2\rangle+|b^\perp\rangle|v_2\rangle\right).
\end{aligned}
\end{equation}
Applying the vector $\langle b'|_B$ defined in Eq.~\eqref{arbbv} 
 to the purification in Eq.~\eqref{purif2} yields
\begin{equation}\label{phiAC4}
\begin{aligned}
   |\psi_4\rangle_{AC}
   =&\sqrt{\lambda_1}|\alpha_1\rangle(\eta^* k_u |u_2\rangle + \gamma^* |v_1\rangle)
   +\sqrt{\lambda_2}|\alpha_2\rangle(\eta^* |u_2\rangle+\gamma^* |v_2\rangle).
\end{aligned}
\end{equation}
The requirement that steering yields exactly one pure conditional state implies that Alice's reduced state obtained from Eq.~\eqref{phiAC4} must be mixed. This holds only if the following conditions are satisfied simultaneously:
\begin{subequations}\label{mixcond2}
\begin{align}
     &\nexists\, g_{1}\in \mathbb{C}\ \text{s.\ t.}\ |u_2\rangle = g_1|v_1\rangle, \label{mixcond2:a}\\
     &\nexists\, g_{2}\in \mathbb{C}\ \text{s.\ t.}\ |u_2\rangle = g_2|v_2\rangle,\label{mixcond2:b} \\
     &\nexists\, g_{3}\in \mathbb{C}\setminus\{k_u\}\ \text{s.\ t.}\ (g_3-k_u) \eta^* |u_2\rangle = \gamma^*( |v_1\rangle - g_3 |v_2\rangle),\label{mixcond2:c}
\end{align}
\end{subequations}
for all $\eta,\gamma$ satisfying Eq.~\eqref{arbbv}.
Indeed, if any one of these conditions were violated, an additional pure steered state could be obtained, contradicting the assumption of uniqueness.
Specifically, if $|u_2\rangle=g_1|v_1\rangle$ for some $g_1\in\mathbb{C}$, then choosing $|b'\rangle = |b\rangle - g_1^*k_u^* |b^\perp\rangle$ and applying $\langle b'|_B$ to the purification steers system $A$ to the pure state $|\alpha_2\rangle$.
If $|u_2\rangle=g_2|v_2\rangle$ for some $g_2\in\mathbb{C}$, then choosing $|b'\rangle = |b\rangle - g_2^* |b^\perp\rangle$ steers system $A$ to the pure state $|\alpha_1\rangle$. Finally, if there exists $g_3\in\mathbb{C}$ such that $(g_3-k_u) \eta^* |u_2\rangle = \gamma^*( |v_1\rangle - g_3 |v_2\rangle)$, then the resulting conditional state of system $A$ is pure and proportional to $g_3\sqrt{\lambda_1}|\alpha_1\rangle+\sqrt{\lambda_2}|\alpha_2\rangle$.

\vspace{1.0\baselineskip}

We now show that the conditions in Eqs.~\eqref{mixcond0} and~\eqref{mixcond2} imply that Alice can steer Bob to exactly one pure conditional state. To this end, consider an arbitrary normalized vector $\langle a |_A$ as defined in Eq.~\eqref{arbvec} on subsystem $A$ and apply it to the purification $|\Psi\rangle_{ABC}$ given in Eq.~\eqref{purif2}. This leads to
\begin{equation}\label{phiBC}
\begin{aligned}
    |\phi_2\rangle_{BC}= 
    &|b\rangle\left(m^*k_u\sqrt{\lambda_1}+n^*\sqrt{\lambda_2}\right)|u_2\rangle   +|b^\perp\rangle\left(m^*\sqrt{\lambda_1}|v_1\rangle+n^*\sqrt{\lambda_2}|v_2\rangle\right).
\end{aligned}
\end{equation}
When $m^*k_u\sqrt{\lambda_1}+n^*\sqrt{\lambda_2}=0$, 
the corresponding measurement effect is 
$\Pi_A = |\alpha^\perp\rangle\langle \alpha^\perp|$, where $|\alpha^\perp\rangle\propto 
    \sqrt{\lambda_2}|\alpha_1\rangle - k_u^* \sqrt{\lambda_1}|\alpha_2\rangle$,
and Bob’s conditional state reduces to the pure state
\begin{align}
    |\beta\rangle_B=|b^\perp\rangle_B.
\end{align}
Similarly, relation~\eqref{relation} follows.
We proceed to prove that this pure state is unique.
On the other hand, when $m^*k_u\sqrt{\lambda_1}+n^*\sqrt{\lambda_2} \neq 0$, then Bob’s conditional state is necessarily mixed, as guaranteed by the conditions~\eqref{mixcond0} and~\eqref{mixcond2}.
Suppose, for the sake of contradiction, that there exists a complex number $g_4\in\mathbb{C}$ such that
\begin{equation}\label{opureg4}
\begin{aligned}
    &\left(m^*k_u\sqrt{\lambda_1}+n^*\sqrt{\lambda_2}\right)|u_2\rangle
    =g_4\left(m^*\sqrt{\lambda_1}|v_1\rangle+n^*\sqrt{\lambda_2}|v_2\rangle\right).
\end{aligned}
\end{equation}
When $m=0$, Eq.~\eqref{opureg4} reduces to 
\begin{align}
    |u_2\rangle = g_4 |v_2\rangle,
\end{align}
which contradicts to the condition~\eqref{mixcond2:b}.
When $n=0$, Eq.~\eqref{opureg4} reduces to 
\begin{align}
   k_u |u_2\rangle = g_4 |v_1\rangle,
\end{align}
which contradicts to the condition~\eqref{mixcond2:a}.
When $m\neq 0$ and $n\neq 0$, dividing both sides of Eq.~\eqref{opureg4} by the nonzero scalar $m^*k_u\sqrt{\lambda_1}+n^*\sqrt{\lambda_2}$ and multiplying $(g_3-k_u) \eta^*$ yields
\begin{align}\label{opureg44}
    (g_3-k_u) \eta^*|u_2\rangle = G_1 |v_1\rangle + G_2 |v_2\rangle,
\end{align}
where
\begin{align}
    G_1 = \frac{(g_3-k_u) \eta^*g_4 m^*\sqrt{\lambda_1}}{m^*k_u\sqrt{\lambda_1}+n^*\sqrt{\lambda_2}},\quad
    G_2 = \frac{(g_3-k_u) \eta^*g_4 n^*\sqrt{\lambda_2}}{m^*k_u\sqrt{\lambda_1}+n^*\sqrt{\lambda_2}}. \notag
\end{align}
We choose $g_3=-\frac{n^*\sqrt{\lambda_2}}{m^*\sqrt{\lambda_1}}$ and define $\gamma^*=G_1$, so that $G_2=-\gamma^* g_3$, and we then obtain
\begin{align}
     (g_3-k_u) \eta^* |u_2\rangle = \gamma^* ( |v_1\rangle - g_3 |v_2\rangle)
\end{align}
in contradiction with the condition~\eqref{mixcond2:c}. Hence, no such $g_4$ exists.
\end{enumerate}

\end{proof}

Notice that the above proof covers all two-qubit entangled states admitting exactly one pure steered state, as well as certain two-qubit separable states with the same property. For these separable states, the corresponding QSEs degenerate into line segments.
It is worth noting, however, that a two-qubit separable state admitting exactly one pure steered state on one side does not necessarily admit exactly one pure steered state on the other side. For example, for the state
\begin{align}
\rho_{AB}=|0\rangle\langle 0|\otimes \frac{I}{2},
\end{align}
Alice’s side admits exactly one pure steered state, whereas Bob’s side admits no pure steered state. The QSE of this state degenerates into a single point rather than a line segment. The requirement in the above proof that there be one pure steered state and one mixed steered state excludes the case in which the QSE degenerates into a single point. The same exclusion also applies to the proof below; hence, we will not repeat an analogous analysis.

\section{Two points of tangency}
\label{app:twotangpoints}

To prove Theorem~\ref{ellippn}, we next establish the following lemma.

\begin{lemma}\label{twotangpoints}
For an arbitrary two-qubit entangled state, Alice's quantum steering ellipsoid $\mathcal{E}_A$ is tangent to her Bloch sphere $\mathcal{S}_A$ at exactly two points if and only if Bob's quantum steering ellipsoid $\mathcal{E}_B$ is tangent to his Bloch sphere $\mathcal{S}_B$ at exactly two points.   
\end{lemma}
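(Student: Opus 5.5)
We plan to follow the template of the proof of Lemma~\ref{onetangpoint}, proving only the implication (Alice-side two tangencies) $\Rightarrow$ (Bob-side two tangencies); the converse follows by exchanging $A$ and $B$. We again work with a purification $|\Psi\rangle_{ABC}=\sqrt{\lambda_1}|\alpha_1\rangle|B_1\rangle+\sqrt{\lambda_2}|\alpha_2\rangle|B_2\rangle$, where entanglement forces $\lambda_1,\lambda_2>0$.

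The key tool is the correspondence recorded in Observation~\ref{Observation}. Suppose a rank-one effect $\Pi_B=|b\rangle\langle b|$ steers $A$ to a pure state $|\alpha\rangle$. Then $\langle b|_B|\Psi\rangle=\mu|\alpha\rangle|\chi\rangle$. Let $|\alpha^\perp\rangle$ be orthogonal to $|\alpha\rangle$, and apply $\langle\alpha^\perp|_A$ to the purification. Its $|b\rangle$ component vanishes, so the result has the form $|b^\perp\rangle|w\rangle_C$. Hence Bob's conditional state is the pure state $|b^\perp\rangle$, provided $|w\rangle\neq0$. If $|w\rangle=0$, then $\langle\alpha^\perp|_A|\Psi\rangle=0$, which would make $\rho_A$ pure, contradicting entanglement. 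This gives a map $\Phi:|\alpha\rangle\mapsto|\beta\rangle=|b^\perp\rangle$ from Alice's pure steered states to Bob's, and the symmetric map $\Psi$ runs the other way.

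The argument then has three steps.

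\textbf{Step 1 (well-definedness).} We need $|b\rangle$ to be determined by $|\alpha\rangle$. If two distinct $|b\rangle,|b'\rangle$ both steered $A$ to the same $|\alpha\rangle$, their linear span would be all of $\mathcal H_B$. Every Bob outcome would then steer $A$ onto the line through $|\alpha\rangle$, so the QSE of $A$ would collapse to a needle and have zero volume. This contradicts entanglement, since by \cite{Jevtic2014,Milne14} entangled states have nondegenerate QSEs.

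\textbf{Step 2 (mutual inverses).} By construction $\langle\alpha^\perp|$ steers $B$ to $|b^\perp\rangle$. Applying the same construction from Bob's side sends $|\beta\rangle=|b^\perp\rangle$ to the $A$-state $|(\alpha^\perp)^\perp\rangle=|\alpha\rangle$. So $\Psi\circ\Phi=\mathrm{id}$, and likewise $\Phi\circ\Psi=\mathrm{id}$.

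\textbf{Step 3 (conclusion).} A bijection between the sets of pure steered states gives equal tangency counts. If $\mathcal E_A$ has exactly two tangency points $|\alpha\rangle,|\alpha'\rangle$, then $\Phi$ produces two distinct Bob pure states $|\beta\rangle,|\beta'\rangle$, and any third Bob pure state would pull back under $\Psi$ to a third Alice state.

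If the abstract argument turns out to need more care, we will instead mirror the case analysis of Lemma~\ref{onetangpoint}. We expand $|B_i\rangle$ in the bases $\{|b\rangle,|b^\perp\rangle\}$ and $\{|b'\rangle,|b'^\perp\rangle\}$ attached to the two tangencies. We split into cases according to which of $|u_i\rangle$ vanish, as in Cases 1--3 there, and also according to whether $|b'\rangle=|b^\perp\rangle$, i.e.\ whether both pure states come from a single measurement. In each case we write the non-existence conditions (proportionality constraints such as \eqref{mixcond2}) that forbid a third Alice pure state, and show that a third Bob pure state $\langle a|_A$ would violate one of them. This is done by the same parameter-matching trick used to produce $k_3$ and $g_4$ there.

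The main obstacle is Step 1, together with verifying that the two Bob states are distinct and not degenerate. Injectivity requires that distinct $|\alpha\rangle$ yield distinct $|b\rangle$, which follows from $\Psi\circ\Phi=\mathrm{id}$, but only once Step 1 rules out non-uniqueness of $|b\rangle$. The needle/volume argument is where the entanglement hypothesis must be invoked carefully, exactly as in the remark after Case 1 of Lemma~\ref{onetangpoint}, which shows separable states can break the correspondence.
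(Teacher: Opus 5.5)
Your argument is correct, and its main line is genuinely different from the paper's proof.

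\textbf{The paper's route.} The paper proves this lemma by exhaustive case analysis in the Schmidt form of the purification. It fixes the two Bob vectors $|b\rangle$ and $|b'\rangle=\eta|b\rangle+\gamma|b^\perp\rangle$ and splits into $\eta=0$ (both pure states come from one measurement) and $\eta\neq0$. It then subdivides according to which components $|u_i\rangle,|v_i\rangle$ vanish, records the non-proportionality conditions that forbid a third pure state on $A$, and checks case by case that Alice's vectors $\langle a|_A$ produce exactly two pure states on $B$. The correspondence of Observation~\ref{Observation} appears only as a by-product.

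\textbf{Your route.} You prove that correspondence directly and without coordinates. The decomposition $|\Psi\rangle=\mu|\alpha\rangle_A|b\rangle_B|\chi\rangle_C+|b^\perp\rangle_B\otimes|\psi_2\rangle_{AC}$ shows at once that the effect $|\alpha^\perp\rangle\langle\alpha^\perp|$ steers $B$ to $|b^\perp\rangle$, with nonzero probability unless $\rho_A$ is pure. The two maps obtained this way are mutual inverses.

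\textbf{What each approach buys.} Your bijection treats every $N$ at once. Lemma~\ref{onetangpoint}, the present lemma and Theorem~\ref{ellippn} therefore follow together, including $N=0$, which the paper obtains only by elimination. It also isolates the only hypothesis the bijection actually uses: neither marginal is pure. This explains why the paper's separable counterexample $|0\rangle\langle0|\otimes I/2$, which has pure $\rho_A$, breaks the correspondence. The paper's computation, in exchange, gives explicit parametrizations of the purification and of the steering vectors, which it reuses to construct examples.

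\textbf{Minor points.}
\begin{enumerate}
\item Your Step~1 is unnecessary. Define $\Phi(|\alpha\rangle)$ as Bob's conditional state for the effect $|\alpha^\perp\rangle\langle\alpha^\perp|$. Your computation already shows this equals $|b^\perp\rangle$ for every admissible $|b\rangle$, so $|b\rangle$ is automatically unique. The identity $\Psi\circ\Phi=\mathrm{id}$ is then immediate, because $|\beta^\perp\rangle=|b\rangle$ steers $A$ to $|\alpha\rangle$.
\item Where you do argue Step~1, the QSE collapses to a single point ($\rho_A=|\alpha\rangle\langle\alpha|$, hence a product state), not a needle. This changes nothing in the conclusion.
\item Rename the inverse map, since $\Psi$ already denotes the purification.
\end{enumerate}
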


\begin{proof}
The proof follows the same strategy as that of Lemma~\ref{onetangpoint}. We therefore only highlight the differences. 
Assume Alice’s QSE $\mathcal{E}_A$ touches Alice's Bloch sphere $\mathcal{S}_A$ at exactly two different points.  Without loss of generality, we consider that one pure conditional state is steered by the rank-one projector 
\begin{align}
    \Pi_B=|b\rangle\langle b|_B
\end{align} 
given in Eq.~\eqref{projector1}, another pure conditional state is steered by the rank-one projector
\begin{align}
    &\Pi_B^\prime =|b^\prime\rangle\langle b^\prime|_B,
\end{align}
where 
\begin{align}\label{arbbv0}
    |b'\rangle_B = \eta |b\rangle +\gamma |b^\perp\rangle, \quad |\eta|^2+|\gamma|^2=1,\quad \gamma \neq 0,
\end{align}
such that no other rank-one outcome on $B$ produces a different pure conditional state on $A$. 
Hence, a purification $|\Psi\rangle_{ABC}$ of the state $\rho_{AB}$ can be chosen as in Eq.~\eqref{Schmidtds}, subject to the condition~\eqref{orthogcond}.
The proof splits into two distinct parts according to whether $\eta = 0$ or $\eta \neq 0$.

\subsection{Analysis for $\eta = 0$.}

Two distinct pure states are obtained from the rank-one measurement effects
\begin{align}
\Pi_B=|b\rangle\langle b|_B,\quad \Pi_B^\perp =|b^\perp\rangle\langle b^\perp|,
\end{align}
corresponding to the orthonormal basis $\{|b\rangle,|b^\perp\rangle\}$ of $\mathcal{H}_B$. Hence, two distinct pure states are obtained from a single projective measurement.
These two pure conditional states are unique in the following sense: no other rank-one measurement outcome on system $B$ can steer system $A$ to a pure conditional state distinct from either of these two states.
In particular, for the vectors $|b\rangle_B$ and $|b^\perp\rangle_B$, the unnormalized postmeasurement states on $AC$ are given by Eq.~\eqref{ACalpha} and
\begin{align}
|\psi_2\rangle_{AC}:=\nu|\alpha^\prime\rangle_{A}|\chi^\prime\rangle_{A}, \quad \nu\neq 0,
\end{align}
where $|\alpha^\prime\rangle_{A}$ is normalized and $|\chi^\prime\rangle_{A}\neq 0$.
Applying $\langle b |_B$ and $\langle b^\perp |_B$ to the state $|\Psi\rangle_{ABC}$~\eqref{Schmidtds} yields the states $|\psi_1\rangle_{AC}$ and $|\psi_2\rangle_{AC}$ given in Eqs.~\eqref{phiAC}, respectively.
The existence of two pure steered states implies that $|\psi_1\rangle_{AC}$ and $|\psi_2\rangle_{AC}$ are single $A\otimes C$ product vector proportional to $|\alpha\rangle_A$, $|\alpha^\prime\rangle_{A}$, respectively. 

If $|u_2\rangle = |v_2\rangle = 0$ or $|u_1\rangle = |v_1\rangle = 0$ in Eq.~\eqref{phiAC}, both steered states reduce to the same pure state, $|\alpha_1\rangle$ or $|\alpha_2\rangle$, respectively, contradicting the assumption that two distinct pure steered states exist. We therefore restrict our attention to the following three cases.
\begin{enumerate}
\item \textbf{Case $|u_2\rangle = 0$ and $|v_2\rangle \neq 0$.}\label{threecases1}
In this case, there exists $k_v \in \mathbb{C}$ such that 
\begin{align}\label{kv12}
   &|v_1\rangle = k_v |v_2\rangle.
\end{align}
Combining Eqs.~\eqref{orthogcond} and~\eqref{kv12} yields $k_v=0$ and $|v_1\rangle = 0$. The states $|\psi_1\rangle_{AC}$ and $|\psi_2\rangle_{AC}$ then take the form
\begin{subequations}
\begin{align}
  |\psi_1\rangle_{AC}
\propto&|\alpha_1\rangle |u_1\rangle,\\
|\psi_2\rangle_{AC}
\propto & |\alpha_2\rangle  |v_2\rangle,
\end{align}
\end{subequations}
with $|u_1\rangle \neq 0$. It follows that the corresponding pure steered states on system $A$ are
\begin{subequations}
\begin{align}
  &|\alpha\rangle_A = |\alpha_1\rangle ,\\
  &|\alpha^\prime\rangle_{A}  =|\alpha_2\rangle .  
\end{align} 
\end{subequations}
The purification $|\Psi\rangle_{ABC}$ can therefore be written as follows
\begin{equation}\label{purification2a}
\begin{aligned}
    |\Psi\rangle_{ABC}
    =\sqrt{\lambda_1}|\alpha_1\rangle|b\rangle|u_1\rangle
    +\sqrt{\lambda_2}|\alpha_2\rangle|b^\perp\rangle|v_2\rangle.
\end{aligned}
\end{equation}
Consider an arbitrary nonzero vector $|b'\rangle_B$ defined in Eq.~\eqref{arbbv}. Applying $\langle b'|_B$ to the purification~\eqref{purification2a} yields
\begin{equation}
\begin{aligned}
   |\psi_3\rangle_{AC}=&\sqrt{\lambda_1}\eta^*|\alpha_1\rangle |u_1\rangle +\sqrt{\lambda_2}\gamma^*|\alpha_2\rangle |v_2\rangle.
\end{aligned}
\end{equation}
The requirement that the state can steer to exactly two distinct pure states implies that all other steered states on Alice’s side must be mixed. In particular, the following conditions must be satisfied:
\begin{equation}\label{p2mixcond}
\begin{aligned}
    & \lambda_1>0,\ \lambda_2>0,\\
    &|u_1\rangle \neq 0,\ |v_2\rangle \neq 0,\\
    &\nexists\, k_{1}\in \mathbb{C}\ \text{s.\ t.}\ |u_1\rangle = k_1 |v_2\rangle.
\end{aligned}
\end{equation}

\vspace{0.5\baselineskip}

We now show that the condition of steering from Bob to Alice to exactly two distinct pure states implies steering from Alice to Bob to exactly two distinct pure states. To this end, consider an arbitrary normalized vector $\langle a |_A$ in system $A$, as given in~\eqref{arbvec}. Acting with $\langle a |_A$ on the purification $|\Psi\rangle_{ABC}$~\eqref{purification2a} yields
\begin{equation}
\begin{aligned}
  |\phi_1\rangle_{BC} =  m^*\sqrt{\lambda_1}|b\rangle|u_1\rangle + n^*\sqrt{\lambda_2}|b^\perp\rangle |v_2\rangle.
\end{aligned}
\end{equation}
When $m\neq0$ and $n\neq0$, the conditions in Eqs.~\eqref{p2mixcond} guarantee that the conditional states on Bob’s side are mixed. In contrast, two subcases arise:
\begin{itemize}
\item[(i)] When $m=0$, 
the corresponding measurement effect is 
$\Pi_A = |\alpha_2\rangle\langle \alpha_2|$, and the conditional state reduces to the pure state
\begin{align}
    |\beta\rangle_B=|b^\perp\rangle_B.
\end{align}
\item[(ii)] When $n=0$, the corresponding measurement effect is 
$\widetilde{\Pi}_A = |\alpha_1\rangle\langle \alpha_1|$, and the conditional state reduces to the pure state
\begin{align}
    |\beta^\prime\rangle_B=|b\rangle_B.
\end{align}
\end{itemize}
Similarly, one obtains the following relations
\begin{equation}\label{relation2}
\begin{aligned}
    &\Pi_A|\alpha\rangle_A=0,\quad \Pi_B|\beta\rangle_B=0,\\
    &\widetilde{\Pi}_A|\alpha^\prime\rangle_A=0,\quad \Pi_B^\perp|\beta^\prime\rangle_B=0.
\end{aligned}
\end{equation}

\item \textbf{Case $|u_2\rangle \neq 0$ and $|v_2\rangle = 0$.}\label{threecases2} 
In this case, the corresponding pure conditional states on subsystem $A$ are
\begin{subequations}
\begin{align}
&|\alpha\rangle_A = |\alpha_2\rangle ,\\
&|\alpha^\prime\rangle_A = |\alpha_1\rangle .
\end{align}
\end{subequations}
The corresponding pure conditional states on subsystem $B$ are
\begin{subequations}
\begin{align}
&|\beta\rangle_B = |b\rangle,\\
&|\beta^\prime\rangle_B = |b^\perp\rangle .
\end{align}
\end{subequations}
The argument proceeds analogously to that in Case~\ref{threecases1}, and the details are omitted.

\item \textbf{Case $|u_2\rangle \neq 0$ and $|v_2\rangle \neq 0$.}\label{threecases3} 
In this case, there exist $k_u,k_v \in \mathbb{C}$ such that Eqs.~\eqref{ku12} and~\eqref{kv12} are satisfied. The states $|\psi_1\rangle_{AC}$ and $|\psi_2\rangle_{AC}$ can then be written as
\begin{subequations}
\begin{align}
|\psi_1\rangle_{AC} =& \left( k_u \sqrt{\lambda_1}|\alpha_1\rangle + \sqrt{\lambda_2}|\alpha_2\rangle\right) |u_2\rangle ,\\
|\psi_2\rangle_{AC}
=& \left( k_v \sqrt{\lambda_1}|\alpha_1\rangle + \sqrt{\lambda_2}|\alpha_2\rangle\right) |v_2\rangle.
\end{align}
\end{subequations}
Accordingly, the corresponding pure steered states on system $A$ are
\begin{subequations}
\begin{align}
  &|\alpha\rangle_A  \propto k_u \sqrt{\lambda_1}|\alpha_1\rangle + \sqrt{\lambda_2}|\alpha_2\rangle ,\\
  &|\alpha^\prime\rangle_{A}   \propto k_v \sqrt{\lambda_1}|\alpha_1\rangle + \sqrt{\lambda_2}|\alpha_2\rangle.
\end{align}
\end{subequations}
The requirement that these two steered states be distinct implies
\begin{align}\label{deffk}
    k_u \neq k_v.
\end{align}
The purification can therefore be expressed as follows
\begin{equation}\label{purif2c}
\begin{aligned}
    |\Psi\rangle_{ABC}=   &\sqrt{\lambda_1}|\alpha_1\rangle\left(k_u|b\rangle|u_2\rangle+k_v|b^\perp\rangle|v_2\rangle\right)
    +\sqrt{\lambda_2}|\alpha_2\rangle\left(|b\rangle|u_2\rangle+|b^\perp\rangle|v_2\rangle\right).
\end{aligned}
\end{equation}
Applying the vector $\langle b'|_B$ defined in Eq.~\eqref{arbbv} 
 to the purification~\eqref{purif2c} yields
\begin{equation}
\begin{aligned}
   |\psi_4\rangle_{AC}
   =&\sqrt{\lambda_1}|\alpha_1\rangle(\eta^* k_u |u_2\rangle + \gamma^* k_v |v_2\rangle)
   +\sqrt{\lambda_2}|\alpha_2\rangle(\eta^* |u_2\rangle+\gamma^* |v_2\rangle).
\end{aligned}
\end{equation}
The existence of exactly two distinct pure steered states requires that all other steered states on Alice’s side be mixed. Specifically, the following conditions must be satisfied:
\begin{equation}\label{nonlinearuv2}
\begin{aligned}
    & \lambda_1>0,\ \lambda_2>0,\\
    &|u_2\rangle \neq 0,\ |v_2\rangle \neq 0,\\
    &\nexists\, k_{2}\in \mathbb{C}\ \text{s.\ t.}\ |u_2\rangle = k_2 |v_2\rangle.
\end{aligned}
\end{equation}

\vspace{0.5\baselineskip}

We now show that the conditions~\eqref{deffk} and~\eqref{nonlinearuv2} imply steering from Alice to Bob to exactly two distinct pure states. To this end, consider an arbitrary normalized vector $\langle a |_A$ as given in Eq.~\eqref{arbvec}. Applying $\langle a |_A$ to the purification $|\Psi\rangle_{ABC}$ in Eq.~\eqref{purif2c} yields
\begin{equation}\label{phiBC2c}
\begin{aligned}
    |\phi_2\rangle_{BC}= 
    &|b\rangle\left(m^*k_u\sqrt{\lambda_1}+n^*\sqrt{\lambda_2}\right)|u_2\rangle
    +|b^\perp\rangle\left(m^*k_v\sqrt{\lambda_1}+n^*\sqrt{\lambda_2}\right)|v_2\rangle.
\end{aligned}
\end{equation}
When $m^*k_u\sqrt{\lambda_1}+n^*\sqrt{\lambda_2} \neq 0$ and $m^*k_v\sqrt{\lambda_1}+n^*\sqrt{\lambda_2} \neq 0$, Eqs.~\eqref{deffk} and~\eqref{nonlinearuv2} ensure that the corresponding conditional states on Bob’s side are mixed. 
In contrast, there are two subcases to consider:
\begin{itemize}
\item[(i)] When $m^*k_u\sqrt{\lambda_1}+n^*\sqrt{\lambda_2}=0$, 
The relevant measurement effect can be chosen as
$\Pi_A = |\alpha^\perp_u\rangle\langle \alpha^\perp_u|$, where $|\alpha^\perp_u\rangle\propto  \sqrt{\lambda_2}|\alpha_1\rangle - k_u^* \sqrt{\lambda_1}|\alpha_2\rangle$, and
the conditional state reduces to the pure state  
\begin{align}
    |\beta\rangle_B=|b^\perp\rangle_B.
\end{align}
\item[(ii)] When $m^*k_v\sqrt{\lambda_1}+n^*\sqrt{\lambda_2}=0$, 
the corresponding measurement effect is given by
$\widetilde{\Pi}_A = |\alpha^\perp_v\rangle\langle \alpha^\perp_v|$, where $|\alpha^\perp_v\rangle\propto  \sqrt{\lambda_2}|\alpha_1\rangle - k_v^* \sqrt{\lambda_1}|\alpha_2\rangle$, and
the conditional state reduces to the pure state
\begin{align}
    |\beta^\prime\rangle_B=|b\rangle_B.
\end{align}
\end{itemize}
Similarly, relations~\eqref{relation2} follow.
\end{enumerate}

\subsection{Analysis for $\eta \neq 0$.}

Two distinct pure states are obtained from the rank-one measurement effects
\begin{align}
\Pi_B=|b\rangle\langle b|_B,\quad
\Pi_B^\prime =|b^\prime\rangle\langle b^\prime|_B,
\end{align}
where $|b'\rangle_B$ is defined in Eq.~\eqref{arbbv} as
$|b'\rangle_B = \eta |b\rangle + \gamma |b^\perp\rangle$,
with $|\eta|^2+|\gamma|^2=1$ and $\eta,\gamma \neq 0$.
As in the proof of Lemma~\ref{onetangpoint}, applying $\langle b |_B$ and $\langle b^\perp |_B$ to the state $|\Psi\rangle_{ABC}$ in Eq.~\eqref{Schmidtds} yields the states $|\psi_1\rangle_{AC}$ and $|\psi_2\rangle_{AC}$ given in Eqs.~\eqref{phiAC}, respectively. Since $\Pi_B^\perp$ steers Alice's system to a mixed state, $|\psi_2\rangle_{AC}$ must satisfy the conditions~\eqref{mixcond0}. As $\Pi_B$ steers Alice's system to a pure state, three cases arise, corresponding to Case~\ref{twocases1}, ~\ref{twocases11}, and Case~\ref{twocases2} considered therein.
\begin{enumerate}
\item \textbf{Case $|u_1\rangle \neq 0$ and $|u_2\rangle = 0$.}\label{twopthree1}
In this case, the purification $|\Psi\rangle_{ABC}$ is given in Eq.~\eqref{purification1}, subject to condition~\eqref{orthogcondr}, and the state $|\psi_1\rangle_{AC}$ is given in Eq.~\eqref{psiAC1}. Hence, the pure steered state on system $A$ corresponding to $\Pi_B$ is
\begin{align}
    |\alpha\rangle_A = |\alpha_1\rangle .
\end{align}
Applying an arbitrary nonzero vector $\langle b'|_B$ given in Eq.~\eqref{arbbv} to the purification $|\Psi\rangle_{ABC}$~\eqref{purification1} yields the state $|\psi_3\rangle_{AC}$ of Eq.~\eqref{phiAC3}. 
The requirement that Alice's system can be steered to exactly two pure conditional states implies either
\begin{align}\label{twopurecond1}
    &\exists\, k_{1}\in \mathbb{C}\ \text{s.t.}\ 
    |u_1\rangle = k_{1} |v_1\rangle,
\end{align}
or that there exist $\eta,\gamma$ satisfying Eq.~\eqref{arbbv} such that
\begin{align}\label{twopurecond2}
    &\exists\, k_{2}\in \mathbb{C}\ \text{s.t.}\ 
    \eta^* |u_1\rangle+\gamma^* |v_1\rangle = k_2 |v_2\rangle.
\end{align}
Note that conditions~\eqref{mixcond0} and~\eqref{orthogcondr} ensure that the above two conditions cannot be satisfied simultaneously.
In fact, $\Pi_B$ and $\Pi_B^\perp$ steer Alice's system to a pure state and a mixed state, respectively. This implies that the two-qubit state $\rho_{AB}$ is mixed. Since a two-qubit mixed state admits at most two pure conditional states, there can be at most one additional pure conditional state. Accordingly, at most one of the above two conditions can be satisfied.
We now examine the two conditions separately.
Recall that $|b'\rangle_B = \eta |b\rangle + \gamma |b^\perp\rangle$, 
where $|\eta|^2 + |\gamma|^2 = 1$ and $\eta, \gamma \neq 0$. 
\begin{itemize}
\item[(i)] If the condition~\eqref{twopurecond1} is satisfied, one can choose suitable $\eta$ and $\gamma$ such that
\begin{align}
    k_1=-\frac{\gamma^*}{\eta^*},
\end{align} 
the corresponding measurement effect is given by
$\widetilde{\Pi}_B = |\tilde{b}\rangle\langle \tilde{b} |$, where $|\tilde{b}\rangle \propto  |b\rangle - k_1^* |b^\perp\rangle$, and
the state of system $A$ reduces to the second pure state
\begin{align}
    |\alpha^\prime\rangle_{A} = |\alpha_2\rangle.
\end{align}
\item[(ii)] If the condition~\eqref{twopurecond2} is satisfied, according to Eq.~\eqref{k3relation}, one can choose
\begin{align}
    k_3=-\frac{\gamma^*}{\eta^*},
\end{align}
so that 
the relevant measurement effect can be chosen as 
$\overline{\Pi}_B = |\bar{b}\rangle\langle \bar{b} |$, where $|\bar{b}\rangle \propto  |b\rangle - k_3^* |b^\perp\rangle$, the relation between $k_2$ and $k_3$ is given in Eq.~\eqref{k2k3relation}, and
the state of system $A$ then reduces to the second pure state
\begin{align}
    |\alpha^{\prime\prime}\rangle_{A} \propto k_2\sqrt{\lambda_1}|\alpha_1\rangle + \gamma^*\sqrt{\lambda_2}|\alpha_2\rangle.
\end{align}
\end{itemize}

\vspace{0.5\baselineskip}

We now show that if Bob can steer Alice to exactly two pure states, then Alice can steer Bob to exactly two pure states. Consider an arbitrary normalized vector $\langle a|_A$ in Eq.~\eqref{arbvec}. Applying $\langle a|_A$ to $|\Psi\rangle_{ABC}$ in Eq.~\eqref{purification1} yields $|\phi_1\rangle_{BC}$ in Eq.~\eqref{phiBC1}. 
Recall that $|a\rangle_A = m |\alpha_1\rangle +n |\alpha_2\rangle$ where $|m|^2+|n|^2=1$.
When $m=0$, the corresponding measurement effect is $\Pi_A = | \alpha_2 \rangle\langle \alpha_2 |$,
and the conditional state on Bob’s side reduces to the pure state
\begin{align}
    |\beta\rangle_B=|b^\perp\rangle_B.
\end{align}
When $m\neq 0$, two subcases arise:

\begin{itemize}
\item[(i)] If condition~\eqref{twopurecond1} holds, then for $n=0$, 
the corresponding measurement effect is given by $\widetilde{\Pi}_A = |\alpha_1\rangle\langle\alpha_1|$,
and Bob's conditional state reduces to
\begin{align}
    |\beta^\prime\rangle_B \propto k_1|b\rangle+|b^\perp\rangle.
\end{align}

\item[(ii)] If condition~\eqref{twopurecond2} holds, one can choose 
$m$ and $n$ such that a complex number $k_3 \in\mathbb{C}$ exists 
satisfying Eqs.~\eqref{existedk3} and~\eqref{existedk32}. The corresponding measurement effect is $\overline{\Pi}_A = |\bar{a} \rangle\langle\bar{a}|$, where $|\bar{a} \rangle \propto \gamma\sqrt{\lambda_2}|\alpha_1\rangle - k_2^*\sqrt{\lambda_1}|\alpha_2\rangle$,
and the conditional state then reduces to
\begin{align}
    |\beta^{\prime\prime}\rangle_B \propto k_3|b\rangle+|b^\perp\rangle.
\end{align}
\end{itemize}
Similarly, one obtains the following relations
\begin{equation}\label{relation3}
\begin{aligned}
    &\Pi_A|\alpha\rangle_A=0,\quad \Pi_B|\beta\rangle_B=0,\\
    &\widetilde{\Pi}_A|\alpha^\prime\rangle_A=0,\quad \widetilde{\Pi}_B|\beta^\prime\rangle_B=0,\\
    &\overline{\Pi}_A|\alpha^{\prime\prime}\rangle_A=0,\quad \overline{\Pi}_B|\beta^{\prime\prime}\rangle_B=0.
\end{aligned}
\end{equation}

\item \textbf{Case $|u_1\rangle = 0$ and $|u_2\rangle \neq 0$.}
In this case, the pure steered state on system $A$ corresponding to $\Pi_B$ is
\begin{align}
    |\alpha\rangle_A = |\alpha_2\rangle .
\end{align}
The remainder of the argument proceeds analogously to that of 
Case~\ref{twopthree1}, and we therefore omit the details. 
This case is also a special instance of the following case 
with $k_u=0$.

\item \textbf{Case $|u_1\rangle \neq 0$ and $|u_2\rangle \neq 0$.} In this case, there exists $k_u \in \mathbb{C}$ such that
\begin{align}
    |u_1\rangle = k_u |u_2\rangle .
\end{align}
Eqs.~(\ref{ku12}–\ref{phiAC4}) still hold. In particular, Alice's pure conditional state takes the form
\begin{align}
   |\alpha\rangle_A \propto 
   k_u \sqrt{\lambda_1}|\alpha_1\rangle
   + \sqrt{\lambda_2}|\alpha_2\rangle.
\end{align}
The corresponding measurement effect $\Pi_B$ is given in Eq.~\eqref{alpha2}. The purification $|\Psi\rangle_{ABC}$ is given in Eq.~\eqref{purif2}, and applying an arbitrary vector $\langle b'|_B$ defined in Eq.~\eqref{arbbv} to the purification $|\Psi\rangle_{ABC}$ in Eq.~\eqref{purif2} yields $|\psi_4\rangle_{AC}$ in Eq.~\eqref{phiAC4}. 
The requirement that Alice's system can be steered to exactly two different pure conditional states implies that either
\begin{align}\label{purecond2:a}
     &\exists\, g_{1}\in \mathbb{C}\ \text{s.\ t.}\ 
     |u_2\rangle = g_1|v_1\rangle, 
\end{align}
or
\begin{align}\label{purecond2:b}
     &\exists\, g_{2}\in \mathbb{C}\ \text{s.\ t.}\ 
     |u_2\rangle = g_2|v_2\rangle,
\end{align}
or there exist $\eta,\gamma$ satisfying Eq.~\eqref{arbbv} such that
\begin{align}\label{purecond2:c}
     &\exists\, g_{3}\in \mathbb{C}\setminus\{k_u\}\ \text{s.\ t.}\ 
     (g_3-k_u)\eta^* |u_2\rangle
     = \gamma^*( |v_1\rangle - g_3 |v_2\rangle).
\end{align}

There are three possible subcases:
\begin{itemize}
\item[(i)] If condition~\eqref{purecond2:a} holds, 
one can choose suitable 
$\eta$ and $\gamma$ such that
\begin{align}
    g_1=-\frac{\gamma^*}{\eta^* k_u},
\end{align}
so that 
the required measurement effect is therefore
$\widetilde{\Pi}_B = |\tilde{b}\rangle\langle \tilde{b} |$, where $|\tilde{b}\rangle \propto  |b\rangle - k_u^* g_1^* |b^\perp\rangle$, and
system $A$ reduces to the second pure state
\begin{align}
    |\alpha^\prime\rangle_{A} = |\alpha_2\rangle.
\end{align}

\item[(ii)] If condition~\eqref{purecond2:b} holds, one can choose suitable 
$\eta$ and $\gamma$ such that
\begin{align}
    g_2=-\frac{\gamma^*}{\eta^*},
\end{align}
so that 
the corresponding measurement effect is 
$\overline{\Pi}_B = |\overline{b}\rangle\langle \overline{b} |$, where $|\overline{b}\rangle \propto  |b\rangle -  g_2^* |b^\perp\rangle$, and
system $A$ reduces to the second pure state
\begin{align}
    |\alpha^{\prime\prime}\rangle_{A} = |\alpha_1\rangle.
\end{align}

\item[(iii)] If condition~\eqref{purecond2:c} holds, 
the relevant measurement effect can be chosen as 
$\widehat{\Pi}_B = |\hat{b}\rangle\langle \hat{b} |$, where $|\hat{b}\rangle \propto  |b\rangle -  g_4^* |b^\perp\rangle$, and
system $A$ directly reduces 
to the second pure state
\begin{align}
    |\alpha^{\prime\prime\prime}\rangle_{A} 
    \propto g_3\sqrt{\lambda_1}|\alpha_1\rangle
    +\sqrt{\lambda_2}|\alpha_2\rangle,
    \quad g_3\neq k_u.
\end{align}
\end{itemize}

We now show that if Bob can steer Alice to exactly two pure states, 
then Alice can steer Bob to exactly two pure states. 
Consider an arbitrary normalized vector $\langle a|_A$ 
in Eq.~\eqref{arbvec}. Applying $\langle a|_A$ to 
$|\Psi\rangle_{ABC}$ in Eq.~\eqref{purif2} yields 
$|\phi_2\rangle_{BC}$ in Eq.~\eqref{phiBC}.

When $m^*k_u\sqrt{\lambda_1}+n^*\sqrt{\lambda_2}=0$, 
the corresponding measurement effect is given by
$\Pi_A = |\check{a}\rangle\langle \check{a}|$, where $|\check{a}\rangle\propto  \sqrt{\lambda_2}|\alpha_1\rangle - k_u^* \sqrt{\lambda_1}|\alpha_2\rangle$, and
the conditional state on Bob’s side reduces to
\begin{align}
    |\beta\rangle_B=|b^\perp\rangle_B.
\end{align}

When $m^*k_u\sqrt{\lambda_1}+n^*\sqrt{\lambda_2}\neq 0$, 
three subcases arise:
\begin{itemize}
\item[(i)] If condition~\eqref{purecond2:a} holds, then for $n=0$, the corresponding measurement effect is given by $\widetilde{\Pi}_A = |\alpha_1\rangle\langle\alpha_1|$,
and
Bob’s conditional state reduces to
\begin{align}
    |\beta^\prime\rangle_B 
    \propto k_u g_1 |b\rangle + |b^\perp\rangle.
\end{align}

\item[(ii)] If condition~\eqref{purecond2:b} holds, then for $m=0$, the relevant measurement effect can be chosen as $\overline{\Pi}_A = |\alpha_2\rangle\langle\alpha_2|$, and
Bob’s conditional state reduces to
\begin{align}
    |\beta^{\prime\prime}\rangle_B 
    \propto g_2 |b\rangle + |b^\perp\rangle.
\end{align}

\item[(iii)] If condition~\eqref{purecond2:c} holds, one can choose 
$m$ and $n$ in Eq.~\eqref{arbvec}. 
This condition guarantees the existence of a parameter $g_4$ 
satisfying Eqs.~\eqref{opureg4} and~\eqref{opureg44}. The corresponding measurement effect is $\widehat{\Pi}_A = |\hat{a} \rangle\langle\hat{a}|$, where $|\hat{a} \rangle \propto \sqrt{\lambda_2}|\alpha_1\rangle - g_3^*\sqrt{\lambda_1}|\alpha_2\rangle$,
and Bob’s conditional state reduces to
\begin{align}
    |\beta^{\prime\prime\prime}\rangle_B 
    \propto g_4 |b\rangle+|b^\perp\rangle.
\end{align}
\end{itemize}
Similarly, one obtains the following relations
\begin{equation}\label{relation4}
\begin{aligned}
    &\Pi_A|\alpha\rangle_A=0,\quad \Pi_B|\beta\rangle_B=0,\\
    &\widetilde{\Pi}_A|\alpha^\prime\rangle_A=0,\quad \widetilde{\Pi}_B|\beta^\prime\rangle_B=0,\\
    &\overline{\Pi}_A|\alpha^{\prime\prime}\rangle_A=0,\quad \overline{\Pi}_B|\beta^{\prime\prime}\rangle_B=0,\\
    &\widehat{\Pi}_A|\alpha^{\prime\prime\prime}\rangle_A=0,\quad \widehat{\Pi}_B|\beta^{\prime\prime\prime}\rangle_B=0.
\end{aligned}
\end{equation}
\end{enumerate}

\end{proof}

\section{Applications}

\subsection{X-states with $N$ pure steered states: EPR steering and Bell nonlocality}

\begin{figure*}[ht]
    \centering
    \includegraphics[width=0.35\textwidth]{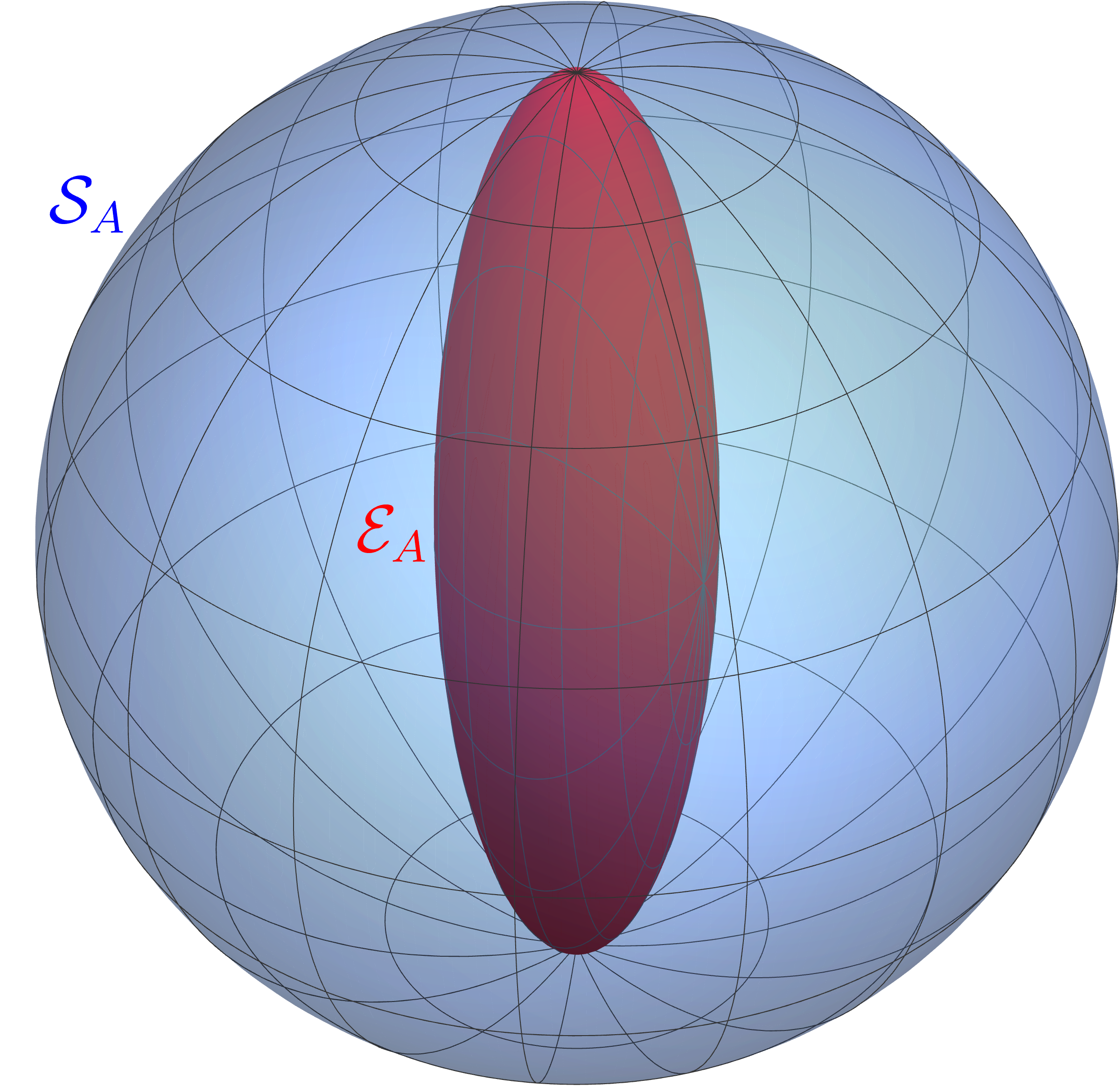}
    \hspace{0.075\textwidth}
    \includegraphics[width=0.35\textwidth]{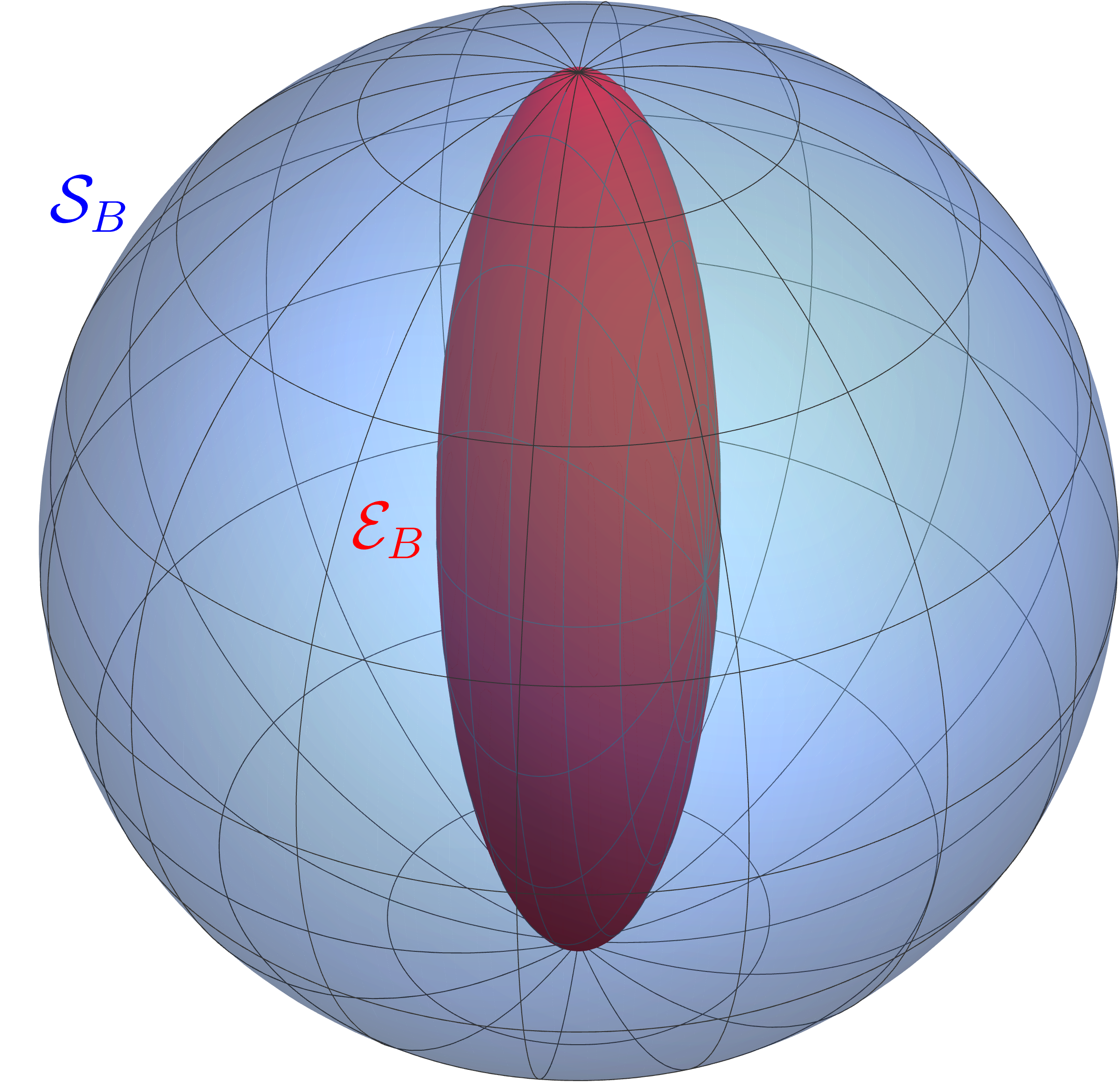}
    \caption{QSEs for the X-states~\eqref{x-state} with $a_z=b_z=1/3$, $t_x=-t_y=1/4$, and $t_z=1$. The two QSEs $\mathcal{E}_A$ and $\mathcal{E}_B$ are identical, with semiaxes $3/(8\sqrt{2})$, $3/(8\sqrt{2})$, and $1$, and are tangent to the Bloch sphere $\mathcal{S}$ at both $|0\rangle$ and $|1\rangle$. The two pure steered states are obtained from a single projective measurement.}
    \label{figexap3}
\end{figure*}

\begin{figure*}[t]
    \centering
    \includegraphics[width=0.35\textwidth]{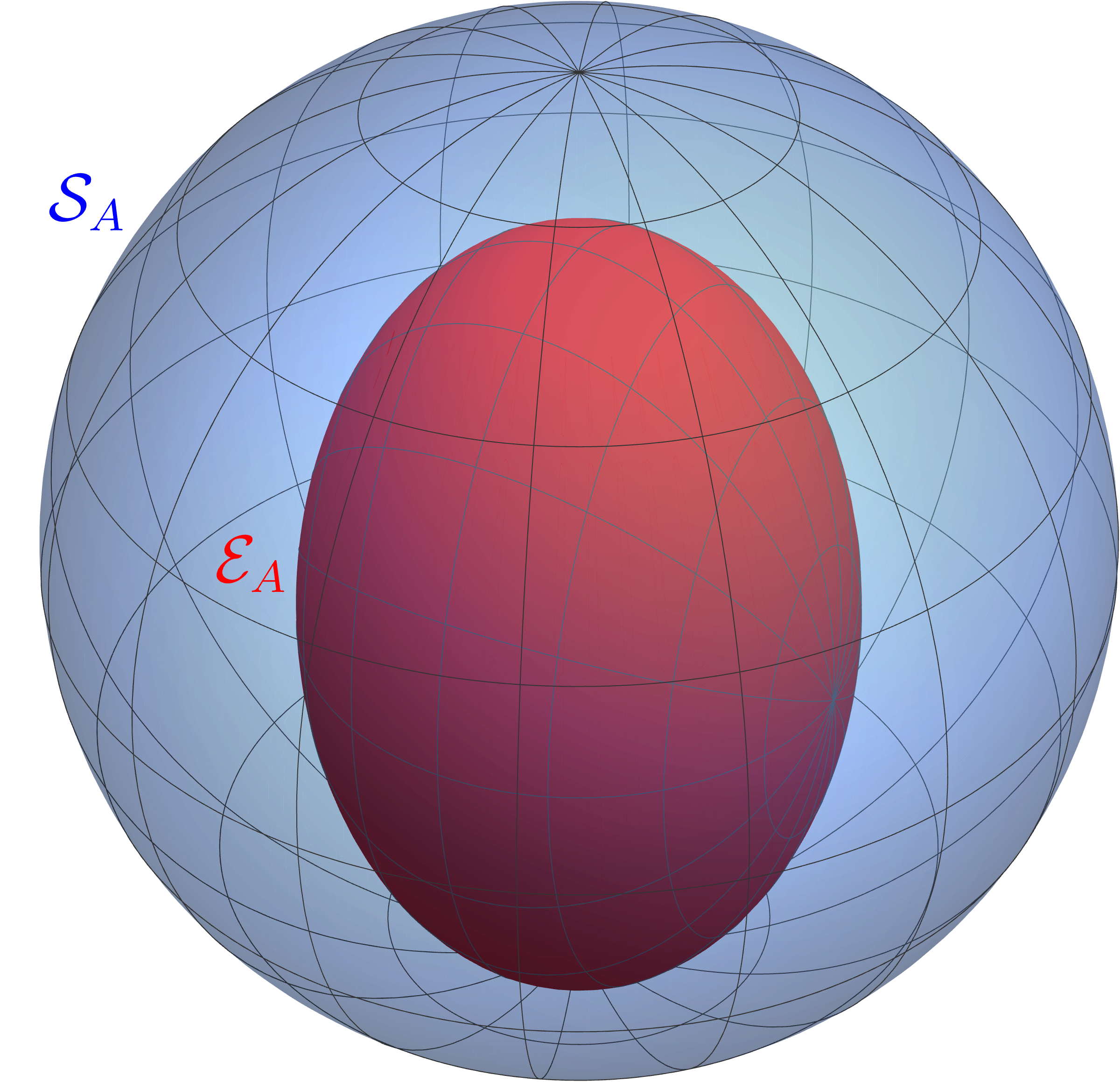}
    \hspace{0.075\textwidth}
    \includegraphics[width=0.35\textwidth]{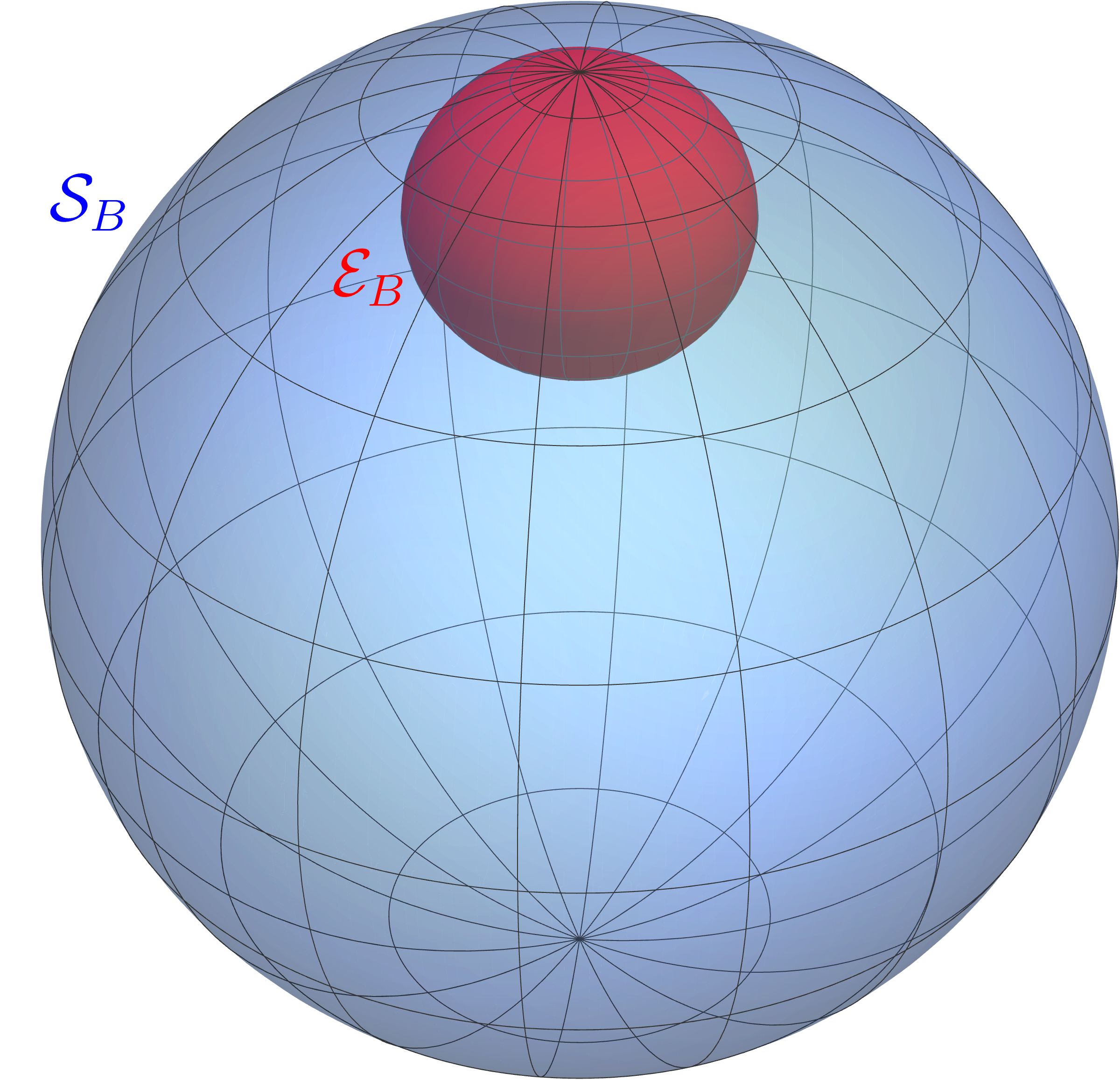}
    \caption{QSEs for the state in Eq.~\eqref{asym} with $q=1/3$, $\eta=1/4$, and $\epsilon=1/5$. Alice's QSE $\mathcal{E}_A$, with semiaxes $\sqrt{2/7}$, $\sqrt{2/7}$, and $23/28$, is tangent to her Bloch sphere $\mathcal{S}_A$ at $|1\rangle$, whereas Bob's QSE $\mathcal{E}_B$, with semiaxes $4/\sqrt{161}$, $4/\sqrt{161}$, and $2/7$, is tangent to his Bloch sphere $\mathcal{S}_B$ at $|0\rangle$.}
    \label{figexap1}
\end{figure*}

\begin{figure*}[ht]
    \centering
    \includegraphics[width=0.35\textwidth]{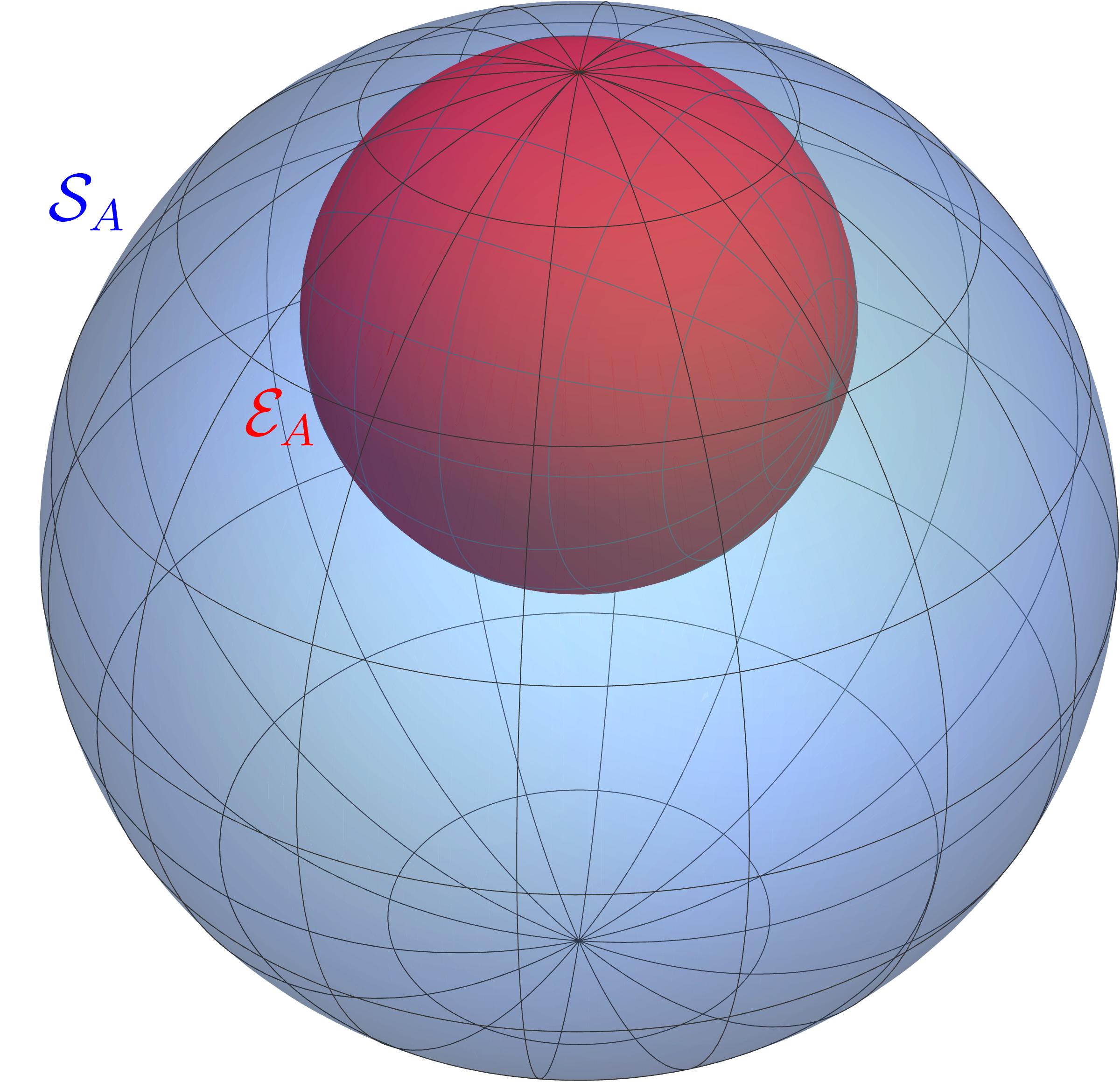}
    \hspace{0.075\textwidth}
    \includegraphics[width=0.35\textwidth]{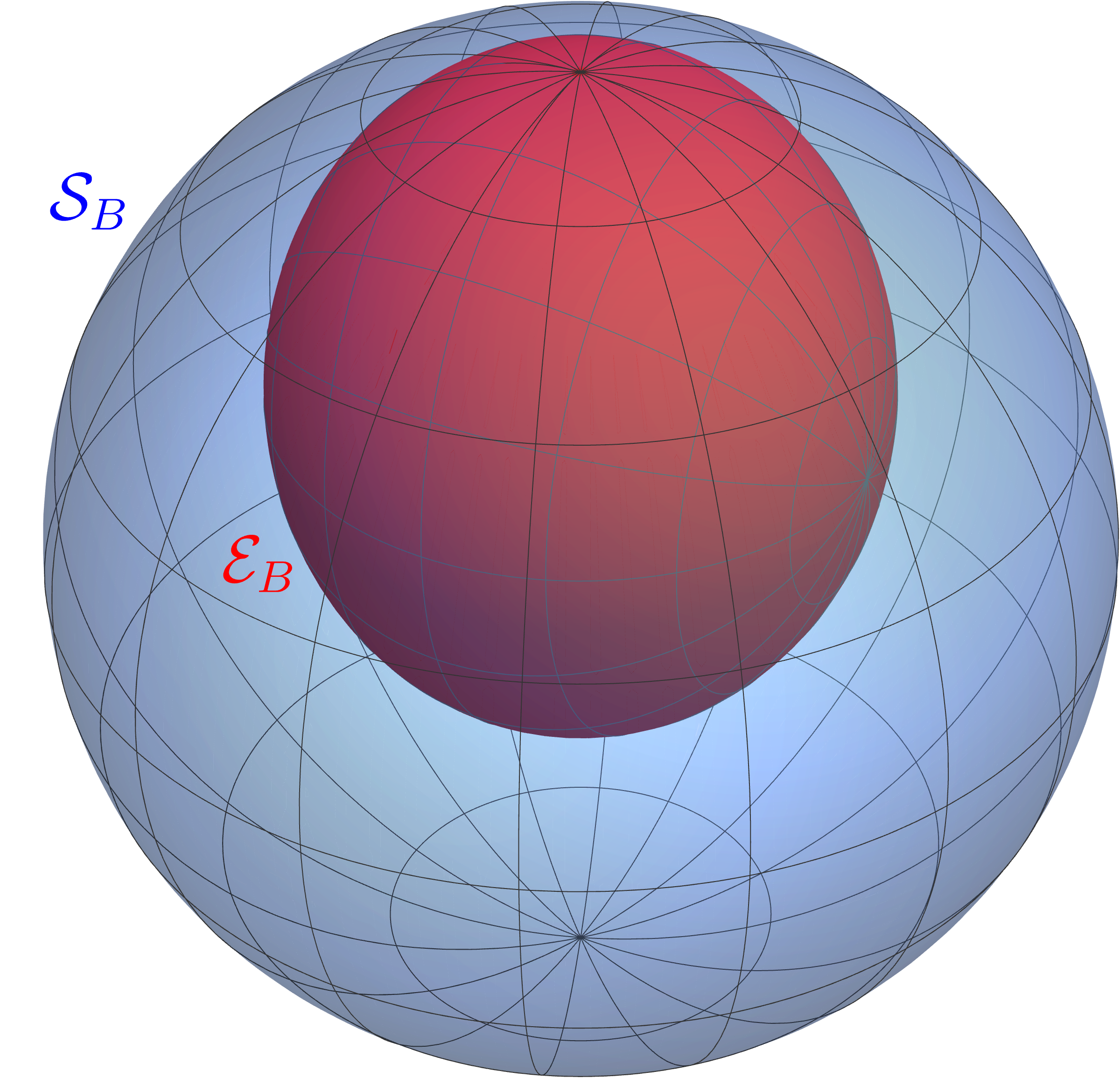}
    \caption{QSEs for the state $\rho_{AB}$~\eqref{examples} with $q=1/2$. The two QSEs are both tangent to the Bloch sphere $\mathcal{S}$ at the same point $|0\rangle$. Alice's QSE $\mathcal{E}_A$ is a sphere with radius $1/2$, while Bob's QSE $\mathcal{E}_B$ has semiaxes $1/\sqrt{3}$, $1/\sqrt{3}$, and $2/3$.}
    \label{figexap2}
\end{figure*}

A representative class of examples is provided by X-states~\cite{xstates}, 
for which the associated QSE 
can exhibit 0, 1, 2, or $\infty$ tangency points with the Bloch sphere 
under different conditions.
In the computational basis 
$\{|00\rangle, |01\rangle, |10\rangle, |11\rangle\}$, 
two-qubit X-states are characterized by a density matrix whose nonzero elements appear only along the main diagonal and anti-diagonal, forming an X-shaped structure~\cite{xstates}. 
Such states can be equivalently expressed in terms of Pauli operators, as
\begin{align}\label{x-state}
    \rho_X=\frac{1}{4}\left(I\otimes I+a_z\sigma_z\otimes I+I\otimes b_z\sigma_z+\sum_{i=x,y,z}t_i\sigma_i\otimes \sigma_i\right).
\end{align}
The positivity of X-states requires
\begin{align}\label{positivityXs}
    (1\pm t_z)^2 \geq (a_z\pm b_z)^2+(t_x\mp t_y)^2.
\end{align}
X-states are entangled if and only if either 
\begin{align}
    (1 + t_z)^2 < (a_z + b_z)^2 + (t_x + t_y)^2,
\end{align}
or
\begin{align}
    (1 - t_z)^2 < (a_z - b_z)^2 + (t_x - t_y)^2.
\end{align}

If condition $t_z=1+a_z-b_z$ holds, the two-qubit X-states are called tangent X-states~\cite{Song2023}. 
then Alice’s QSE $\mathcal{E}_A$ 
is tangent to Alice’s Bloch sphere $\mathcal{S}_A$ at $|1\rangle$, 
while Bob’s QSE $\mathcal{E}_B$ 
is tangent to Bob’s Bloch sphere $\mathcal{S}_B$ at $|0\rangle$.
According to Theorem~\ref{twowaysteering}, if the tangent X-states are entangled, then they are two-way steerable.
Similarly, If condition $t_z=1-a_z+b_z$ holds, 
then $\mathcal{E}_B$ 
is tangent to $\mathcal{S}_B$ at $|1\rangle$, 
while $\mathcal{E}_A$ 
is tangent to  $\mathcal{S}_A$ at $|0\rangle$.

If conditions $t_z=1$ and $a_z=b_z$ hold, the positivity condition~\eqref{positivityXs} yields $t_x=-t_y$.
In this case, $\mathcal{E}_B$ is tangent to $\mathcal{S}_B$ 
at both $|0\rangle$ and $|1\rangle$. 
These two pure states can be obtained by a single projective measurement, 
and similarly for Alice.
An illustrative example for $a_z=1/3$ and $t_x=1/4$ is shown in Fig.~\ref{figexap3}.
According to the Horodecki criterion~\cite{Horodecki95}, 
X-states~\eqref{x-state} violate the Clauser-Horne-Shimony-Holt (CHSH) inequality~\cite{CHSH} 
whenever $\sqrt{s_1^2+s_2^2}>1$, 
where $s_j$ $(j=1,2,3)$ denote the singular values, in decreasing order, of the correlation matrix 
$T=\mathrm{diag}(t_x,t_y,t_z)$. 
In this case, $t_z=1$, and therefore entangled X-states necessarily violates the CHSH inequality.
%\end{itemize}

If the X-states are pure, then the number of tangency points is infinite. 
If none of the above conditions are satisfied, then the QSE 
has no tangency points with the Bloch sphere.

A specific example is provided by the following entangled state:
\begin{align}\label{asym}
    \rho_\mathrm{asym}=q|\psi_\epsilon\rangle\langle \psi_\epsilon| + (1-q)\rho_\eta^A\otimes |0\rangle\langle 0|,  
\end{align}
where $|\psi_\epsilon\rangle:=\sqrt{1-\epsilon}|00\rangle + \sqrt{\epsilon}|11\rangle$, 
$|\psi_\eta\rangle:=\sqrt{1-\eta}|00\rangle + \sqrt{\eta}|11\rangle,\ (0< q, \epsilon, \eta <1)$ and $\rho^A_\eta=\Tr_B(|\psi_\eta\rangle\langle \psi_\eta|)$.  
The calculation shows that $\mathcal{E}_A$ is tangent to $\mathcal{S}_A$ 
at the state $|1\rangle$, while $\mathcal{E}_B$ is tangent to 
$\mathcal{S}_B$ at the state $|0\rangle$. This asymmetric entangled state $\rho_\mathrm{asym}$ is not one-way steerable but instead two-way steerable under all projective measurements. 
As an illustrative example, Fig.~\ref{figexap1} shows the case with 
$q=1/3$, $\eta=1/4$, and $\epsilon=1/5$ in the state defined in Eq.~\eqref{asym}.
When $\eta=\epsilon$, this state reduces to the bipartite reduced semirandom pair entangled (SRPE) state~\cite{Songqc2023}. When $\epsilon=\eta=1/2$, the state $\rho_\eta^A$ reduces to $I/2$, and the state~\eqref{asym} is locally unitarily equivalent to 
the state~\eqref{examples}. We also provide another illustrative example, 
as shown in Fig.~\ref{figexap2}, corresponding to the case 
$q=1/2$ in the state given in Eq.~\eqref{examples}.

\subsection{Examples of two-qubit entangled states admitting exactly one pure steered state}
We now present three examples of two-qubit entangled states parameterized by three real parameters $0<x,y,z<1$.
The first example is given by
\begin{align}\label{exastate1p}
\rho_{AB}=\left(
\begin{array}{cccc}
 K(1-z) & X (1-2 y) (1-z) &  X Y Z & (1-x) Y Z \\
 X (1-2 y) (1-z) & (1-K)(1-z) & - x Y Z & - X Y Z \\
  X Y Z & - x Y Z & x z & X z \\
 (1-x) Y Z & - X Y Z & X z & (1-x)z \\
\end{array}
\right),
\end{align}
where 
%$0<x,y,z<1$ and
\begin{align}
    X:=\sqrt{x(1-x)},\quad Y:=\sqrt{\frac{2y}{3}} ,\quad Z:=\sqrt{z(1-z)},\quad K:=x+y-2xy.
\end{align}
Alice can steer Bob's system to a unique pure state
\begin{align}
|\beta\rangle_B = \sqrt{x}|0\rangle + \sqrt{1-x} |1\rangle
\end{align}
by the measurement effect  $|1\rangle\langle 1|$.
Similarly, Bob can steer Alice's system to a unique pure state
\begin{align}
|\alpha\rangle_A = |0\rangle
\end{align}
by the measurement effect $|\bar{b}\rangle\langle \bar{b}|$, 
where $|\bar{b}\rangle=\sqrt{1-x}|0\rangle - \sqrt{x} |1\rangle$.
Figure~\ref{fig:exastate1p} illustrates the state $\rho_{AB}$~\eqref{exastate1p} for $x=3/4$, $y=1/2$, and $z=1/2$.
\begin{figure*}[t]
    \centering
    \includegraphics[width=0.35\textwidth]{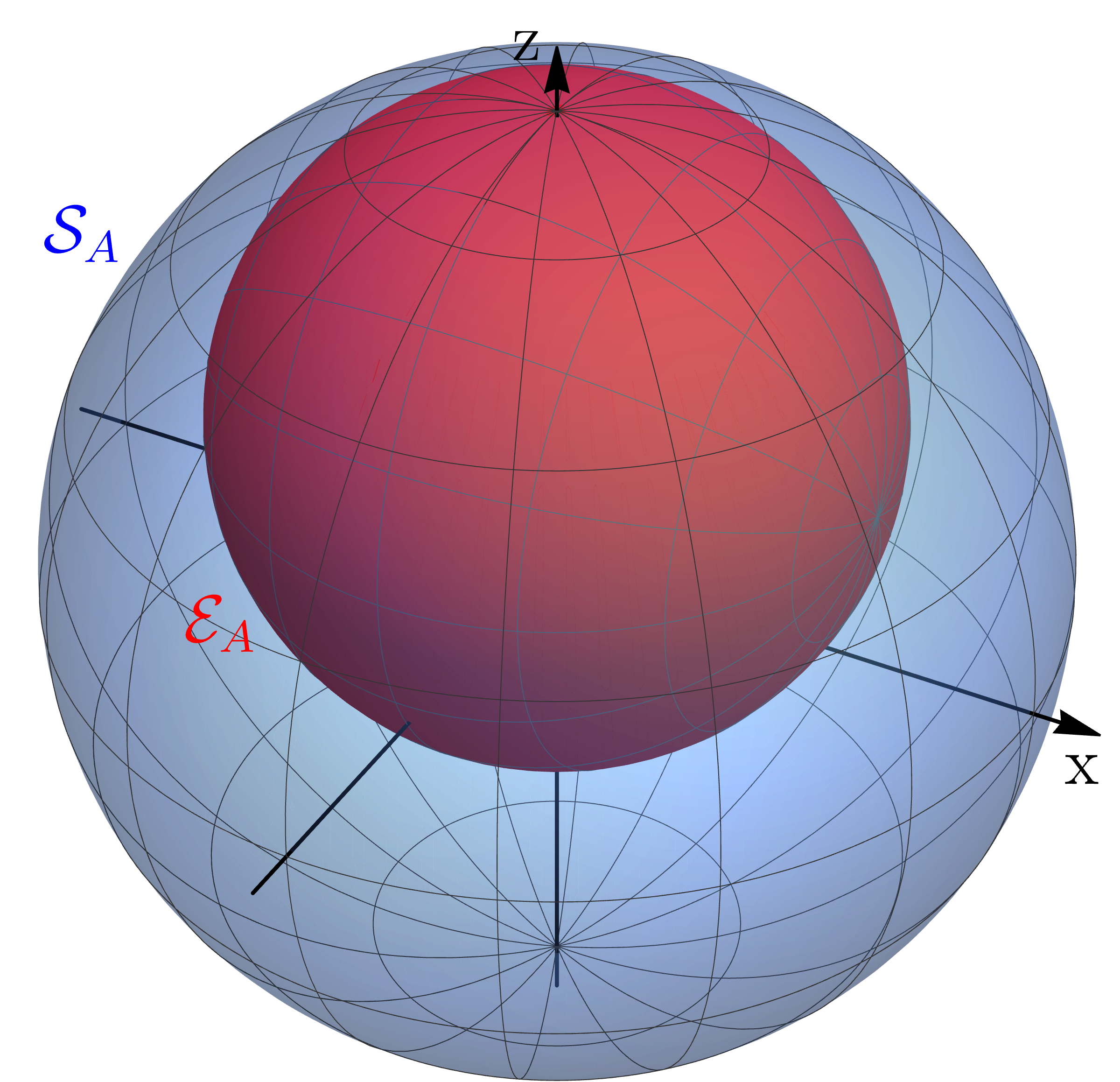}
    \hspace{0.075\textwidth}
    \includegraphics[width=0.35\textwidth]{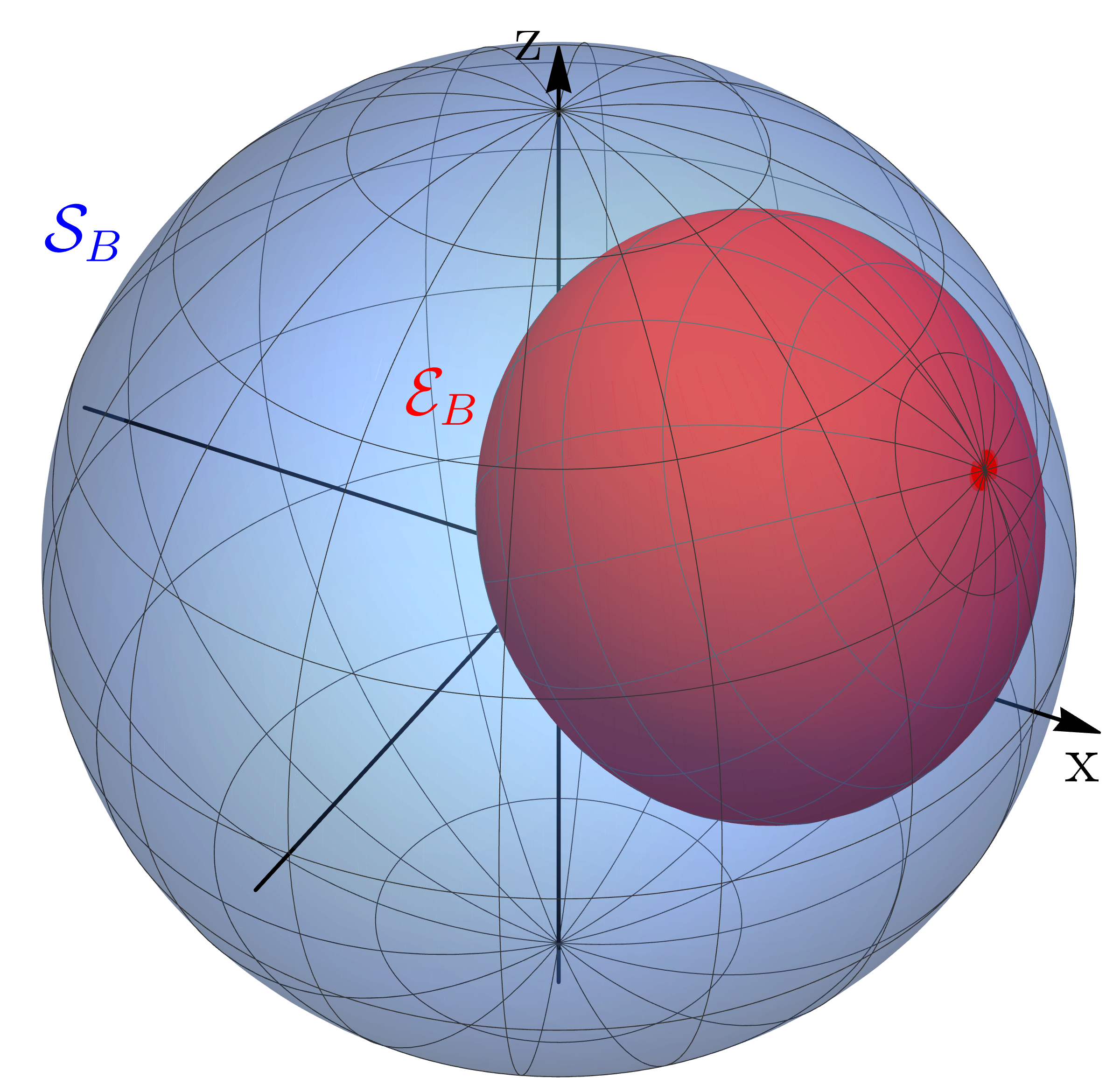}
    \caption{QSEs for the state $\rho_{AB}$~\eqref{exastate1p} with $x=3/4$, $y=1/2$, and $z=1/2$. Alice's QSE $\mathcal{E}_A$ is a sphere with radius $2/3$ and is tangent to her Bloch sphere $\mathcal{S}_A$ at $|0\rangle$. Bob's QSE $\mathcal{E}_B$, with semiaxes $1/\sqrt{3}$, $1/\sqrt{3}$, and $1/2$, is tangent to his Bloch sphere $\mathcal{S}_B$ at the state $(\sqrt{3}|0\rangle + |1\rangle)/2$.}
    \label{fig:exastate1p}
\end{figure*}

\vspace{1.0\baselineskip}

\begin{figure*}[ht]
    \centering
    \includegraphics[width=0.35\textwidth]{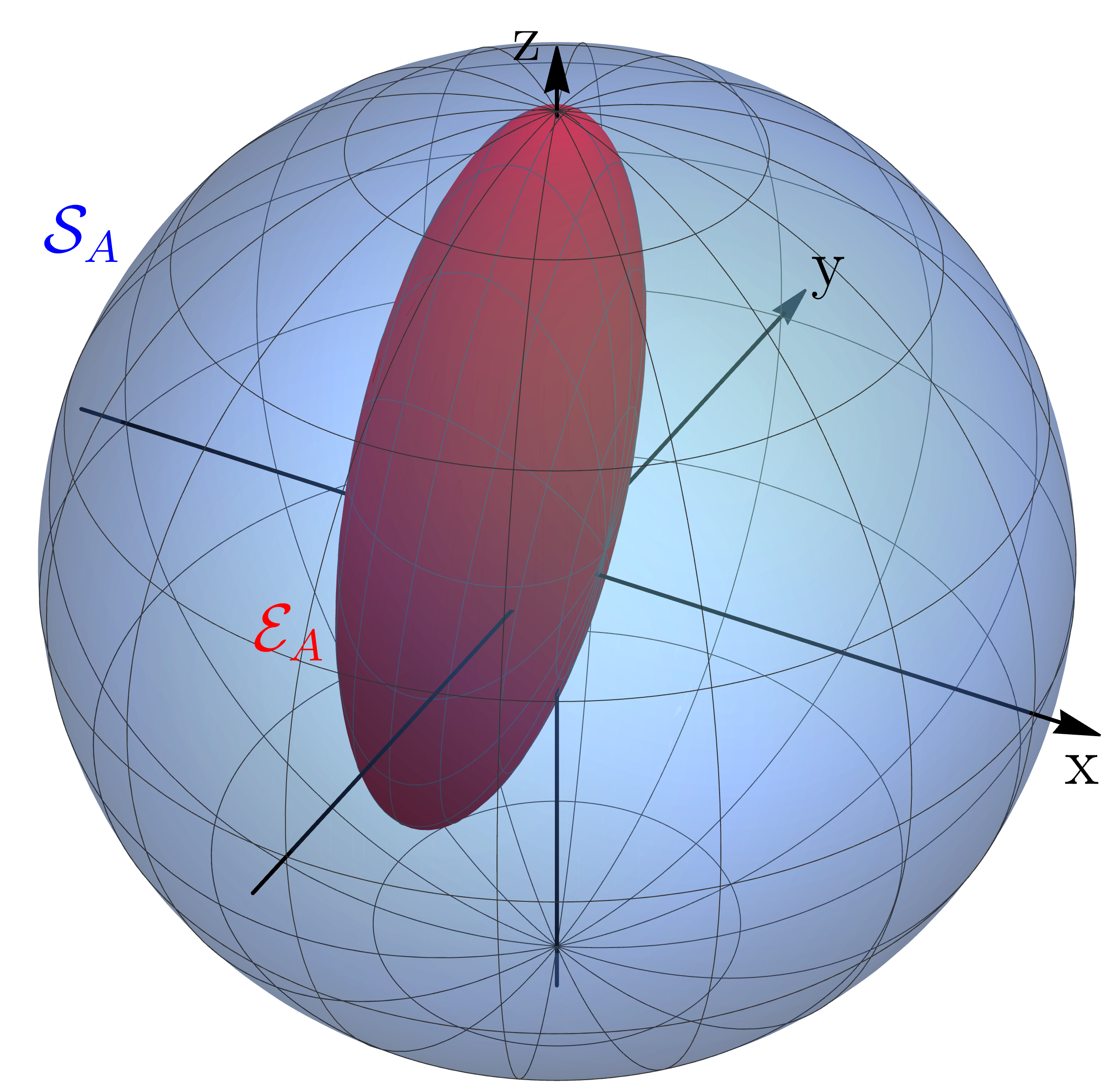}
    \hspace{0.075\textwidth}
    \includegraphics[width=0.35\textwidth]{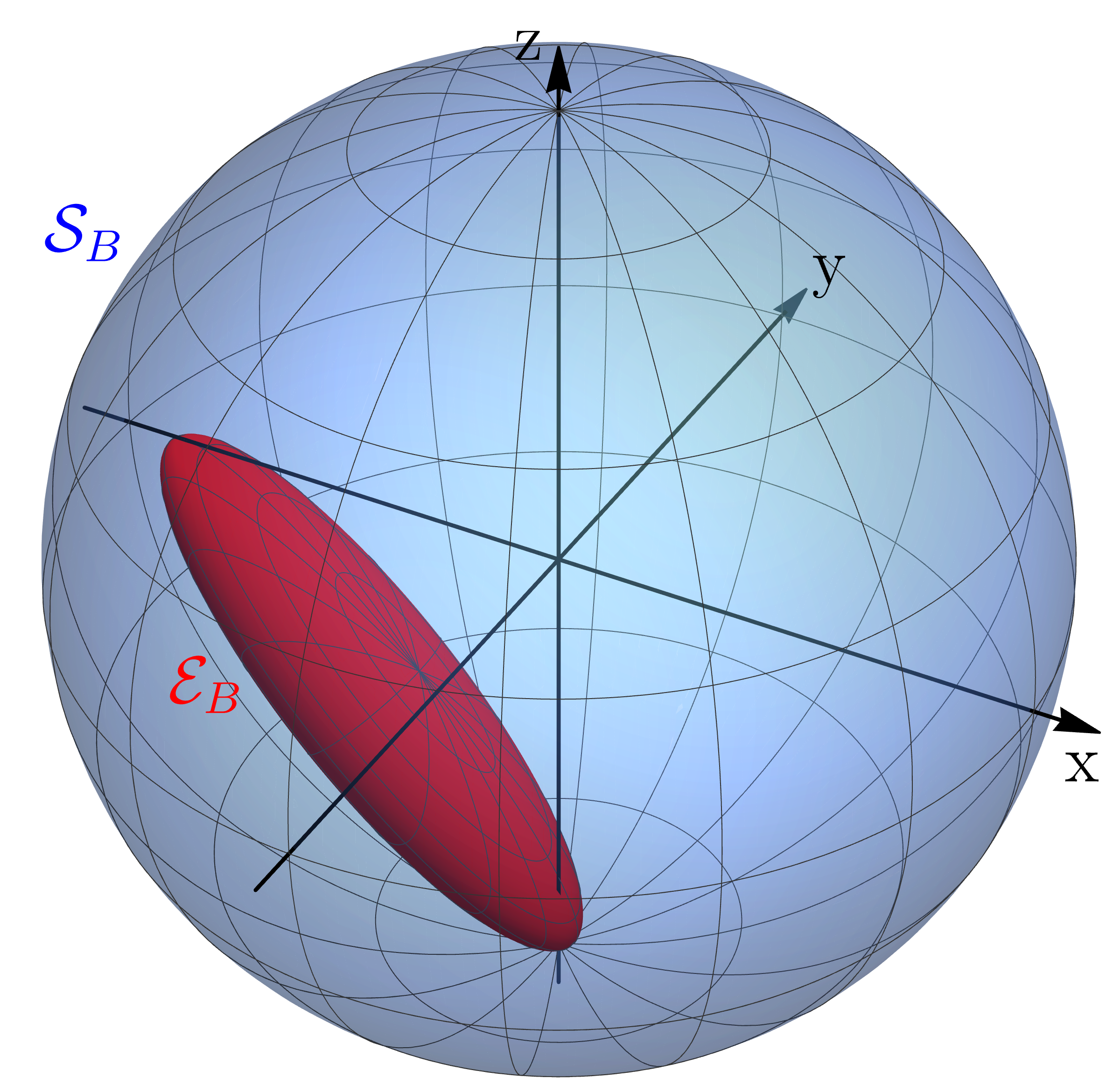}
    \caption{QSEs for the state $\rho_{AB}$~\eqref{exastate1p2} with $x=1/2$, $y=1/2$, and $z=1/3$. Alice's QSE $\mathcal{E}_A$ is tangent to her Bloch sphere $\mathcal{S}_A$ at $|0\rangle$, while Bob's QSE $\mathcal{E}_B$ is tangent to his Bloch sphere $\mathcal{S}_B$ at $|1\rangle$.}
    \label{fig:figp2}
\end{figure*}

The second example is defined as
\begin{align}\label{exastate1p2}
\rho_{AB}=\left(
\begin{array}{cccc}
 x y & - X y Z_{+} & 0 &  \sqrt{x} Y Z_{-} \\
 - X y Z_{+} & y-x y & 0 & 0 \\
 0 & 0 & 0 & 0 \\
  \sqrt{x} Y Z_{-} & 0 & 0 & 1-y \\
\end{array}
\right)
\end{align}
where 
%$0<x,y,z<1$ and
\begin{align}
    X:= \sqrt{x(1-x)},\quad Y:=\sqrt{y(1-y)} ,\quad Z_{\pm}:=\frac{1}{2}\left(\sqrt{z} \pm \sqrt{2-2 z}\right).
\end{align}
Alice can steer Bob's system to exactly one pure state
\begin{align}
|\beta\rangle_B = |1\rangle
\end{align}
by the measurement effect  $|1\rangle\langle 1|$.
Similarly, Bob can steer Alice to exactly one pure state
\begin{align}
|\alpha\rangle_A = |0\rangle
\end{align}
by the measurement effect $|0\rangle\langle 0|$.
We show an example for the state $\rho_{AB}$~\eqref{exastate1p2} with $x=1/2$, $y=1/2$, and $z=1/3$ as shown in Fig.~\ref{fig:figp2}.

\vspace{1.0\baselineskip}

\begin{figure*}[t]
    \centering
    \includegraphics[width=0.35\textwidth]{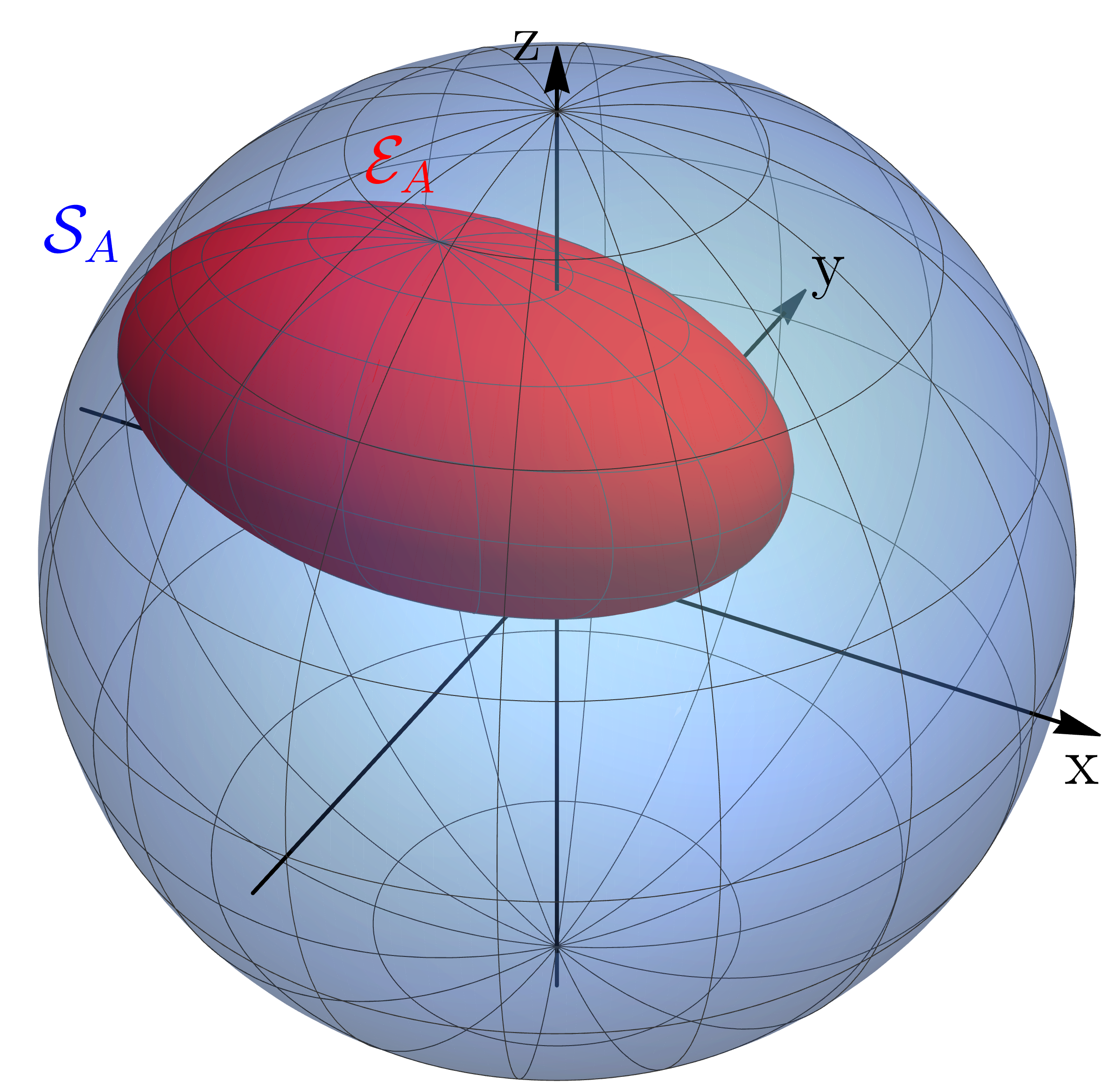}
    \hspace{0.075\textwidth}
    \includegraphics[width=0.35\textwidth]{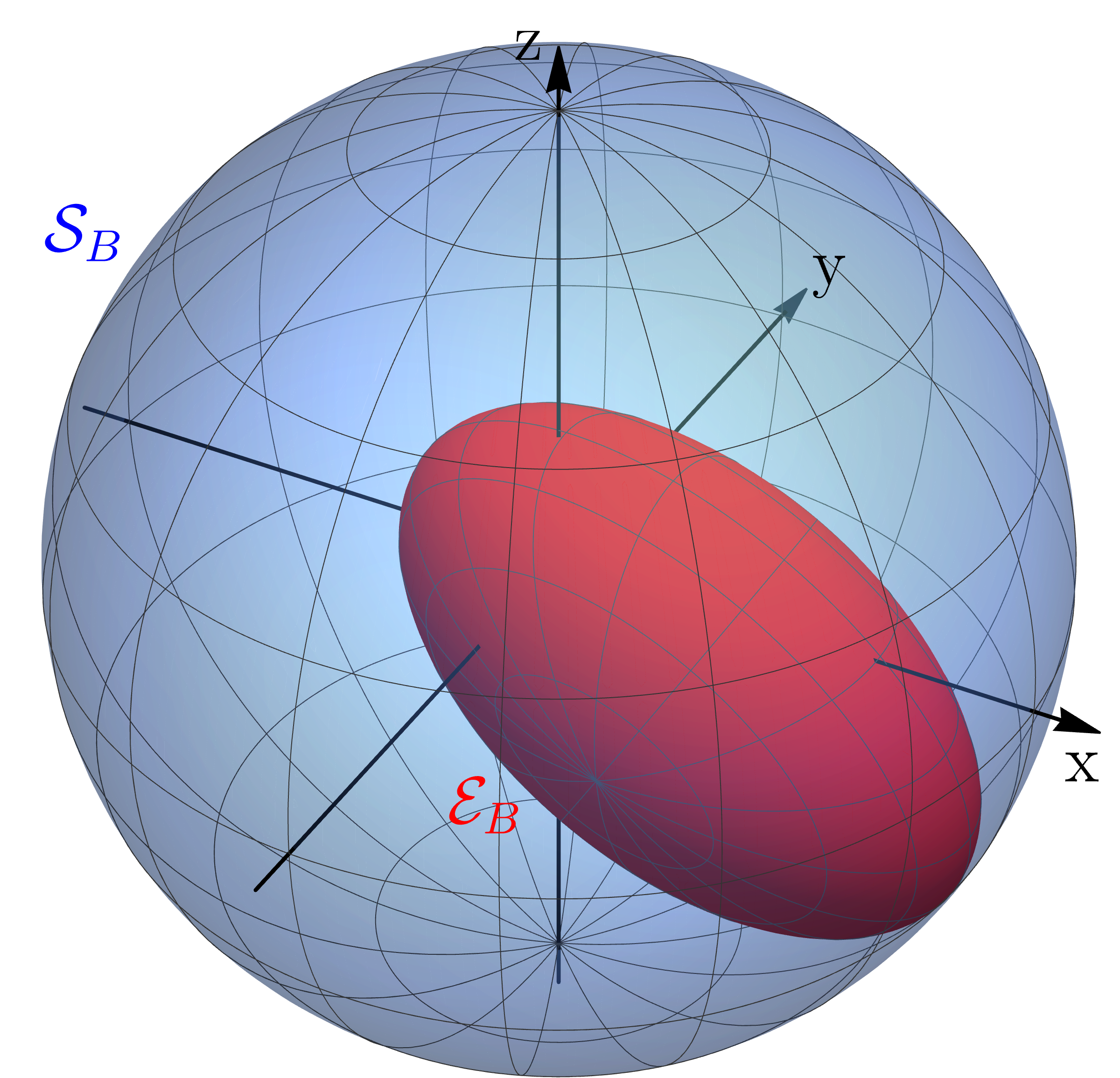}
    \caption{QSEs for the state $\rho$~\eqref{exastate1p3} with $x=1/4$, $y=1/3$, and $z=1/3$. Alice's QSE $\mathcal{E}_A$ is tangent to her Bloch sphere $\mathcal{S}_A$ at $(|0\rangle + \sqrt{3}|1\rangle)/2$, while Bob's QSE $\mathcal{E}_B$ is tangent to his Bloch sphere $\mathcal{S}_B$ at $(\sqrt{2}|0\rangle - |1\rangle)/\sqrt{3}$.}
    \label{fig:figp3}
\end{figure*}

The third example takes the form
\begin{align}\label{exastate1p3}
\rho=\frac{1}{3}\left(
\begin{array}{cccc}
 (1-z) \left(1+x-2 X \sqrt{y}\right) & (1-z) \left((2 x-1) \sqrt{y}+X\right) & (2 x-1) Z &  Z \left(2 X-\sqrt{y}\right) \\
 (1-z) \left((2 x-1) \sqrt{y}+X\right) &  (z-1) \left(x-2-2 X \sqrt{y}\right) & Z \left(2 X+\sqrt{y}\right) &  (1-2 x) Z \\
  (2 x-1) Z &  Z \left(2 X+\sqrt{y}\right) &  z \left(2 X \sqrt{y}+x+1\right) &  z \left((1-2 x) \sqrt{y}+X\right) \\
  Z \left(2 X-\sqrt{y}\right) &  (1-2 x) Z &  z \left((1-2 x) \sqrt{y}+X\right) & z \left(2-x-2 X \sqrt{y}\right) \\
\end{array}
\right)
\end{align}
where 
%$0<x,y,z<1$ and
\begin{align}
    X:= \sqrt{x(1-x)} ,\quad Z:=\sqrt{z(1-z)}.
\end{align}
Alice can steer Bob's system to a unique pure state
\begin{align}
|\beta\rangle_B = \sqrt{x} |0\rangle + \sqrt{1-x} |1\rangle.
\end{align}
Bob can steer Alice's system to a unique pure state
\begin{align}
|\alpha\rangle_A = \sqrt{1-z} |0\rangle - \sqrt{z} |1\rangle.
\end{align}
We show an illustrative example for the state $\rho_{AB}$~\eqref{exastate1p3} with $x=1/4$, $y=1/3$, and $z=1/3$ as shown in Fig.~\ref{fig:figp3}.

\subsection{Examples of two-qubit entangled states admitting exactly two pure steered states}

We now present two examples of two-qubit entangled states 
parameterized by three real parameters $0<x,y<1$.
\begin{figure*}[t]
    \centering
    \includegraphics[width=0.35\textwidth]{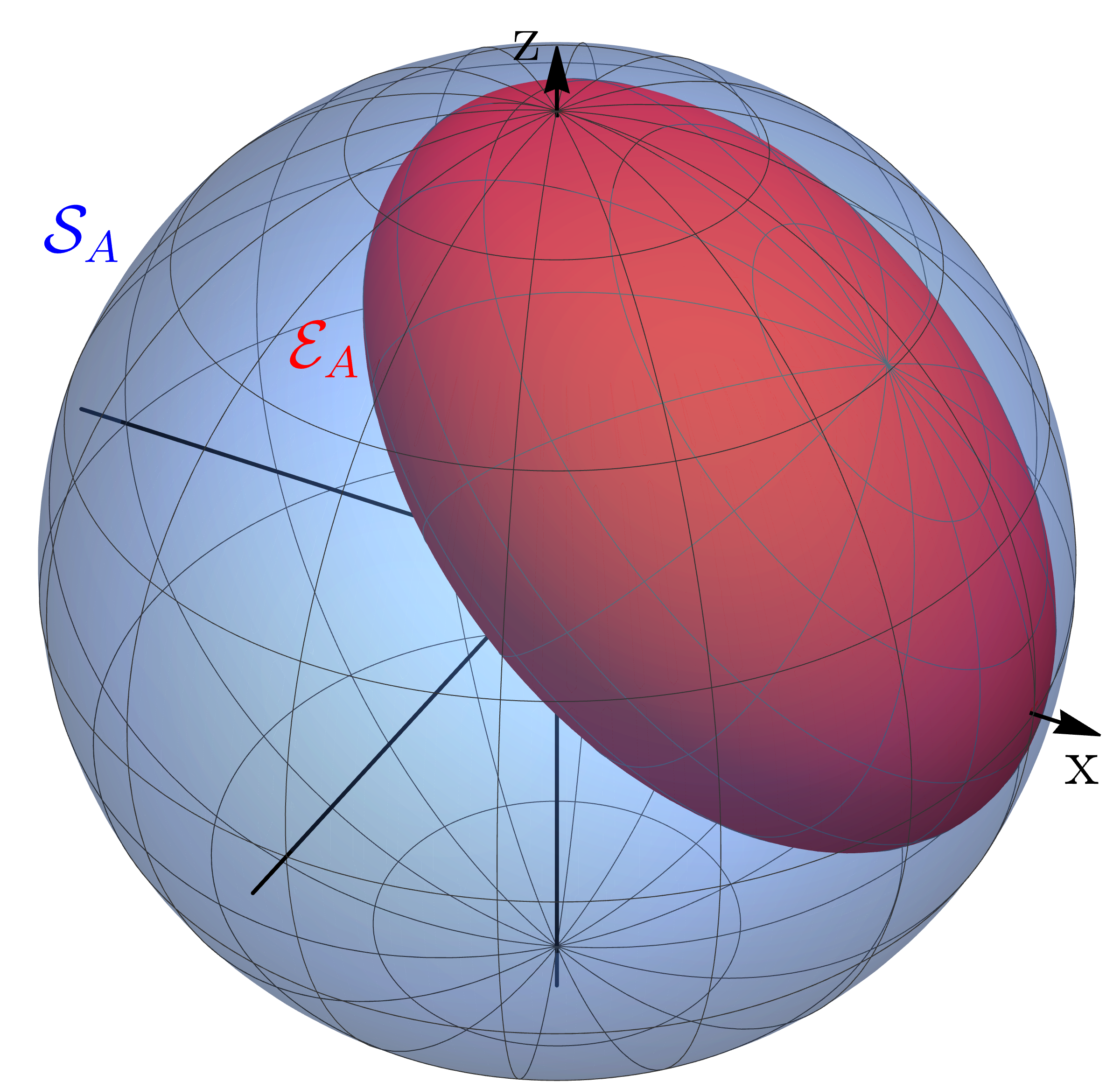}
    \hspace{0.075\textwidth}
    \includegraphics[width=0.35\textwidth]{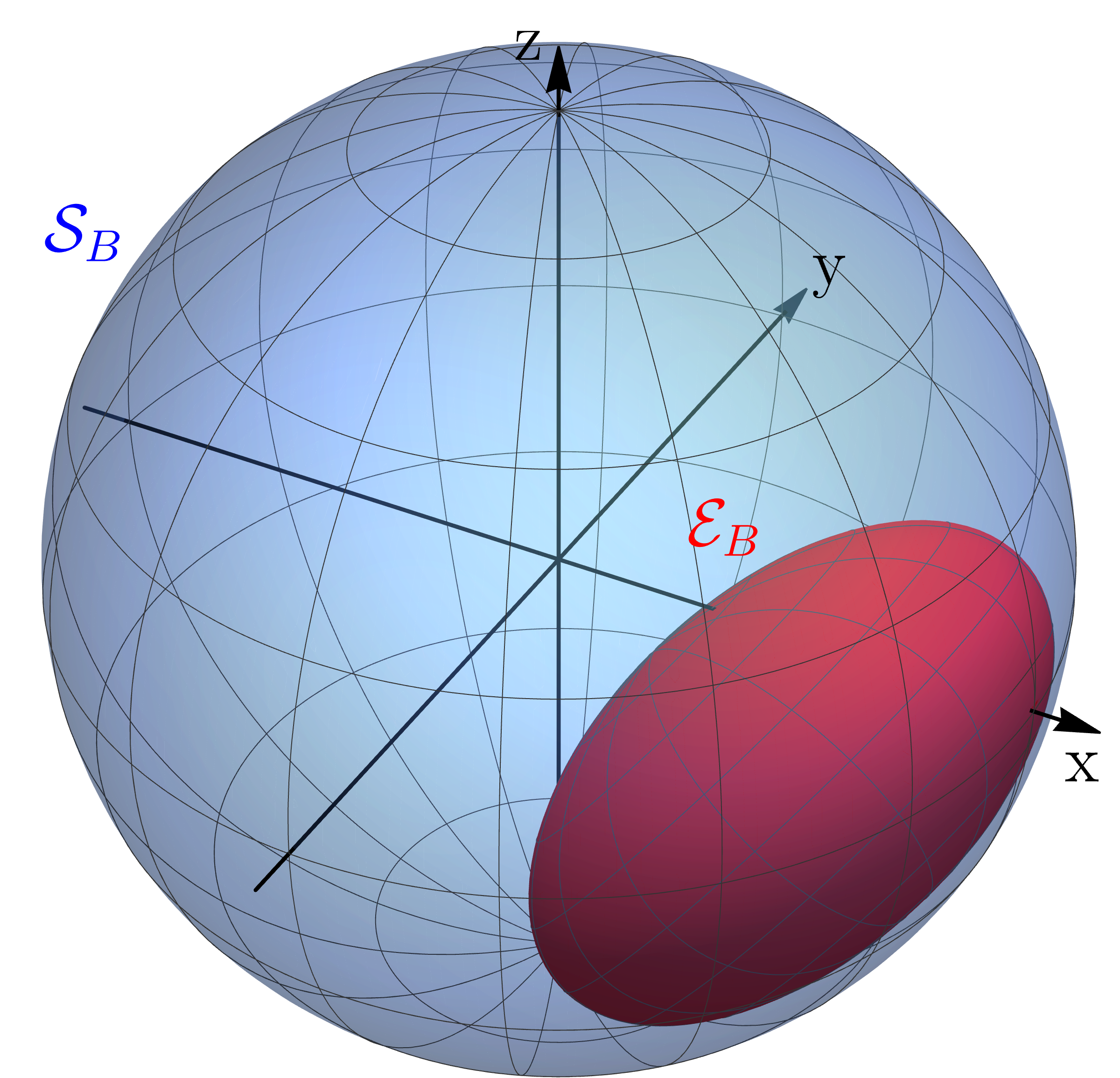}\\
    \vspace{0.075\textwidth}
    \includegraphics[width=0.35\textwidth]{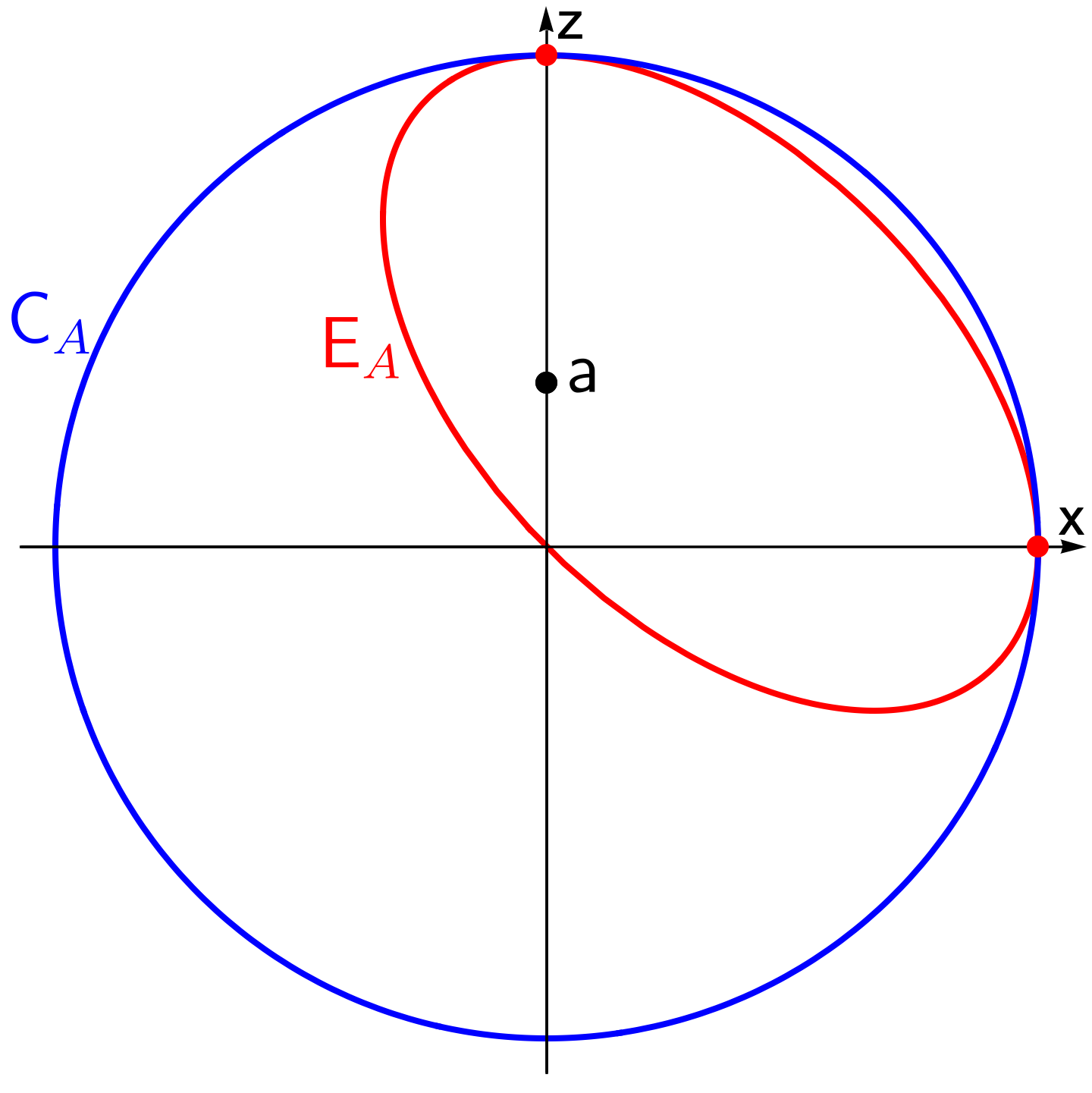}
    \hspace{0.075\textwidth}
    \includegraphics[width=0.35\textwidth]{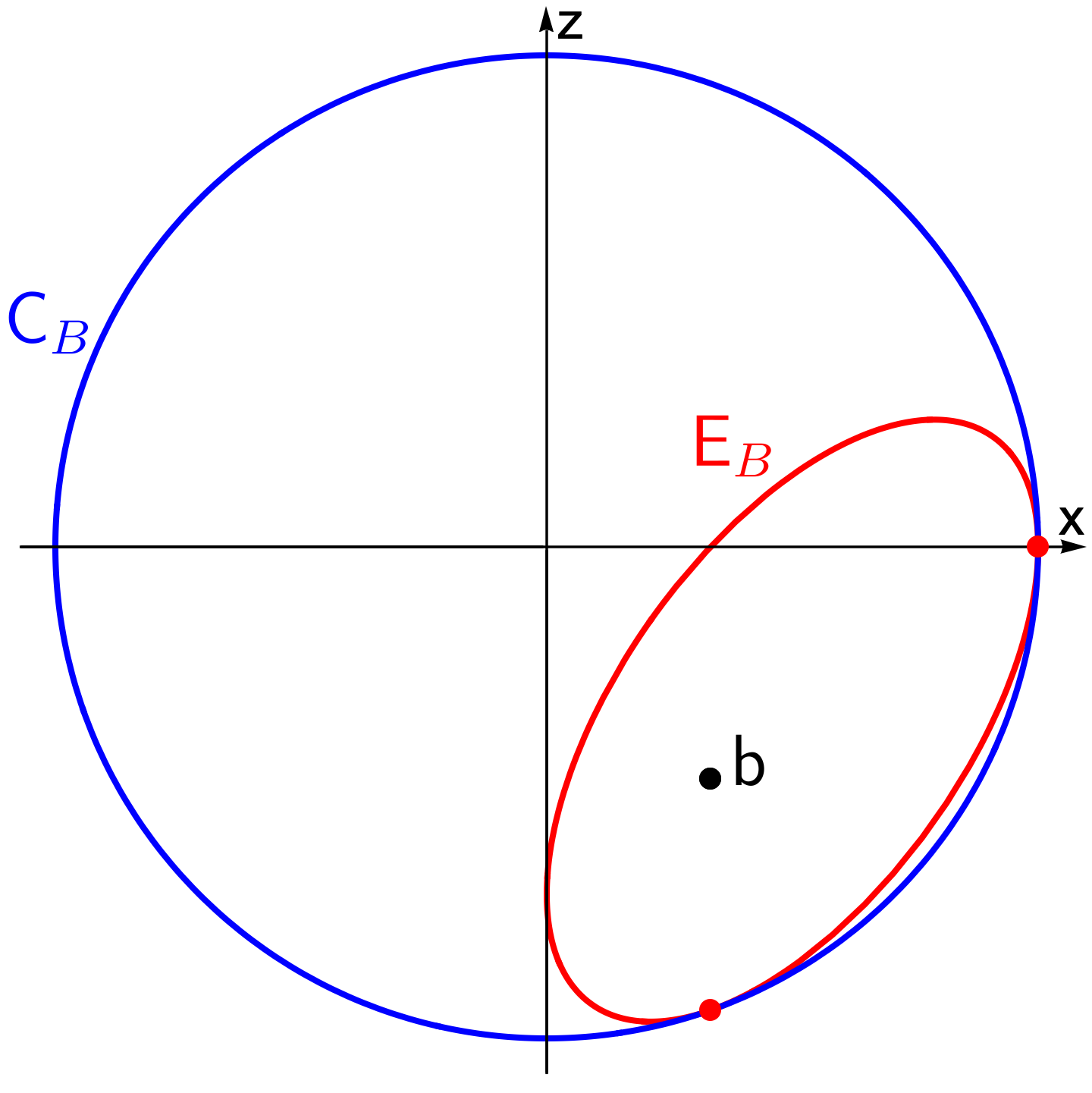}\\
    \caption{QSEs for the state $\rho_{AB}$~\eqref{exastate2p} with $x=1/2$ and $y=1/3$. Alice's QSE $\mathcal{E}_A$ is tangent to her Bloch sphere $\mathcal{S}_A$ at $|0\rangle$ and $(|0\rangle+|1\rangle)/\sqrt{2}$, while Bob's QSE $\mathcal{E}_B$ is tangent to his Bloch sphere $\mathcal{S}_B$ at $(|0\rangle+|1\rangle)/\sqrt{2}$ and $\big[(\sqrt{2}-1)|0\rangle+(\sqrt{2}+1)|1\rangle\big]/\sqrt{6}$. Points $\mathsf{a}$ and $\mathsf{b}$ denote the Bloch vectors of Alice and Bob, respectively.}
    \label{figexap4}
\end{figure*}

The first example is given by
\begin{align}\label{exastate2p}
\rho_{AB}=\left(
\begin{array}{cccc}
 \frac{1}{2} \left(\sqrt{2} X-1\right) (y-1) & \frac{1}{2 \sqrt{2}}(2 x-1) (1-y) & \frac{1}{2} X Y & \frac{1}{2} (1-x) Y \\
 \frac{1}{2 \sqrt{2}}(2 x-1) (1-y) & \frac{1}{2} \left(\sqrt{2} X+1\right) (1-y) & -\frac{1}{2} x Y & -\frac{1}{2} X Y \\
 \frac{1}{2} X Y & -\frac{1}{2} x Y & x y & X y \\
 \frac{1}{2} (1-x) Y & -\frac{1}{2} X Y & X y & (1-x) y \\
\end{array}
\right)
\end{align}
where 
\begin{align}
    X:=\sqrt{x(1-x)},\quad
    Y:=\sqrt{y(1-y)}.
\end{align}
The two pure steered states on system $A$ read
\begin{align}
    |\alpha\rangle_A = |0\rangle, \quad 
     |\alpha^{\prime}\rangle_{A} = \frac{\sqrt{1-y }}{\sqrt{1+y}}|0\rangle + \frac{\sqrt{2y}}{\sqrt{1+y}}|1\rangle.
\end{align}
The two pure steered states on system $B$ read
\begin{align}
    |\beta\rangle_B = \frac{1}{\sqrt{2}}(|0\rangle + |1\rangle), \quad 
     |\beta^{\prime}\rangle_B = \frac{\sqrt{2}-1}{\sqrt{6}}|0\rangle + \frac{\sqrt{2}+1}{\sqrt{6}}|1\rangle.
\end{align}
On both Alice's and Bob's sides, the two pure steered states cannot be obtained by a single projective measurement. An example is shown in Fig.~\ref{figexap4} for the state $\rho_{AB}$~\eqref{exastate2p} with $x=1/2$ and $y=1/3$.

\vspace{1.0\baselineskip}

The second example is defined as
\begin{align}\label{exastate2p2}
\rho_{AB}=\left(
\begin{array}{cccc}
 2K\left(1+x-\sqrt{6} X\right) & K\left(2 \sqrt{6} x+2 X-\sqrt{6}\right)  &  \left(2 - 4 x - \sqrt{6} X\right) Y &  \left(\sqrt{6} x-4 X\right) Y \\
 K\left(2 \sqrt{6} x+2 X-\sqrt{6}\right) & 2K \left(2-x+\sqrt{6} X\right) &  \left(\sqrt{6} x-4 X-\sqrt{6}\right) Y &  \left(4 x+\sqrt{6} X-2\right) Y \\
 \left(2 - 4 x - \sqrt{6} X\right) Y & \left(\sqrt{6} x-4 X-\sqrt{6}\right) Y & \frac{1}{3} (1+x) y & \frac{1}{3} X y \\
  \left(\sqrt{6} x-4 X\right) Y &  \left(4 x+\sqrt{6} X-2\right) Y & \frac{1}{3} X y & \frac{1}{3} (2-x) y \\
\end{array}
\right)
\end{align}
where
\begin{align}   
    X:=\sqrt{x(1-x)}, \quad 
    Y:=\frac{1}{6}\sqrt{y(1-y)},\quad
    K:=\frac{1-y}{6}.
\end{align}
Bob can steer system $A$ to exactly two distinct pure steered states
\begin{align}
    |\alpha\rangle_A = \sqrt{1-y}|0\rangle + \sqrt{y} |1\rangle, \quad 
     |\alpha^{\prime}\rangle_{A} = -\frac{\sqrt{1-y }}{\sqrt{1+3y}}|0\rangle + \frac{2\sqrt{y }}{\sqrt{1+3y}}|1\rangle.
\end{align}
The two pure steered states on system $B$ are uniquely given by
\begin{align}
    |\beta\rangle_B = \sqrt{x}|0\rangle + \sqrt{1-x} |1\rangle, \quad 
     |\beta^{\prime}\rangle_B = \frac{\sqrt{3(1-x)}-\sqrt{2x}}{\sqrt{5}}|0\rangle - \frac{\sqrt{2(1-x)}+\sqrt{3x}}{\sqrt{5}}|1\rangle.
\end{align}
On both Alice's and Bob's sides, the two pure steered states cannot be obtained by a single projective measurement. This feature is illustrated in Fig.~\ref{figexap5} for the state $\rho_{AB}$~\eqref{exastate2p2} with $x=2/3$ and $y=1/4$.

\begin{figure*}[t]
    \centering
    \includegraphics[width=0.35\textwidth]{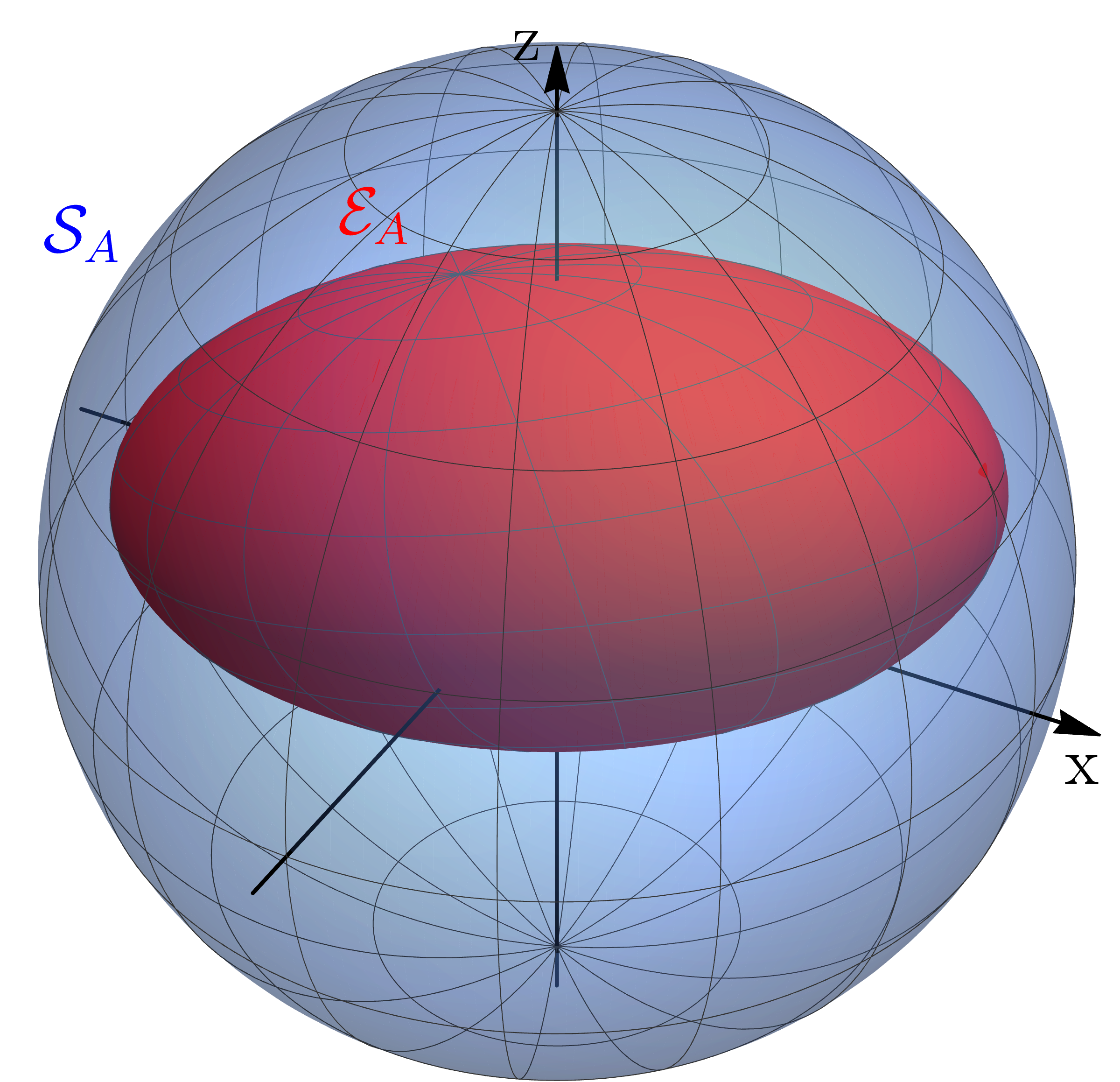}
    \hspace{0.075\textwidth}
    \includegraphics[width=0.35\textwidth]{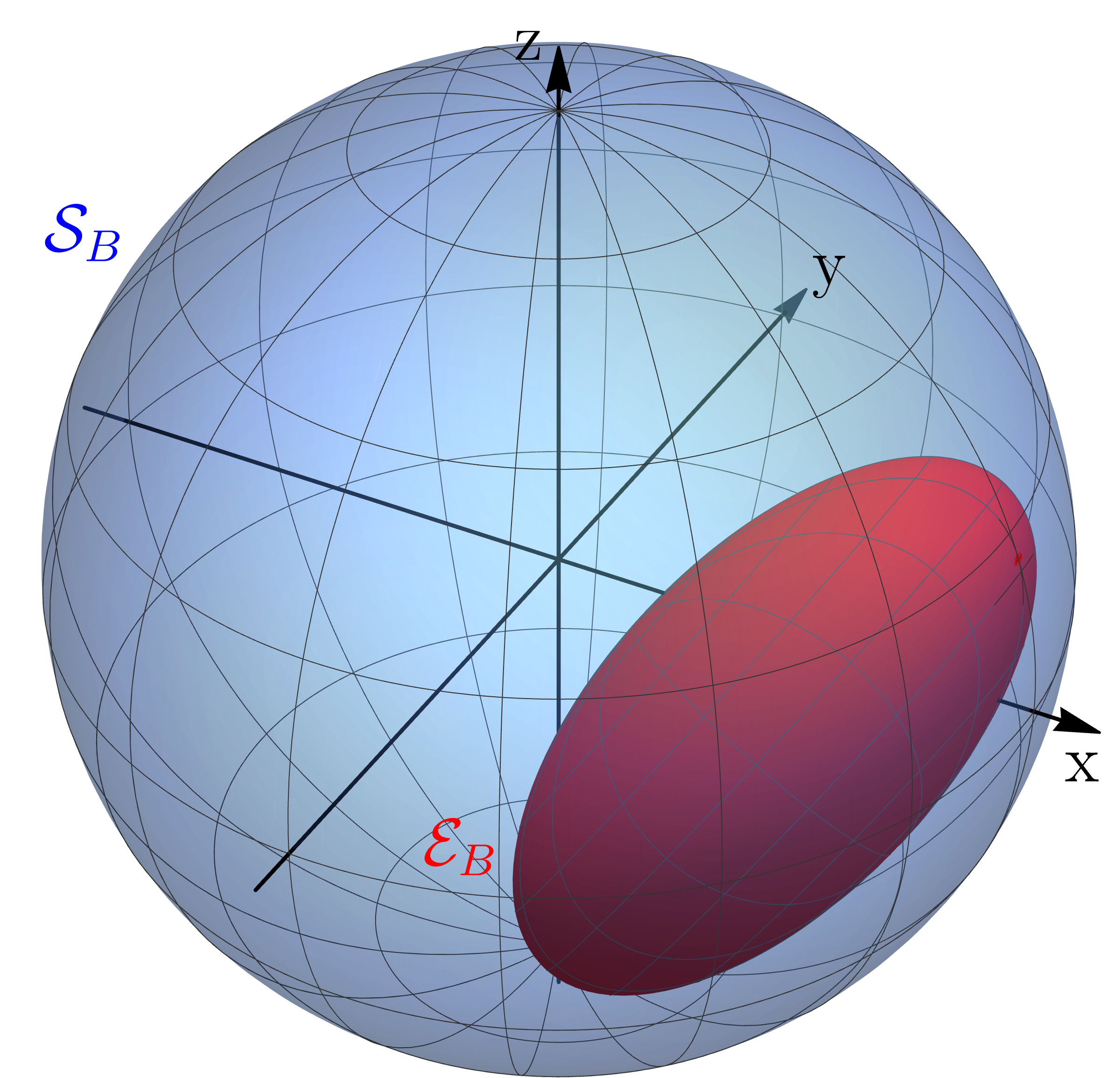}\\
    \vspace{0.075\textwidth}
    \includegraphics[width=0.35\textwidth]{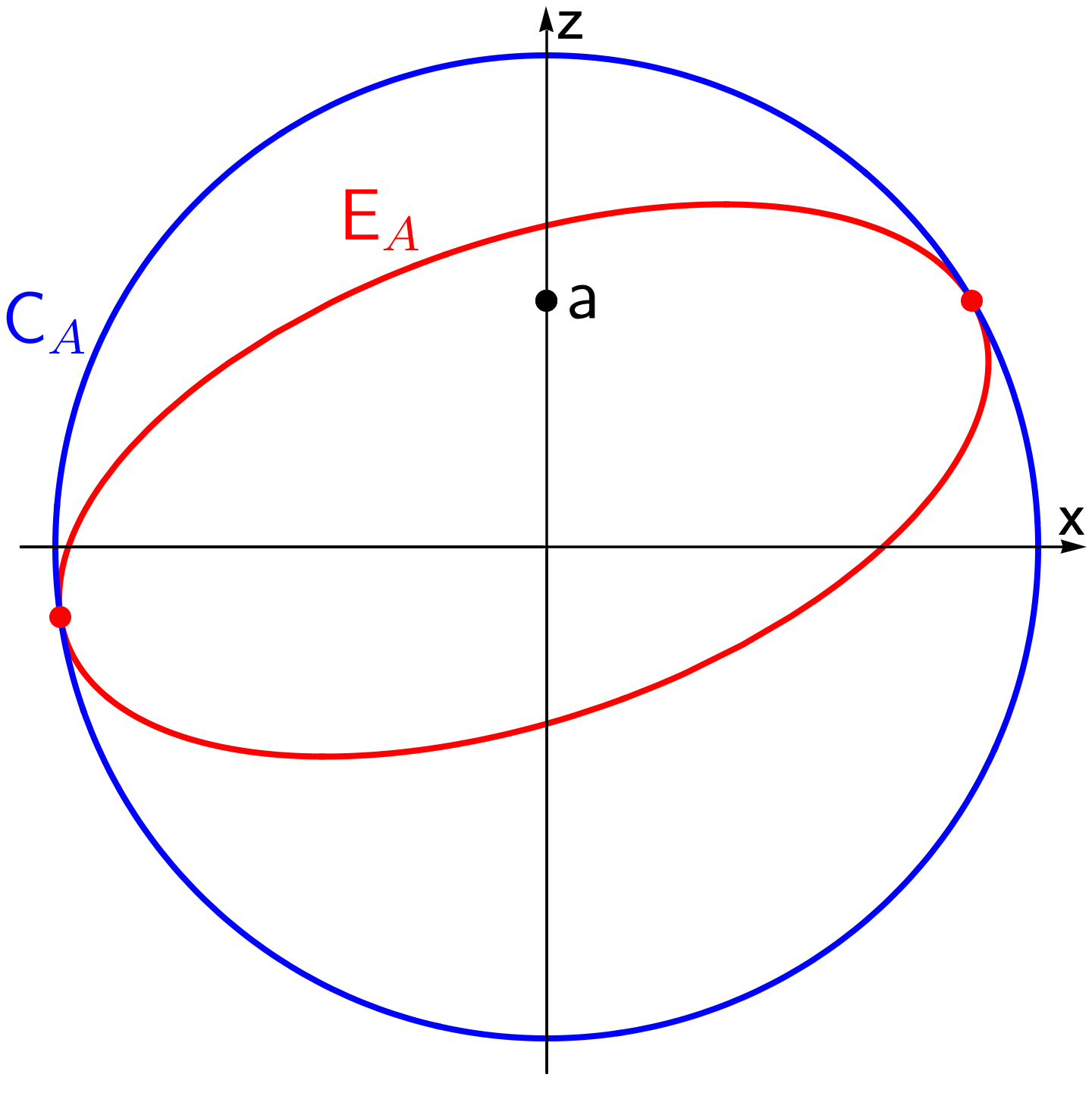}
    \hspace{0.075\textwidth}
    \includegraphics[width=0.35\textwidth]{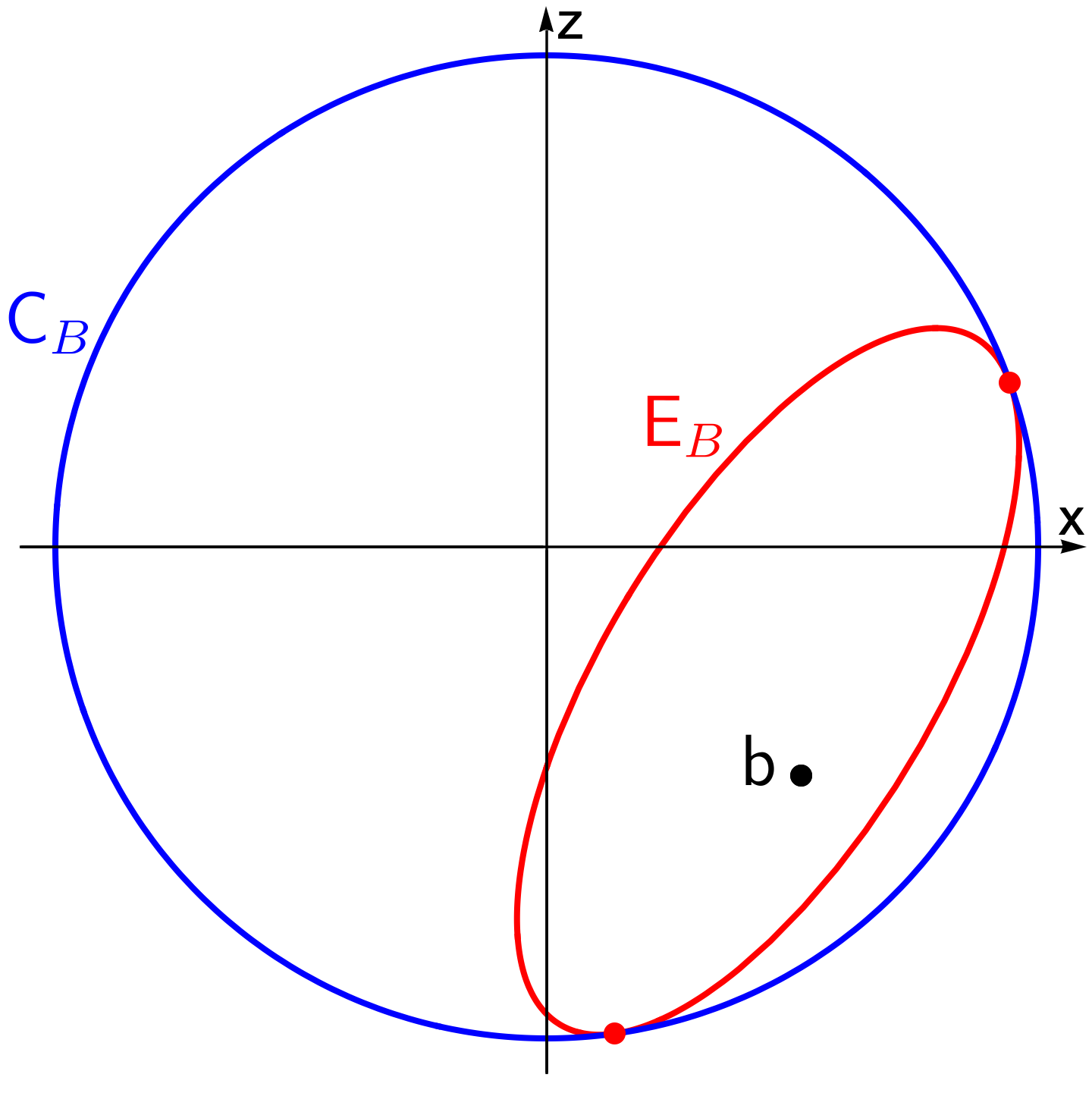}\\
    \caption{QSEs for the state $\rho_{AB}$~\eqref{exastate2p2} with $x=2/3$ and $y=1/4$. Alice's QSE $\mathcal{E}_A$ is tangent to her Bloch sphere $\mathcal{S}_A$ at $(\sqrt{3}|0\rangle + |1\rangle)/2$ and $(-\sqrt{3}|0\rangle + 2|1\rangle)/\sqrt{7}$, while Bob's QSE $\mathcal{E}_B$ is tangent to his Bloch sphere $\mathcal{S}_B$ at $(\sqrt{2}|0\rangle + |1\rangle)/\sqrt{3}$ and $\big[(\sqrt{3}-2)|0\rangle - (\sqrt{2}+\sqrt{6})|1\rangle\big]/\sqrt{15}$. Points $\mathsf{a}$ and $\mathsf{b}$ denote the Bloch vectors of Alice and Bob, respectively.}
    \label{figexap5}
\end{figure*}

Notice that the two states constructed here are both of rank-2. Recent results indicate that any rank-2 two-qubit entangled state is steerable~\cite{zhang25}. 

\end{document}